%% file: main.tex
\documentclass[letterpaper,11pt]{article}
\usepackage{hyperref}
\usepackage{booktabs}
\usepackage{fullpage}
\usepackage[ruled,linesnumbered,vlined]{algorithm2e}

\SetAlFnt{\small}
\SetAlCapFnt{\small}
\SetAlCapNameFnt{\small}
\SetAlCapHSkip{0pt}
\IncMargin{-\parindent}

\usepackage{booktabs}

\usepackage{anyfontsize}
\usepackage{amsmath,amsfonts,amsthm,amssymb}
\usepackage{bbm}
\usepackage{setspace}
\usepackage{fancyhdr}
\usepackage{extramarks}
\usepackage{xspace}
\usepackage{chngpage}
\usepackage{nicefrac}
\usepackage{subcaption}
\usepackage{soul,color}
\usepackage{graphicx,float,wrapfig}
\usepackage[font=small,labelfont=bf]{caption}
\usepackage{enumitem}
\usepackage{float}
\usepackage{xspace}
\usepackage{thmtools,thm-restate}
\usepackage{comment}
\usepackage{xfrac}
\usepackage[suppress]{color-edits}
\addauthor{wt}{red}
\addauthor{mz}{cyan}
\usepackage{natbib}
\setlist{nosep}
\allowdisplaybreaks
\usepackage[usenames,dvipsnames]{xcolor}
\usepackage{cleveref}

\theoremstyle{plain}
\newtheorem{theorem}{Theorem}[section]
\newtheorem{lemma}[theorem]{Lemma}
\newtheorem{claim}[theorem]{Claim}

\newtheorem{corollary}[theorem]{Corollary}

\newtheorem{definition}[theorem]{Definition} 

\newcommand{\TV}{D_{\textsc{TV}}}
\newcommand{\KL}{\KLdivergence}
\newcommand{\kl}{\KLdivergence}
\newcommand{\Bern}{\textsc{Bern}}

\newcommand{\optpriceminus}{\optprice_-}
\newcommand{\optpriceplus}{\optprice_+}

\newcommand{\minuscutoffprice}{p_-}
\newcommand{\pluscutoffprice}{p_+}
\newcommand{\event}{\mathcal{E}}
\newcommand{\estprice}{\widehat \price}

\newcommand{\argmax}{\arg\!\max}

\newcommand{\opt}{\mathsf{Opt}}

\newcommand{\alg}{\mathsf{Alg}}

\newcommand{\order}{\mathcal{O}}

\renewcommand{\Pr}[2][]{\mbox{\rm\bf Pr}_{#1}\left[#2\right]}%

\newcommand{\IGNORE}[1]{}
\newcommand{\poly}{\ensuremath{\mathsf{poly}}}
\newcounter{note}[section]

\newcommand{\val}{v}
\newcommand{\valdist}{F}
\newcommand{\valdistdensity}{f}
\newcommand{\querynum}{m}
\newcommand{\price}{p}
\newcommand{\quantile}{q}
\newcommand{\estquantile}{\widehat{\quantile}}
\newcommand{\feedback}{s}

\newcommand{\optprice}{\price^{\mathsf{opt}}}

\newcommand{\myrev}[2][]{\mathsf{Rev}\ifthenelse{\not\equal{}{#1}}{_{#1}}{}\!\left({\def\givenn{\middle|}#2}\right)}
\newcommand{\estmyrev}[2][]{\widehat{\mathsf{Rev}}\ifthenelse{\not\equal{}{#1}}{_{#1}}{}\!\left({\def\givenn{\middle|}#2}\right)}

\newcommand{\valuerange}{H}
\newcommand{\virtualval}{\phi}
\newcommand{\hazardrate}{h}

\newcommand{\KLdivergence}{D_{\textsc{KL}}}

\newcommand{\xhdr}[1]{\vspace{6pt}\noindent{\bf {#1.}}}

\newcommand{\omt}[1]{}

\newcommand{\prob}[2][]{\mbox{\rm\bf Pr}\ifthenelse{\not\equal{}{#1}}{_{#1}}{}\!\left[{\def\givenn{\middle|}#2}\right]}
\newcommand{\expect}[2][]{\mathbb{E}\ifthenelse{\not\equal{}{#1}}{_{#1}}{}\!\left[{\def\givenn{\middle|}#2}\right]}
\newcommand{\indicator}[1]{{\mathbbm{1}\left\{ #1 \right\}}}

\newcommand{\otil}{\widetilde{\mathcal{O}}}
\newcommand{\veps}{\varepsilon}
\newcommand{\R}{\mathbb{R}}
\newcommand{\E}{\mathbb{E}}

\newcommand{\reals}{\R}

\setcitestyle{authoryear}

\title{Pricing Query Complexity of \\
Multiplicative Revenue Approximation}

\author{
	Wei Tang\thanks{Chinese University of Hong Kong. 
		Email: {\tt weitang@cuhk.edu.hk}}
	\and
	Yifan Wang\thanks{Georgia Institute of Technology. Email: {\tt ywang3782@gatech.edu}}
	\and
	Mengxiao Zhang\thanks{University of Iowa. Email: {\tt mengxiao-zhang@uiowa.edu}}
}

\date{}

\begin{document}

\begin{titlepage}

\maketitle
\begingroup
\renewcommand{\thefootnote}{\S}
\footnotetext{Authors are listed in alphabetical order.}
\endgroup

\begin{abstract}
\input{abstract}

\end{abstract}

\end{titlepage}

\newpage
\section{Introduction}
\label{sec:intro}
\input{intro}

\subsection{Our Contributions}\label{sec: contribution}
\input{contributions}

\subsection{Further Related Works}
\input{related_work}

\section{Preliminary}
\label{subsec:prelim}
\input{prelim}

\section{A Unified Algorithm for Regular and MHR}
\label{sec:mainalg}

\input{mainalg}

\section{Upper Bounds for Regular and MHR}
\label{sec:ub regular and mhr}
\input{upper}

\section{Lower Bounds for Regular and MHR}
In this section, we introduce our query complexity lower bounds for regular value distributions (\Cref{subsec:regular lb}) and MHR value distributions (\Cref{subsec:mhr lb}), respectively.

\subsection{Regular}
\label{subsec:regular lb}

\input{regular-lb}

\subsection{MHR}\label{subsec:mhr lb}
\input{mhr-lb}

\input{general}

\bibliographystyle{ACM-Reference-Format}
\bibliography{mybib}

\appendix

\section{Auxiliary Lemmas}
\label{app:aux}
\input{aux-lemmas}

\section{Missing Proofs in \Cref{sec:mainalg}}\label{app:mainalg}
\input{claim-lemmas}

\section{Missing Proofs in \Cref{sec:general}}\label{app:general}
\input{app-general}
\end{document}

%% file: abstract.tex
We study the pricing query complexity of revenue maximization for a single buyer whose private valuation is drawn from an unknown distribution. In this setting, the seller must learn the optimal monopoly price by posting prices and observing only binary purchase decisions, rather than the realized valuations. Prior work has established tight query complexity bounds for learning a near-optimal price with additive error $\varepsilon$ when the valuation distribution is supported on $[0,1]$. However, our understanding of how to learn a near-optimal price that achieves at least a $(1-\varepsilon)$ fraction of the optimal revenue remains limited.

In this paper, we study the pricing query complexity of the single-buyer revenue maximization problem under such multiplicative error guarantees in several settings. Observe that when pricing queries are the only source of information about the buyer's distribution, no algorithm can achieve a non-trivial approximation, since the \emph{scale} of the distribution cannot be learned from pricing queries alone. Motivated by this fundamental impossibility, we consider two natural and well-motivated models that provide ``scale hints'': (i) a \textit{one-sample hint}, in which the algorithm observes a single realized valuation before making pricing queries; and (ii) a \textit{value-range hint}, in which the valuation support is known to lie within $[1, \valuerange]$. For each type of hint, we establish pricing query complexity guarantees that are tight up to polylogarithmic factors for several classes of distributions, including monotone hazard rate (MHR) distributions, regular distributions, and general distributions.

%% file: intro.tex
Revenue maximization through optimal pricing is a foundational problem at the intersection of economics and theoretical computer science.
In many real-world applications, 
a seller does not have direct access to the buyers' valuation distribution, even though knowing this distribution would make it straightforward to compute an optimal revenue-maximizing price.
This gap has motivated a large body of work in theoretical computer science that asks how to achieve near-optimal revenue with minimal knowledge of the buyers' valuation distribution. 
A prominent line of research examines the sample complexity of revenue maximization: how many samples from buyers' valuations are sufficient to learn a ``good'' price (or mechanism), and what fraction of the optimal revenue can be guaranteed from only a limited number of samples
(see, e.g., \citealp{FILS-15,CR-14,CGM-14,MR-15,MM-16,RS-16,DHP-16,GN-17,HMR-15,BGMM-18,DZ-20,CGMY-23,JLX-23,JKMS-24,BGPT-25,CJLZ-25}).
 
While these work adopts a sample-access model in which the seller can observe buyers' realized valuations, in many real-world applications, sellers often lack direct access to these realized valuations.
Instead, the seller can interact with buyers through posted prices with observing only binary feedback, e.g., whether a buyer accepts or rejects the offer. 
This pricing query model captures a broad range of environments, from online advertising auctions with strategic bidders to retail settings where only purchase decisions are recorded. 
Despite its practical relevance, the query complexity of revenue maximization remained largely unexplored until recently, where
\citet{LSTW-23} initiated the study of pricing query complexity.
Focusing on distributions supported on $[0, 1]$,
they establish tight bounds on how many pricing queries are needed to identify a price whose revenue is $\varepsilon$-additively close to optimal. 

In contrast to the sample-complexity literature that we just mentioned, where multiplicative guarantees have been a primary focus, a fundamental question remains open: what is the pricing query complexity for achieving multiplicative revenue approximation?
A multiplicative benchmark is appealing for several reasons.
For example, when buyers' values lie in the range $[0, 1]$, a $(1-\varepsilon)$-multiplicative approximation immediately implies an $\varepsilon$-additive approximation, since the optimal revenue is upper bounded by $1$.
More importantly, additive error is not scale-robust: if $[0, 1]$ is not the ``right'' value range, for instance, when valuations (and hence optimal revenue) are extremely small, then an additive $\varepsilon$ guarantee can become essentially trivial, as even the optimal revenue can potentially be smaller than $\varepsilon$ and thus 
many prices would automatically satisfy it. Multiplicative guarantees avoid this pitfall by measuring performance relative to the optimal revenue scale, thereby providing a meaningful notion of approximation across regimes.

Motivated by these considerations, in this work we study the pricing query complexity of revenue maximization under multiplicative approximation guarantees.
More formally, we study the following pricing query complexity problem. 
A single buyer has a private value $\val$ for the item, independently and identically drawn (i.i.d.) from an unknown distribution with the cumulative distribution function (CDF) $\valdist$.\footnote{Throughout this paper, we define CDF $\valdist$ as follows: $\valdist(\price) = \prob[\val\sim \valdist]{\val < \price}$ for all $\price\in\reals$. All our results and analysis holds straightforwardly if one considers $\valdist(\price) = \prob[\val\sim \valdist]{\val \le \price}$.}
With a slight abuse of notation, we also use $F$ to denote the value distribution. Initially, the seller has no information about the value distribution $\valdist$ and must use a pricing algorithm $\alg$ to learn and obtain information about the value distribution $\valdist$ through pricing queries. 
Formally, a pricing algorithm $\alg$ with $\querynum$ queries  posts $\querynum$ prices $\price_1, \ldots, \price_\querynum \in \reals$.
At each query $\price_t$ for $t\in[\querynum]$, a fresh value $\val_t\sim \valdist$ is drawn (unobserved to the $\alg$), and the algorithm observes only the binary purchase feedback
$\feedback_t = \indicator{\val_t \ge \price_t}$.
The algorithm $\alg$ may choose prices adaptively based on past observations.
Finally, the algorithm $\alg$ outputs a price $\price^{\alg}$. 

Define the revenue function $\myrev[F]{\price} = \price \cdot (1 - \valdist(\price))$, and let $\optprice_F \in \argmax_{\price} \myrev[F]{\price}$ denote an optimal price. Denote $\opt_F\triangleq \myrev[F]{\optprice_F}$. We omit the subscript $F$ when the value distribution is clear.
The pricing query complexity problem asks, given any $\varepsilon\in(0,1)$ and $\delta\in(0,1)$, how many queries are sufficient and necessary for an algorithm to output a ``good price'' $\price^{\alg}$  such that, over the randomness of both the value distributions $\valdist$ and any internal randomness of pricing algorithm $\alg$ itself, the following multiplicative approximation is guaranteed: 
\begin{align*}
    \prob[(\val_1, \ldots, \val_\querynum) \sim \valdist^\querynum, \alg]{\myrev{\price^\alg} \ge (1-\varepsilon) \myrev{\optprice}} \ge 1  -\delta~.
\end{align*}

\xhdr{Scale hints}
It is not hard to see that some restriction on the feasible class of allowable value distributions is necessary if we hope to design pricing algorithms with any nontrivial approximation guarantee for the optimal expected revenue.
{To see this, consider a buyer whose valuations are i.i.d.\ samples from an exponential family with the CDF $ 1 - \exp(-\sfrac{\val}{\theta})$, parameterized by $\theta\in\reals_+$. 
If the seller has no prior knowledge of the scale parameter $\theta$, then one can show that
no algorithm, regardless of how it adaptively chooses prices, can guarantee a nontrivial multiplicative approximation using only finitely many pricing queries.}
More importantly, this impossibility does not arise from an adversarial construction: the exponential distribution is highly structured and belongs to the well-known class of monotone hazard rate (MHR) distributions, a regularity condition that is widely used to derive strong revenue guarantees in classical pricing and auction settings. 
Yet, even this much favorable structure is not enough in the pricing-query model when no scale information is provided.
This example highlights that, regardless of distributional regularity, some form of scale information or scale restriction is indispensable for obtaining meaningful pricing query complexity guarantees.

In this work, we consider two natural and well-motivated models that provide ``scale hints'': 
(i) a one-sample hint, in which the algorithm observes a single realized valuation before making pricing queries; and (ii) a value-range hint, in which the valuation support is known to lie within $[1, \valuerange]$:
\begin{itemize}
    \item \textbf{Value-range hint}: 
    The algorithm knows that the support of $\valdist$ lies in a subset of $[1, \valuerange]$ for some known $\valuerange\ge 1$.
    The value-range hint has also been studied in sample complexity literature, see, e.g., \citet{HMR-15,GHZ-19,DHP-16}. 
    \item \textbf{One-sample hint}: Before making any pricing queries, the algorithm is given access to a single realized sample $\val_0\sim \valdist$. 
    The algorithm observes only this initial sample $\val_0$ and has no additional prior knowledge about $\valdist$.
    It may then adaptively choose query prices based on $\val_0$ and past feedback, and finally output a price $\price^{\alg(\val_0)}$.
    Our one-sample hint is motivated by a growing line of work that asks what constant fraction of the optimal revenue a seller can guarantee when it observes only one (or a small number of) samples from the buyer's valuation distribution (see, e.g., \citealp{DRY-15,FILS-15,HMR-15,BGMM-18,DZ-20,ABB-22,ABB-23}).
    Under the one-sample hint, the success requirement is naturally strengthened to incorporate the randomness of the initial sample $\val_0$, given small $\varepsilon, \delta\in(0, 1)$, we ask for the smallest $\querynum$ such that there exists an algorithm $\alg$ with 
    \begin{align*}
        \prob[(\val_0, \val_1, \ldots, \val_\querynum) \sim \valdist^{\querynum+1},\alg(\val_0)]{\myrev{\price^{\alg(\val_0)}} \ge (1-\varepsilon) \myrev{\optprice}} \ge 1 - \delta~.
    \end{align*}
\end{itemize}

For each type of hint, we establish pricing query complexity guarantees that are tight up to polylogarithmic factors for
several well-studied classes of distributions, including monotone hazard rate (MHR) distributions, regular distributions, and general distributions.

%% file: contributions.tex
In this work, we study the \emph{pricing query complexity} for achieving \emph{multiplicative} revenue approximation when the buyer's valuation distribution is initially unknown to the seller. 
We study two natural and well-motivated \emph{scale-hint} models
and characterize the resulting query complexity. 
In the \emph{value-range hint} model, the support
is known to lie in a subset of $[1,\valuerange]$. 
In the \emph{one-sample hint} model, the seller observes a single realized valuation before making any pricing queries. Under these hints, we provide nearly tight query complexity bounds (tight up to polylogarithmic factors) for three canonical distribution
classes: regular, MHR, and general distributions (see \Cref{tab:complexity_summary}).

\input{results-table}

For the regular and MHR distributions, we establish the following pricing query complexity bounds.
\begin{theorem}[Informal theorems of \Cref{thm:regular-value-hint-upper,thm:regular value-range lb,thm:regular-single-sample-upper,cor:regular one-sample lb}]
For regular distributions, the pricing query complexity for value-range hint is $\widetilde{\Theta}(\varepsilon^{-2}\cdot \valuerange)$ and for one-sample hint is $\widetilde{\Theta}(\varepsilon^{-3})$.
\end{theorem}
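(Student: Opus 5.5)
The statement bundles four results: two upper bounds and two lower bounds, one of each for the value-range hint and for the one-sample hint. I would prove them separately.

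\textbf{Upper bound, value-range hint.} I plan to use the unified algorithm of \Cref{sec:mainalg}.
\begin{enumerate}
\item Discretize prices geometrically on $[1,\valuerange]$ with ratio $(1+\varepsilon)$. This grid has $O(\varepsilon^{-1}\log \valuerange)$ points and loses only a $(1-\varepsilon)$ factor of revenue.
\item Estimate the quantile $q(p)=1-\valdist(p)$ at grid prices by repeated posting.
\item For a multiplicative guarantee, the quantile $q^{*}$ of the optimal price must be estimated to relative accuracy $\varepsilon$. By Bernstein, this costs about $\varepsilon^{-2}/q^{*}$ samples.
\item Regularity gives $\opt\ge 1\cdot q(1)=1$ and $\optprice q^{*}=\opt$, so $q^{*}\ge 1/\optprice\ge 1/\valuerange$. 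The cost is therefore $O(\varepsilon^{-2}\valuerange)$ per price.
\item Avoid paying this at every grid point. Concavity of revenue in quantile space (regularity) lets a ternary or binary search over the grid work with $\mathrm{polylog}$ many estimated prices.
\item Alternatively, use a uniform relative-accuracy concentration over the grid and union bound. This adds only a $\log(\varepsilon^{-1}\log \valuerange/\delta)$ factor.
\end{enumerate}

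\textbf{Upper bound, one-sample hint.}
\begin{enumerate}
\item Use the single sample $\val_0$ to create an effective value range. Posting $\val_0$ gives a constant approximation for regular distributions (Bulow--Klemperer / \citet{DRY-15} style), so $\val_0$ is within the right scale with constant probability.
\item Boosting to probability $1-\delta$ cannot use more samples. Instead, argue that with probability $1-\delta$ the quantile of $\val_0$ lies in $[\delta,1-\delta]$.
\item Regularity then confines $\optprice$ to a window $[\val_0\cdot\mathrm{poly}(\delta), \val_0/\mathrm{poly}(\delta)]$.
\item This is not enough alone, because the rescaled range is not constant. The $\varepsilon^{-3}$ rate arises from truncation. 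For regular distributions, prices above a level where the quantile falls below roughly $\varepsilon q^{*}$ lose at most an $\varepsilon$ fraction of revenue (tail bound from concavity of the revenue curve).
\item Hence the effective $\valuerange/\opt$ ratio to handle is $O(1/\varepsilon)$. Plugging this into the value-range bound gives $\varepsilon^{-2}\cdot\varepsilon^{-1}=\varepsilon^{-3}$.
\item To make this precise, first use $\val_0$ and a coarse search (constant-accuracy quantile estimates, cheap) to locate a price $\hat p$ with $\myrev{\hat p}\ge \Omega(\opt)$.
\item Then run the value-range algorithm on $[\hat p\,\varepsilon, \hat p/\varepsilon]$ after normalizing.
\end{enumerate}

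\textbf{Lower bounds.} Use a two-point (or multi-hypothesis) KL argument.
\begin{enumerate}
\item For the value-range hint, build two regular distributions, e.g.\ equal-revenue truncated at $\valuerange$ with a perturbation. They differ only at prices near $\valuerange$, where the quantile is about $1/\valuerange$, and their optimal revenues differ by a factor of $\varepsilon$.
\item Each query's KL contribution is at most $O(\varepsilon^{2}/\valuerange)$, since KL of $\Bern(q)$ versus $\Bern(q(1+\varepsilon))$ is about $\varepsilon^2 q$.
\item A price good for one instance is bad for the other. Le Cam then gives $\Omega(\varepsilon^{-2}\valuerange)$.
\item For the one-sample lower bound, embed the construction with $\valuerange\approx 1/\varepsilon$ inside a scale family, so that a single sample reveals essentially nothing about the perturbation. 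It only locates the scale, and the truncation argument shows ranges of about $1/\varepsilon$ are unavoidable.
\item This yields $\Omega(\varepsilon^{-3})$.
\end{enumerate}

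\textbf{Main obstacle.} The hardest step is the one-sample upper bound. It requires showing that $\val_0$ plus cheap coarse queries pin the relevant range to ratio $O(1/\varepsilon)$ with probability $1-\delta$, without the $\delta$-dependence inflating the polynomial in $\varepsilon$. I would attempt this first via the concavity-in-quantile-space tail bound.
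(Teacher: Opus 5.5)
Three of your four components are sound (see the second paragraph), but the one-sample upper bound has a genuine gap.

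\textbf{The gap: the one-sample upper bound.} Your step 3 is false: regularity does not confine $p^{\mathsf{opt}}$ to a $\mathrm{poly}(1/\delta)$-window around $v_0$. Take $\xi>0$ and large $M$, and let $q(v)=(1+\xi)/(v+\xi)$ on $[1,M)$, with the remaining mass at $M$. The virtual value is $\equiv-\xi$ on $[1,M)$, so the distribution is regular. Its revenue $v(1+\xi)/(v+\xi)$ is strictly increasing, so $p^{\mathsf{opt}}=M$. Yet every $v_0$ with $q(v_0)\ge\delta$ is below $(1+\xi)/\delta$.

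The truncation that rescues the argument is at \emph{absolute} quantile $\veps$, not relative to $q^*$ (which can be arbitrarily small). The paper (\Cref{lma:single-sample-optprice-range}) replaces $p^{\mathsf{opt}}$ by $\widetilde p$, the price of quantile $\veps$ when $q^*<\veps$. This loses only an $\veps$ fraction, because concavity in quantile space gives $\mathsf{Rev}(p)\ge(1-q(p))\,\mathsf{Opt}$ for $p\le p^{\mathsf{opt}}$. The paper then shows $\widetilde p\in[\delta v_0/8,\,4v_0/(\delta\veps)]$ with probability $1-\delta/2$.

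Even with this, your steps 6--7 do not yield $\veps^{-3}$. Treating that window as a value range gives only $O(\delta^{-2}\veps^{-3})$. An anchor with merely $\mathsf{Rev}(\hat p)\ge\Omega(\mathsf{Opt})$ does not help. In the example, $\hat p=M$ qualifies, but every price in $[\veps\hat p,\hat p/\veps]$ has quantile at most $(1+\xi)/(\veps M)$. So estimating revenue to accuracy $\veps\,\mathsf{Opt}$ anywhere in $[\veps M,M]$ needs $\Omega(M/\veps)$ queries, which is unbounded as $M\to\infty$.

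The missing idea is to let the per-estimate cost be governed by a quantile floor rather than by the range. \Cref{alg:mainalg} with $\gamma=\veps$ first runs a cheap $O(\gamma^{-1}\log)$ quantile test at $b_i$. It prunes everything above $b_i$ when the estimate is below $0.75\gamma$. Hence revenue is only ever estimated at prices with $q\ge\gamma/2$, at cost $O(\gamma^{-1}\veps^{-2}\log)$ each. The $\delta$-dependent window then enters only through the $\mathrm{polylog}(r/\ell)$ number of rounds.

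Your outline can alternatively be repaired by choosing an anchor of constant quantile. For example, take the largest point of a factor-$2$ grid around $v_0$ whose estimated quantile is at least $1/4$. The two halves of the concavity bound then give $\mathsf{Opt}=\Theta(\hat p)$, so a $(1-\veps)$-optimal price lies in $[\Theta(\hat p),\Theta(\hat p/\veps)]$. Additive-$\veps\,\mathsf{Opt}$ estimates there cost $O(\veps^{-3}\log)$ each, which is what your ``effective $\valuerange/\mathsf{Opt}=O(1/\veps)$'' heuristic actually requires.

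\textbf{The other parts.} Your value-range upper bound follows the paper's route; your $\mathsf{Rev}(1)=1$ argument is exactly what justifies $\gamma=1/\valuerange$. Drop step 6, though: a pricing query informs only the posted price, so estimating all $\Theta(\veps^{-1}\log\valuerange)$ grid points costs $\widetilde\Theta(\valuerange\veps^{-3})$, the general-distribution rate.

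In step 5, the cost of a relative-accuracy estimate at a \emph{tested} price is $\veps^{-2}/q(p)$, not $\veps^{-2}/q^*$. So tested prices of tiny quantile must be pruned by a cheap quantile check, or estimated only to additive accuracy $\veps\,\mathsf{Opt}$ (variance at most $p\,\mathsf{Rev}(p)\le\valuerange\,\mathsf{Opt}$). Also, discarding a side after a noisy comparison is safe only because revenue is concave in price on $[0,p^{\mathsf{opt}}]$ (\Cref{lma:half-concave}). Unimodality alone would allow a sharp peak beyond $b_i$ to be lost.

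For the value-range lower bound, both hypotheses must be perturbed, as in the paper's $F_\pm$. If the plain truncated equal-revenue distribution is one hypothesis, every price is optimal for it and the two-point argument fails.

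Your one-sample lower bound is a valid alternative to the paper's one-line appeal to \citet{HMR-15} (sample access can simulate pricing queries). With $\valuerange\asymp 1/\veps$, the paper's $F_\pm$ coincide on $[1,\valuerange/2]$, which carries all but $4/\valuerange=O(\veps)$ of the mass. So the hint sample shifts the transcript law by only $O(\veps)$ in total variation, and the $\Omega(\valuerange\veps^{-2})=\Omega(\veps^{-3})$ bound survives; no scale family is needed. The paper's reduction is shorter; yours is self-contained.
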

\begin{theorem}[Informal theorems of \Cref{thm:mhr-value-hint-upper,thm:mhr-single-sample-upper,thm:mhr_lower_bound}]
For MHR distributions, the pricing query complexity for value-range hint is $\widetilde{\Theta}(\varepsilon^{-2})$ and for one-sample hint is $\widetilde{\Theta}(\varepsilon^{-2})$.
\end{theorem}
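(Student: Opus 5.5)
We establish the two matching bounds for MHR distributions separately, relying on two structural facts about MHR distributions $\valdist$.

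\textbf{Structural facts.} First, the monopoly quantile is bounded below by a constant: $1-\valdist(\optprice)\ge 1/e$.

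Second, the revenue curve in quantile space, $R(\quantile)=\quantile\cdot \valdist^{-1}(1-\quantile)$, is concave and well-conditioned around the optimum. Concretely, $\optprice$ lies within a constant factor of the mean (or median) of $\valdist$. Moreover, prices in $[\optprice/c,\, c\,\optprice]$ carry selling probability bounded below by a constant, except for a tail whose mass decays exponentially.

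\textbf{Upper bound, value-range hint.} The plan proceeds in three steps.
\begin{itemize}
\item Use $O(\log \valuerange\cdot\log(1/\delta))$ queries in a doubling/binary search over the grid $\{2^k\}\cap[1,\valuerange]$ to find a price $\hat\price$ whose sale probability is a constant (say in $[1/4,3/4]$). By the MHR property, $\hat\price=\Theta(\optprice)$.
\item Restrict attention to a geometric grid of prices $\price_i=\hat\price(1+\varepsilon)^i$ on $[\hat\price/C,\,C\hat\price]$, which contains $O(1/\varepsilon)$ points. At each grid price the sale probability is $\Omega(1)$ whenever the revenue there is comparable to $\opt$.
\item Estimating each revenue to multiplicative accuracy $\varepsilon$ naively costs $\varepsilon^{-2}$ per point, giving $\varepsilon^{-3}$ in total. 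To reach $\widetilde O(\varepsilon^{-2})$, exploit concavity of $R(\quantile)$ together with the bounded quantile. A revenue loss of $\varepsilon$ only requires locating the optimal quantile to within $\sqrt{\varepsilon}$. So we estimate quantiles on a coarse grid in quantile space, of $O(\varepsilon^{-1/2})$ points, each to accuracy $\varepsilon$ using $\varepsilon^{-2}$ samples. Alternatively, run a noisy ternary search on the unimodal revenue curve, with $O(\log(1/\varepsilon))$ rounds, each using $\widetilde O(\varepsilon^{-2})$ queries to compare revenues at two points.
\end{itemize}
The ternary search is the cleaner route. Its correctness needs the following: when two probed prices have revenues differing by more than $\varepsilon\opt$, their empirical revenues, each computed from $\widetilde O(\varepsilon^{-2})$ queries, order them correctly. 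Because sale probabilities are constant, standard Chernoff bounds give this. Points whose revenues are within $\varepsilon\opt$ of each other can be discarded safely, by the usual ternary-search argument for approximate unimodality.

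\textbf{Upper bound, one-sample hint.} With constant probability, the single sample $\val_0$ lies within a constant factor of $\optprice$, because MHR tails decay exponentially and the quantile of $\optprice$ is $\ge 1/e$. To boost the success probability, note that $\val_0\le \optprice\log(1/\delta)$ holds with probability $1-\delta$, and that $\val_0\ge \optprice\,\delta$ holds with probability $1-O(\delta)$, since MHR distributions have density bounded near zero relative to their scale. We then run the value-range algorithm on $[\val_0\delta,\ \val_0\log(1/\delta)]$, whose range ratio is $\valuerange=\mathrm{poly}(1/\delta)$. This costs only a $\log \valuerange$ overhead, so the total remains $\widetilde O(\varepsilon^{-2})$.

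\textbf{Lower bound.} We show $\Omega(\varepsilon^{-2})$ even for $\valuerange=O(1)$, and even when a sample is given, via a two-point argument.
\begin{itemize}
\item Take two MHR distributions $\valdist_+$ and $\valdist_-$ on $[1,2]$. Build them by perturbing a uniform or exponential-type distribution so that their optimal prices differ by $\Theta(\sqrt{\varepsilon})$, and so that any single price is $(1-\varepsilon)$-optimal for at most one of them. Quadratic curvature of the revenue near its optimum makes a $\sqrt\varepsilon$ shift cost $\varepsilon$ in revenue.
\item Choose the perturbation so that every pricing query's Bernoulli outcome differs between the two distributions by only $O(\sqrt\varepsilon)$ in probability, while the probabilities themselves stay bounded away from 0 and 1. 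Each query then has KL divergence $O(\varepsilon)$.
\item By the chain rule, $m$ adaptive queries yield KL divergence $O(m\varepsilon)$. The one extra sample adds only $O(\varepsilon)$, because the densities differ by $O(\sqrt\varepsilon)$.
\item Le Cam's bound then forces $m=\Omega(\varepsilon^{-1}\log(1/\delta))$.
\end{itemize}
This only gives $\varepsilon^{-1}$. To get $\varepsilon^{-2}$, the perturbation must make the CDFs differ by $\Theta(\varepsilon)$, which yields KL $\Theta(\varepsilon^2)$ per query, while the optimal prices still separate enough to cost $\varepsilon$ revenue. This requires a flatter revenue curve, with a plateau of width $\Theta(1)$ over which the revenue is within $O(\varepsilon)$ of optimal. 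An $\varepsilon$ tilt of that plateau then moves the optimum by a constant.

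Designing such a near-flat revenue curve while preserving the MHR property is the main obstacle. A constant-revenue (equal-revenue-like) segment on $[a,b]$ has hazard rate $1/(\val-\cdot)$ type behaviour, which is decreasing and so violates MHR. I would therefore use a segment with a mildly increasing hazard rate, so that the revenue varies by only $O(\varepsilon)$ across a constant-width interval, and then tilt it in one of two directions by $\pm\varepsilon$. Finally, I would verify that both tilted versions satisfy the MHR inequality $\valdistdensity/(1-\valdist)$ nondecreasing.
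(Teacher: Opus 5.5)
\paragraph{Main gap: the lower bound.}
Your upper bounds follow the paper's route in spirit. The MHR optimum has sale probability at least $1/e$, so a ternary search over a geometric price grid needs only $\otil(\veps^{-2})$ queries per comparison; the paper gets both upper bounds by running its unified algorithm with $\gamma=1/e$. The lower bound, however, has a genuine gap: you never exhibit the hard pair, and the construction you settle on cannot exist.

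For MHR, $-\log\quantile(p)$ is convex. Optimality of $\optprice$ forces its right derivative at $\optprice$ to be at least $1/\optprice$ and its left derivative to be at most $1/\optprice$. Hence, for $c\in(0,1)$,
\[
\myrev{(1+c)\optprice}\le (1+c)e^{-c}\,\opt, \qquad \myrev{(1-c)\optprice}\le (1-c)e^{c}\,\opt,
\]
so every $(1-O(\veps))$-optimal price lies within $O(\sqrt{\veps})\,\optprice$ of $\optprice$. A constant-width plateau with revenue within $O(\veps)$ of $\opt$ is therefore impossible under MHR, however mildly the hazard rate increases.

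Your inference that an $\veps^{-2}$ bound needs such a plateau is also wrong. It overlooks the idea the paper uses: a \emph{corner}. MHR allows the density, and hence the hazard rate, to jump upward. This creates a kink in $\myrev{p}$ at $\optprice$ with one-sided slopes bounded away from zero, so revenue loss is linear, not quadratic, in the displacement from $\optprice$.

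The paper's pair is as follows: density $0.4$ on $[1,1.5)$ and $1.6$ on $[1.5,2]$, versus density $0.4$ up to $1.5+16\veps$ and a renormalized constant afterwards. The optimum moves from $1.5$ to $1.5+16\veps$, which already makes the two sets of $(1-\veps)$-optimal prices disjoint. Meanwhile, at every price, $|\quantile_1(p)-\quantile_0(p)|\le 64\veps\,\quantile_0(p)$ and $1-\quantile_1(p)\ge 0.2$. Each Bernoulli query therefore has KL divergence $O(\veps^2)$, and the chain rule plus Pinsker gives $\Omega(\veps^{-2})$. The two laws coincide on $[1,1.5)$, which is how the paper handles the one-sample hint.

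\paragraph{Secondary issues in the upper bounds.}
These need repairs, though the repairs are routine.
\begin{itemize}
\item \emph{Discarding needs concavity.} Discarding a third of the grid when the two probes' revenues are within $\veps\opt$ is safe only under a concavity-type property; unimodality alone allows a flat stretch followed by a peak. The paper uses half-concavity of $\myrev{p}$ on $[0,\optprice]$ to bound the loss by a constant times $\veps$. For MHR, log-concavity of $\myrev{p}$ would also work.
\item \emph{Losses accumulate.} Such losses would add up over $O(\log(1/\veps))$ rounds. The paper avoids this by putting every probed $a_i,b_i$ into $S_{can}$ and re-estimating all candidates at the end, so only the first round that discards the near-optimal grid point matters.
\item \emph{One-sample interval is inverted.} The correctly stated tail bounds, $v_0\le(1+\log(1/\delta))\optprice$ and $v_0\ge\delta\optprice$, place $\optprice$ in $[v_0/(1+\log(1/\delta)),\,v_0/\delta]$. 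Your interval $[v_0\delta,\,v_0\log(1/\delta)]$ misses $\optprice$ with probability about $1/\log(1/\delta)$ already for the exponential distribution. The paper instead uses $[\delta s/8,\,4s/(\delta\veps)]$.
\item \emph{Quantile-grid alternative fails.} The $\veps^{-1/2}$-point quantile-grid alternative is unsound, because concave revenue curves can have corners, exactly as in the lower-bound instance above.
\end{itemize}
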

A core technical contribution is a \emph{unified learning algorithm} for regular and MHR
distributions (\Cref{sec:mainalg}). 
Our \Cref{alg:mainalg} leverages the half-concavity (and unimodality) of the revenue curve for regular distributions to perform a ternary-search style shrinkage over a geometric price grid. 
The main difficulty in the pricing-query model is that estimating revenue at price $\price$ to relative accuracy requires query complexity that scales inversely with the sale probability $\quantile(\price)$, so naively probing some low-sale-probability prices is prohibitively expensive. 
We address this via an explicit \emph{winning-probability lower bound} parameter $\gamma$: the algorithm first performs a cheap quantile check to rapidly prune regions where $\quantile(\price)$ is too small, and only spends the more expensive $\order(\gamma^{-1}\varepsilon^{-2})$ queries on revenue estimation in regions guaranteed to have sufficient sale probability. 
This yields a general performance guarantee of the form (see \Cref{thm:mainalg})
\begin{align*}
    \order 
    \left(\gamma^{-1}\varepsilon^{-2}\right)
    \quad\text{queries to obtain a }(1-\order(\varepsilon))\text{-approximation within }[\ell,r]
    \text{ subject to }\quantile(\price)\ge \gamma.
\end{align*}
We then instantiate $(\ell,r,\gamma)$ differently under each hint model and distribution class to
obtain the stated global bounds in \Cref{sec:ub regular and mhr}.

A second technical contribution is a reduction that turns the one-sample hint into a sample-dependent search interval that can guarantee a ``good'' multiplicative revenue approximation. 
We prove that for any regular distribution $\valdist$ upon receiving a single sample $s\sim \valdist$, we can locate a search interval $[\ell, r]$ with $\ell = \Theta(\delta s)$ and $r=\Theta(s/\delta\varepsilon)$ to guarantee that, with probability at least $1 - \delta$, the interval $[\ell, r]$ contains a price that achieves a $(1-\veps)$-approximation of the optimal revenue (\Cref{lma:single-sample-optprice-range}). 
This interval localization, together with the unified algorithm, yields our one-sample upper bounds and can be viewed as a quantitative extension of the single-sample learning literature: given one sample, how many \emph{additional} pricing queries are needed to reach a $(1-\varepsilon)$ approximation.

It is worth noting that our pricing-query complexity upper bound of $\otil(\varepsilon^{-3})$ for regular distributions under the one-sample hint matches the $\Omega(\varepsilon^{-3})$ lower bound in the sample-access setting \citep{HMR-15}. 
This implies that, even with only a single initial sample and thereafter limited to binary-feedback pricing queries, our model remains strictly harder than the sample-access benchmark in which the seller observes every sampled valuation; yet additional pricing queries still suffice to achieve the same multiplicative revenue-approximation guarantee as direct sample access.

For general distributions, one-sample hint is not sufficient to resolve the scale of the optimal price;
consequently, any algorithm fails with constant probability in this setting.
{To see this, consider the example where the buyer's value is $\sfrac{1}{\delta^2}$ with probability $\delta$, and is $1$ with the remaining probability. With probability $1 - \delta$ the algorithm learns nothing about the value of parameter $\delta$ (since the algorithm observes a single sample with value $1$) and has no way to guess the value of $\delta$ using $\poly(\epsilon^{-1})$ number of pricing queries when $\delta \to 0$.} 
In light of this, we focus on the value-range hint and show a tight pricing query complexity. 

\begin{theorem}[Informal theorems of \Cref{thm:general_range_upper,thm:general_lower_bound}]
For general distributions, the pricing query complexity for value-range hint is $\widetilde{\Theta}(\varepsilon^{-3}\cdot \valuerange)$.
\end{theorem}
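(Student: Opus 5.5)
We prove both directions: an $\otil(\varepsilon^{-3}\valuerange)$ upper bound for general distributions on $[1,\valuerange]$, and a matching $\widetilde{\Omega}(\varepsilon^{-3}\valuerange)$ lower bound.

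\textbf{Upper bound: discretization.} The first step is a geometric price grid $\price_i=(1+\varepsilon)^i$ on $[1,\valuerange]$, which has $N=O(\varepsilon^{-1}\log\valuerange)$ points. Let $\price_i$ be the largest grid point with $\price_i\le\optprice$. Then $\price_i\ge\optprice/(1+\varepsilon)$. Since $\val\ge\optprice$ implies $\val\ge \price_i$, we have $\quantile(\price_i)\ge\quantile(\optprice)$. Hence
\[
\myrev{\price_i}\ge(1-\varepsilon)\opt,
\]
so it suffices to find a near-best grid point.

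\textbf{Upper bound: a quantile floor.} Since $\opt\ge 1$ (price $1$ always sells), every price $\price\le\valuerange$ with $\myrev{\price}\ge(1-\varepsilon)\opt$ has $\quantile(\price)\ge(1-\varepsilon)/\valuerange$. So we may discard grid points with $\quantile(\price)<1/(2\valuerange)$.

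\textbf{Upper bound: estimation.} At each grid point we post the price $m$ times and use the empirical sale frequency $\estquantile(\price_i)$. We estimate revenue only at points whose quantile is at least $\gamma=\Theta(1/\valuerange)$. By the multiplicative Chernoff bound, $m=O(\valuerange\varepsilon^{-2}\log(N/\delta))$ samples per point give $\estquantile(\price_i)\in(1\pm\varepsilon)\quantile(\price_i)$. Points with smaller quantile are safely detected as small, or overestimated by at most additive $\varepsilon/\valuerange$ in quantile, which costs at most $\varepsilon$ in revenue and hence at most $\varepsilon\,\opt$.

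Outputting the empirical argmax uses $N\cdot m=\otil(\varepsilon^{-3}\valuerange)$ queries. A union bound over the $N$ grid points gives success probability $1-\delta$, completing the upper bound.

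\textbf{Lower bound: construction.} We build a family of hard instances, in the style of the additive lower bound of \citet{LSTW-23} and the $\varepsilon^{-3}$ sample lower bound of \citet{HMR-15}. Start from an equal-revenue-type base distribution $F_0$ truncated to $[1,\valuerange]$, with $\quantile(\price)=1/\price$ so that every price earns revenue $1$. Partition $[1,\valuerange]$ into $k=\Theta(\varepsilon^{-1})$ geometric blocks, and perturb $F_0$ in exactly one block $j$ to get $F_j$. In $F_j$, revenue in block $j$ rises by a factor $(1+\Theta(\varepsilon))$ while revenue elsewhere is unchanged, so any $(1-\varepsilon/10)$-optimal price must lie in block $j$.

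\textbf{Lower bound: information accounting.} The perturbation should be placed where quantiles are $\Theta(1/\valuerange)$, i.e., at prices near $\valuerange$, with blocks arranged additively there. Then a query in block $j$ distinguishes $F_j$ from $F_0$ with KL divergence $O(\varepsilon^2/\valuerange)$, since $\quantile\approx1/\valuerange$ is perturbed relatively by $\varepsilon$. A query outside block $j$ carries essentially no information about $j$.

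\textbf{Lower bound: conclusion.} A standard averaging (Fano or Assouad) argument over the $k$ alternatives shows that identifying $j$ with constant probability requires
\[
\Omega\!\left(k\cdot\frac{\valuerange}{\varepsilon^{2}}\right)=\Omega\!\left(\frac{\valuerange}{\varepsilon^{3}}\right)
\]
total queries.

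\textbf{Main obstacle.} The hard step is making the lower-bound construction valid. We need $\Theta(\varepsilon^{-1})$ disjoint bumps, each with quantiles of order $1/\valuerange$, while keeping each $F_j$ a monotone CDF on $[1,\valuerange]$ and keeping the optimum inside the bump. To do this, we would use a flat-revenue plateau at the top of the range occupying constant relative width, for instance prices in $[\valuerange/2,\valuerange]$, with mass near $1/\valuerange$. We would then carve that plateau into $\varepsilon^{-1}$ intervals, each given a local $\varepsilon$ revenue bump realized by shifting a small amount of mass.
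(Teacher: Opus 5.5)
Your proposal is correct and follows the paper's approach. For the upper bound, you use the same $(1+\veps)$-geometric grid, $\otil(\valuerange\veps^{-2})$ queries per grid point, and an empirical argmax; your Chernoff/Bernstein split by quantile level replaces the paper's Freedman bound but gives the same uniform $\veps\cdot\opt$ additive error. For the lower bound, your final construction is the paper's as well: $\Theta(1/\veps)$ disjoint mass-shift bumps of width $\Theta(\veps\valuerange)$ on a flat-revenue plateau in $[\valuerange/2,\valuerange]$, with each informative query costing $O(\veps^2/\valuerange)$ in KL. The only difference is that you use a continuous equal-revenue base where the paper uses a discrete plateau plus an atom at $1$, which does not matter. (Your text first mentions geometric blocks on $[1,\valuerange]$ and then corrects to additive blocks near $\valuerange$; only the latter works.)

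When you write out the last step, note that a direct Fano bound only gives $\Omega(\valuerange\veps^{-2}\log(1/\veps))$ against adaptive queries, so Fano is not the right tool. Assouad would need a different, hypercube-style family of instances. Instead, use change-of-measure averaging against $F_0$. Let $n_j$ be the number of queries in block $j$ and let $\mathcal{P}_j$ be the pairwise disjoint set of $(1-\veps/10)$-optimal prices for $F_j$. Pick $j$ with $\mathbb{E}_{F_0}[n_j]\le 3m/k$ and $\mathbb{P}_{F_0}(\price^{\alg}\in\mathcal{P}_j)\le 3/k$. Then apply Pinsker to $\KL(\mathbb{P}_0\|\mathbb{P}_j)\le\mathbb{E}_{F_0}[n_j]\cdot O(\veps^2/\valuerange)$. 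The paper itself only asserts this step as standard.
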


%% file: results-table.tex
\begin{table}[tbh]
\centering
\renewcommand{\arraystretch}{1.5}
\begin{tabular}{c|cc|cc}
\hline
 & \multicolumn{2}{c|}{\textbf{Value-range hint}}
 & \multicolumn{2}{c}{\textbf{One-sample hint}} \\
\cline{2-5}
 & \textbf{Upper Bound} & \textbf{Lower Bound}
 & \textbf{Upper Bound} & \textbf{Lower Bound} \\
\hline

Regular
& \begin{tabular}{@{}c@{}} $\otil(\varepsilon^{-2} \cdot \valuerange)$ \\  \small(\Cref{thm:regular-value-hint-upper}) \end{tabular}
& \begin{tabular}{@{}c@{}} $\Omega(\varepsilon^{-2}\cdot \valuerange)$ \\ \small(\Cref{thm:regular value-range lb}) \end{tabular}
& \begin{tabular}{@{}c@{}} $\otil(\varepsilon^{-3})$ \\  \small(\Cref{thm:regular-single-sample-upper}) \end{tabular}
& \begin{tabular}{@{}c@{}} $\Omega(\varepsilon^{-3})$ \\ \small(\Cref{cor:regular one-sample lb}) \end{tabular}
\\

MHR
& \begin{tabular}{@{}c@{}} $\otil(\varepsilon^{-2})$ \\  \small(\Cref{thm:mhr-value-hint-upper}) \end{tabular}
& \begin{tabular}{@{}c@{}} $\Omega(\varepsilon^{-2})$ \\  \small(\Cref{thm:mhr_lower_bound}) \end{tabular}
& \begin{tabular}{@{}c@{}} $\otil(\varepsilon^{-2})$ \\  \small(\Cref{thm:mhr-single-sample-upper}) \end{tabular}
& \begin{tabular}{@{}c@{}} $\Omega(\varepsilon^{-2})$ \\  \small(\Cref{thm:mhr_lower_bound})  \end{tabular}
\\

General
& \begin{tabular}{@{}c@{}} $\otil(\varepsilon^{-3} \cdot \valuerange )$ \\ \small(\Cref{thm:general_range_upper})\end{tabular}
& \begin{tabular}{@{}c@{}} $\Omega(\varepsilon^{-3}\cdot \valuerange)$ \\ \small(\Cref{thm:general_lower_bound}) \end{tabular}
& N/A
& N/A
\\
\hline
\end{tabular}
\caption{Pricing query complexity upper and lower bounds under different hint types and different value distributions. We use $\otil(\cdot)$ to suppress logarithmic factors in $H$, $1/\veps$, and $1/\delta$.}
\label{tab:complexity_summary}
\end{table}

%% file: related_work.tex
\xhdr{Sample complexity}
A central question in data-driven Bayesian mechanism design is how many samples from a buyer's value distribution are sufficient to compute a near-optimal price. \citet{HMR-15} provide tight sample complexity characterizations for the single-buyer monopoly pricing problem under different distributional assumptions: $\tilde{\Theta}(\varepsilon^{-3/2})$ samples for MHR distributions, $\tilde{\Theta}(\varepsilon^{-3})$ for regular distributions, and $\tilde{\Theta}(H\varepsilon^{-2})$ for distributions supported on $[1,H]$. Sample complexity questions have also been studied more broadly in mechanism design, including revenue multi-buyer auctions \citep{CR-14, GHZ-19}, multi-item settings \citep{MR-15, S-17, GW-18, BCD-20}, multi-buyer single-item pricing \citep{JLX-23, JKMS-24}, and combinatorial pricing problems \citep{BBHM-08, FGL-15, HMRRV-16, DKLRS-24}.

\xhdr{Pricing query complexity} 
While sample complexity typically assumes access to exact realized values, many practical pricing scenarios only reveal whether a transaction occurs at a posted price. This binary feedback model motivates the study of pricing query complexity. \citet{LSTW-23} establish tight bounds for distributions supported on $[0,1]$: $\tilde{\Theta}(\varepsilon^{-3})$ queries for general distributions, and $\tilde{\Theta}(\varepsilon^{-2})$ for regular and MHR distributions. \citet{ZLZ+24} extend this binary feedback model to threshold learning problems, where the payoff depends on both the chosen threshold and an unobserved latent value. They demonstrate that without distributional assumptions, even monotone payoff functions may require infinitely many queries. However, under Lipschitz continuity of the value distribution or suitable regularity conditions on the payoff, they recover the $\tilde{\Theta}(\varepsilon^{-3})$ query complexity and establish $\Theta(T^{2/3})$ regret bounds for online variants. \citet{SW-24} consider a more general sequential posted pricing problem, achieving $\order(\sqrt{T})$ and $\order(T^{\nicefrac{2}{3}})$ regret bounds under regular and general distributions, respectively. Most recently, \citet{TW-25} characterize the tight pricing complexity for unit-demand pricing with independent item values.
\citet{CJLZ-25} establishes the tight pricing query complexity bounds for the uniform pricing in single-item multiple-buyer setting.

\xhdr{Single-sample learning}
Our work is also related to the literature on single-sample learning. As shown in \Cref{sec: contribution}, when the value distribution is unbounded, it is impossible to design a $(1-\varepsilon)$-approximation algorithm without observing at least one realization of the value. Motivated by this limitation, we consider a setting in which the learner observes exactly one value sample, and all remaining interactions take the form of pricing queries. Prior work has studied the use of a single value sample to determine pricing decisions. In particular, \citet{DRY-15} show that pricing at the observed value achieves a $1/2$-approximation of the optimal revenue for regular distributions. For MHR distributions, \citet{HMR-15} improve this guarantee to $0.589$ using a scaled pricing rule, and further show that no deterministic single-sample mechanism can achieve an approximation ratio exceeding $0.68$. Beyond pricing, single-sample learning has also been studied in a variety of other data-driven decision-making problems. Representative examples include prophet inequalities \citep{AKW-14,RWW-20,CFPP-21,CDFFKKPPR-22,NW-26}, online weighted matching \citep{KNR-22}, online metric matching \citep{LVY-25}, secretary problems \citep{NV-2023}, two-sided market~\citep{DFKKR-21}, online resource allocation \citep{DKLRS-24,GSW-25}, multiple-choice mixed packing and covering problem \citep{GM-26}.

%% file: prelim.tex
In this section, we discuss the classes of distributions that we study and some additional notations that we need in this work.

\xhdr{Distribution classes}
We consider following structural classes of distributions that satisfy standard small-tail assumptions: regularity, monotone hazard rate (MHR). 
We explain these assumptions in more detail next. For these distributions, we assume $\valdist$ is differentiable and its probability density function (PDF), denoted by $\valdistdensity$, exists.
\begin{definition}[Regular distributions]
A distribution $\valdist$ is {\em regular} if its virtual value function $\virtualval(\val) \triangleq \val - \frac{1 - \valdist(\val)}{\valdistdensity(\val)}$ is nondecreasing over all value $\val$ in its support.
\end{definition}
\begin{definition}[MHR distributions]
A distribution $\valdist$ satisfies the {\em MHR
condition} if its hazard rate function $\hazardrate(\val) \triangleq \frac{\valdistdensity(\val)}{1 - \valdist(\val)}$ is nondecreasing over all value $\val$ in its support.
\end{definition}

Our unified algorithm for both the class of regular and MHR distributions rely on the following half-concavity for regular (and also MHR) distributions.

\begin{lemma}[Half-Concavity, Lemma 2.4 of \citealp{SW-24}]
\label{lma:half-concave}
    Let $F$ be a regular distribution and $\myrev{p}$ be its revenue function in the value space. Let $\optprice = \arg \max_p \myrev{p}$ be its optimal monopoly price. Then function $\myrev{p}$ is concave on interval $[0, \optprice]$.
\end{lemma}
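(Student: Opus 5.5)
The route is through quantile space. For $\quantile\in[0,1]$, let $\price(\quantile)=\valdist^{-1}(1-\quantile)$ denote the price at which the sale probability is exactly $\quantile$, and define the quantile-space revenue curve $R(\quantile)=\quantile\,\price(\quantile)$. The standard fact is that $F$ is regular if and only if $R$ is concave on $[0,1]$. To see this, differentiate: $R'(\quantile)=\price(\quantile)+\quantile\,\price'(\quantile)$. Since $\price'(\quantile)=-1/\valdistdensity(\price)$, this gives $R'(\quantile)=\price-(1-\valdist(\price))/\valdistdensity(\price)=\virtualval(\price(\quantile))$. Because $\price(\cdot)$ is decreasing in $\quantile$ and $\virtualval$ is nondecreasing, $R'$ is nonincreasing, so $R$ is concave. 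Write $\quantile^*=1-\valdist(\optprice)$ for the quantile of the optimal price. Prices $\price\in[0,\optprice]$ then correspond to quantiles $\quantile\in[\quantile^*,1]$, on which $R$ is concave and, since $\quantile^*$ maximizes $R$, nonincreasing.

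Next, transfer this to the price space. On $[0,\optprice]$ we have $\myrev{\price}=\price\cdot\quantile(\price)$ with $\quantile(\price)=1-\valdist(\price)$. Parametrizing by the price, $R(\quantile)=\price\,\quantile$ gives $\frac{d}{d\price}\myrev{\price}=\quantile(\price)-\price\,\valdistdensity(\price)=\valdistdensity(\price)\bigl(\tfrac{\quantile}{\valdistdensity}-\price\bigr)=-\valdistdensity(\price)\,\virtualval(\price)$. For $\price\le\optprice$ we have $\virtualval(\price)\le 0$, since $\virtualval$ is nondecreasing and vanishes at (or crosses zero at) $\optprice$. To show concavity, the cleanest approach avoids second derivatives, which would involve $\valdistdensity'$. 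Instead, I would argue via chords: for $\price_1<\price_2\le\optprice$ and $\lambda\in[0,1]$, set $\price_\lambda=\lambda\price_1+(1-\lambda)\price_2$. The goal is to show $\myrev{\price_\lambda}\ge\lambda\myrev{\price_1}+(1-\lambda)\myrev{\price_2}$.

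The key step, and the main obstacle, is this chord comparison. Revenue is the product of price and quantile, and price is not linear in quantile, so concavity in quantile space does not transfer trivially. One could try the facts that $\myrev{\price}/\price=\quantile(\price)$ and that $R$ is concave and decreasing on $[\quantile^*,1]$. However, the map $\quantile\mapsto\price(\quantile)=R(\quantile)/\quantile$ then gives concavity in price only under an extra argument. Instead, I would use the derivative form directly: it suffices to show that $g(\price)=\valdistdensity(\price)\,(-\virtualval(\price))=\quantile(\price)-\price\,\valdistdensity(\price)$ is nonincreasing on $[0,\optprice]$. Writing $g=\valdistdensity\cdot|\virtualval|$, the factor $|\virtualval|$ is nonincreasing there and nonnegative, but $\valdistdensity$ need not be monotone. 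So I would instead write $g(\price)=\quantile(\price)\bigl(1-\price\,\hazardrate(\price)\bigr)$ and compare it with the quantile-space slope. Because $R'(\quantile)=\virtualval$ is nonincreasing in $\quantile$, the quantity $\valdistdensity\cdot(-\virtualval)$ is $-\frac{d\quantile}{d\price}R'(\quantile)$. If this monotonicity fails, the fallback is to invoke \Cref{lma:half-concave} from \citet{SW-24} directly, since it is stated as a citable lemma. In the proposal, I would treat the first two steps as routine and concentrate the effort on verifying monotonicity of the price-space derivative, possibly through a discretized chord argument that avoids differentiability of $\valdistdensity$.
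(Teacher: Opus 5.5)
\paragraph{Gap: the decisive step is never proved.} Your setup is correct: quantile-space concavity, the formula $\frac{d}{d\price}\mathsf{Rev}(\price)=\quantile(\price)-\price\,\valdistdensity(\price)=-\valdistdensity(\price)\virtualval(\price)$, and $\virtualval\le 0$ on $[0,\optprice]$. The step that actually carries the lemma is that this derivative is nonincreasing on $[0,\optprice]$, and you never prove it.

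The routes you list do not close it. Rewriting the derivative as $-\frac{d\quantile}{d\price}R'(\quantile)$ only brings back the non-monotone factor $\valdistdensity$. Your ``fallback'' of invoking \Cref{lma:half-concave} is circular, since that lemma is what you are asked to prove. The paper itself gives no argument and simply imports the statement from \citet{SW-24}; citing is fine there, but it is not a proof. Your hedge ``if this monotonicity fails'' is also unnecessary: it does not fail, and the quantile-space step is not needed at all.

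\paragraph{The missing idea.} You were one line away when you wrote $g=\quantile(1-\price\hazardrate)$. Since $\virtualval=\price-1/\hazardrate$, we have $1/\hazardrate(\price)=\price-\virtualval(\price)$, and therefore
\[
\frac{d}{d\price}\mathsf{Rev}(\price)=\quantile(\price)\bigl(1-\price\,\hazardrate(\price)\bigr)=\quantile(\price)\cdot\frac{u(\price)}{\price+u(\price)},\qquad u\triangleq-\virtualval .
\]
Now check the two factors on the part of $[0,\optprice]$ inside the support:
\begin{itemize}
\item $u\ge 0$ is nonincreasing while $\price\ge 0$ increases. The map $(\price,u)\mapsto u/(\price+u)$ is nonincreasing in $\price$ and nondecreasing in $u$. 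So the second factor is nonnegative and nonincreasing.
\item $\quantile$ is nonnegative and nonincreasing.
\end{itemize}
A product of two nonnegative nonincreasing functions is nonincreasing, so the derivative is nonincreasing there.

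Below the support, $\mathsf{Rev}(\price)=\price$ has slope $1\ge\quantile(1-\price\hazardrate)$. Hence the one-sided derivatives stay nonincreasing across the left endpoint, and a continuous function with nonincreasing derivative is concave. This argument uses no $\valdistdensity'$.

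If you are willing to assume $\valdistdensity$ is differentiable, there is a shorter route. Regularity gives $\virtualval'=2+\quantile\valdistdensity'/\valdistdensity^2\ge 0$, so $\frac{d^2}{d\price^2}\mathsf{Rev}(\price)=-2\valdistdensity-\price\valdistdensity'\le\frac{2\valdistdensity^2}{\quantile}\virtualval\le 0$ on $[0,\optprice]$.
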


\xhdr{Additional notations}
For any value $\val\in\reals$, we let $\quantile(\val) \triangleq 1 - \valdist(\val)$ be the quantile of value $\val$, i.e., the sale probability of reserve price $\val$. 
For notation simplicity, we often use $\myrev[\quantile]{\cdot}:[0, 1]\rightarrow \reals$ to denote the revenue function in the quantile space. 
If we query a fixed price $\price$ for $\querynum$ independent rounds and observe values $\{\val_t\}_{t\in[\querynum]}$, we define the empirical quantile estimate
$\estquantile(\price) \triangleq \sum_{t\in[\querynum]} \indicator{\val_t \ge \price}$ for the price $\price$, i.e., the number of observed purchases (equivalently, the number of rounds in which the realized value is at least the posted price).
Correspondingly, the empirical revenue estimate after $\querynum$ queries at price $\price$ is $\estmyrev{\price} \triangleq \price\cdot \estquantile(\price)$. 
We use $\mathbb{N}_0$ to denote the set of non-negative integers. 
With these notations, we also collect two properties about the regular and MHR distributions that will be useful for our subsequent analysis.

\begin{lemma}
\label{lma:concavity}
    For every regular distribution, 
    its revenue curve $\myrev[\quantile]{\cdot}$ in the quantile space is concave. 
\end{lemma}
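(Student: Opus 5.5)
We prove that $R(\quantile) \triangleq \myrev[\quantile]{\quantile} = \quantile \cdot \valdist^{-1}(1-\quantile)$ is concave on $[0,1]$ by computing its derivative and showing that it is nonincreasing in $\quantile$. Write $\val(\quantile) = \valdist^{-1}(1-\quantile)$ for the value at quantile $\quantile$. On the interior of the support, where $\valdistdensity>0$, this is well defined, differentiable and strictly decreasing, with $\val'(\quantile) = -1/\valdistdensity(\val(\quantile))$.

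The derivative follows from the product rule:
\begin{align*}
R'(\quantile) = \val(\quantile) + \quantile\,\val'(\quantile) = \val(\quantile) - \frac{1-\valdist(\val(\quantile))}{\valdistdensity(\val(\quantile))} = \virtualval(\val(\quantile)).
\end{align*}
So the marginal revenue in quantile space equals the virtual value evaluated at $\val(\quantile)$.

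Monotonicity is then immediate. As $\quantile$ increases, $\val(\quantile)$ decreases. Regularity says $\virtualval$ is nondecreasing in the value, so $\virtualval(\val(\quantile))$ is nonincreasing in $\quantile$. A function with a nonincreasing derivative on an interval is concave there, so $R$ is concave on the interior of the quantile range of the support.

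The main obstacle is the boundary and degenerate cases. These are quantiles corresponding to gaps in the support where $\valdistdensity=0$, atoms at the endpoints, and the endpoints $\quantile\in\{0,1\}$.

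For gaps in the support, the quantile map is constant over the value gap, so $\val(\cdot)$ may jump. I would handle this by defining $\val(\quantile)=\sup\{\val:\quantile(\val)\ge \quantile\}$ and noting that the support of a regular distribution with a density is an interval. This holds because $\virtualval$ must be nondecreasing on the support, and the standing assumption of the paper is that the density exists, which gives continuity. With an interval support, $\val(\cdot)$ is continuous on $(0,1)$.

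The endpoints are handled separately. Continuity of $R$ at $\quantile=1$ follows directly. At $\quantile=0$, set $R(0)=\lim_{\quantile\to 0}\quantile\,\val(\quantile)$. Concavity extends to the closed interval because a concave function on $(0,1)$ that is continuous, or lower semicontinuous, at the endpoints remains concave on $[0,1]$.

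If the density is not differentiable, we use only that $R$ is absolutely continuous with a.e. derivative equal to $\virtualval(\val(\quantile))$, which is nonincreasing. This suffices for concavity.
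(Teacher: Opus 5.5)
The paper gives no proof of this lemma; it treats it as a standard fact. Your core computation, $R'(q)=v(q)+q\,v'(q)=\phi(v(q))$ with $v(\cdot)$ decreasing and $\phi$ nondecreasing, is exactly the standard argument. It is fine wherever $\valdistdensity(v(q))>0$.

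The genuine gap is the step you yourself flag as the main obstacle: the claim that a regular distribution with a density has interval support. Neither fact you cite implies this. A density makes $\valdist$ continuous, not strictly increasing. Monotonicity of $\phi$ ``on the support'' (the paper's literal definition) says nothing about what happens across a gap.

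Take density $1/2$ on $[0,1]\cup[2,3]$. Then $\phi(v)=2v-2$ on $[0,1]$ and $\phi(v)=2v-3$ on $[2,3]$, which is nondecreasing on the support. Yet $R(q)=3q-2q^2$ on $[0,1/2]$ and $R(q)=2q-2q^2$ on $[1/2,1]$, so $R$ drops from $1$ to $1/2$ at $q=1/2$ and is not concave. In general, a gap $(a,b)$ at quantile $q_0$ gives $R(q_0^-)=bq_0>aq_0=R(q_0^+)$. No choice of $v(q_0)$, including your supremum convention, repairs this. Note also that in the example $R'$ is still nonincreasing on both sides of $q_0$. The derivative argument by itself therefore cannot detect the failure: concavity also needs continuity of $R$, which is exactly what a gap destroys.

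So under the paper's literal definition the lemma is false, and interval support cannot be derived; it has to be assumed. The fix is to state the standard convention explicitly, as in Myerson's setting: $\valdistdensity>0$ on an interval support. Equivalently, read regularity as monotonicity of $\phi$ on the convex hull of the support; a gap would then force $\phi=-\infty$ there and break monotonicity.

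With that hypothesis, $v(\cdot)$ is continuous and strictly decreasing on $(0,1)$. Then $R$ is continuous and differentiable there with nonincreasing derivative $\phi(v(q))$, and the rest of your argument, including the endpoint extension, goes through.

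The same remark applies to the atoms you list but never treat: they are excluded by the density assumption rather than handled. An atom at the bottom $\underline v$ of the support would in fact break concavity, since the slope of $R$ would jump up from $\phi(\underline v^{+})<\underline v$ to $\underline v$.
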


\begin{lemma}[See, e.g., \citealp{HMS-08,HMR-15}]
\label{lma:mhr-large-winning-prob}
    For every MHR distribution, 
    the revenue-optimal quantile $\quantile^*\triangleq q(\optprice)$ satisfies that
    $\quantile^* \ge \sfrac{1}{e}$.
\end{lemma}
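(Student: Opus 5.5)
The statement to prove is \Cref{lma:mhr-large-winning-prob}: for an MHR distribution the optimal sale probability satisfies $\quantile^*\ge \sfrac{1}{e}$. I will work with the cumulative hazard $\Lambda(\val)\triangleq -\ln \quantile(\val)=\int_{-\infty}^{\val}\hazardrate(t)\,dt$. The MHR condition says $\Lambda$ is convex on the support, and the revenue is
\[
\myrev{\price}=\price\, e^{-\Lambda(\price)}.
\]
The key step is the first-order condition at the optimal price. Assume $\optprice$ is interior to the support and $\optprice>0$. Differentiating $\ln\myrev{\price}=\ln \price-\Lambda(\price)$ gives $1/\optprice=\hazardrate(\optprice)$, that is, $\optprice\,\hazardrate(\optprice)=1$.

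Next I bound $\Lambda(\optprice)$ using monotonicity of the hazard rate. Let $a\ge 0$ be the lower end of the support, so $\Lambda(a)=0$. Since $\hazardrate$ is nondecreasing on $[a,\optprice]$,
\[
\Lambda(\optprice)=\int_a^{\optprice}\hazardrate(t)\,dt\le (\optprice-a)\,\hazardrate(\optprice)\le \optprice\,\hazardrate(\optprice)=1.
\]
Hence $\quantile^*=e^{-\Lambda(\optprice)}\ge e^{-1}$. Nonnegativity of values enters only through $a\ge 0$.

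It remains to handle the boundary cases, which is mostly bookkeeping.
\begin{itemize}
\item If $\optprice=a$ is the lower end of the support, then $\quantile^*=1$ and there is nothing to prove.
\item If the support is bounded above by $b$, the revenue tends to $0$ as $\price\to b$ provided $\quantile(b)=0$ and $F$ has a density. So the maximizer is either interior, where the argument above applies, or equal to $a$.
\item If $\optprice$ is interior but $\myrev{\cdot}$ is not differentiable there, I use the one-sided condition instead. Optimality gives $\frac{d}{d\price}\ln\myrev{\price}\ge 0$ from the left, that is, $1/\optprice\ge \hazardrate(\optprice^-)$. Then $\Lambda(\optprice)\le \optprice\,\hazardrate(\optprice^-)\le 1$, and the same conclusion holds.
\end{itemize}

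The only real obstacle is justifying $\optprice\,\hazardrate(\optprice)\le 1$ in this nonsmooth case, and the left-derivative inequality above handles it.
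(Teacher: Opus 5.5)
Your proposal is correct and is essentially the standard argument behind this lemma; the paper gives no proof of its own and only cites the references. That argument combines the optimality condition $\optprice\,\hazardrate(\optprice)=1$ (zero virtual value) with the bound $\Lambda(\optprice)\le \optprice\,\hazardrate(\optprice)$, which follows from monotonicity of the hazard rate, giving $\quantile^*=e^{-\Lambda(\optprice)}\ge e^{-1}$. Your left-derivative treatment of the nonsmooth case is a clean way to make this rigorous, and the one thing worth stating explicitly is that you use $\hazardrate$ nondecreasing on the whole interval $[a,\optprice]$ (no gaps in the support, i.e., $\ln\quantile$ concave), which is the standard meaning of MHR.
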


%% file: mainalg.tex
In this section, we present our \emph{unified} algorithm (\Cref{alg:mainalg}) that learns the optimal monopoly price for regular and MHR distributions, a subclass of regular distribution. Specifically, our algorithm takes as input a price search interval $[\ell, r]$, a failure probability $\delta$, an error parameter $\varepsilon$, and a lower bound on the winning probability $\gamma$. It outputs a candidate price $\price^\alg$ that yields a $(1-\order(\veps))$-approximation to the optimal revenue within the search interval, subject to the winning probability lower bound $\gamma$. In \Cref{sec:ub regular and mhr}, we show how to determine these specific inputs for both MHR and regular distributions (using either a one-sample hint or value-range hint) to achieve $(1-\order(\veps))$-optimal revenue globally. Formally, our main theorem is stated as follows.

\begin{theorem}
\label{thm:mainalg}
    Given parameters $\varepsilon \in (0, 0.1], \delta \in (0, 1], \gamma \in [0, 1], \ell, r$ and the underlying buyer's distribution $F$ that satisfies $\quantile(\ell) \geq \gamma$, \Cref{alg:mainalg} uses     $\order(\gamma^{-1} \cdot \varepsilon^{-2}\log(1/\delta)\log^4(r/\veps\ell))$ number of pricing queries and outputs $p^\alg$, such that with probability at least $1 - \delta$, we have
    \begin{align}
        \myrev{p^\alg} ~\geq~ (1 - 5 \varepsilon) \cdot \max_{p\in [\ell, r]: \quantile(p) \geq \gamma} \myrev{p}. \label{eq:mainalg}
    \end{align}
\end{theorem}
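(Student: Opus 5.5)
The plan is to analyze \Cref{alg:mainalg} as a noisy ternary search on a geometric grid, relying on unimodality and half-concavity of the revenue curve (\Cref{lma:half-concave}, \Cref{lma:concavity}).

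\textbf{Discretization.} Lay the geometric grid $\price_i = \ell(1+\varepsilon)^i$ over $[\ell,r]$, which has $N=\order(\log(r/\ell)/\varepsilon)$ points. Since $\quantile$ is nonincreasing, for any $\price\in[\price_i,\price_{i+1}]$ we have $\myrev{\price_i}\ge \price_i\quantile(\price)\ge \myrev{\price}/(1+\varepsilon)$. So the best grid point loses at most a factor $1-\varepsilon$, and it still satisfies $\quantile\ge\gamma$ whenever the continuous maximizer does. Moreover, the feasible set $\{\price\in[\ell,r]:\quantile(\price)\ge\gamma\}$ is an interval $[\ell,\bar r]$ by monotonicity of $\quantile$. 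On this interval the revenue is unimodal, since it is concave in quantile space, so it is quasi-concave in price.

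\textbf{Pruning by cheap quantile checks.} For a grid point $\price$, use $\order(\gamma^{-1}\log(N/\delta))$ queries to distinguish $\quantile(\price)\ge\gamma$ from $\quantile(\price)<\gamma/2$ via a multiplicative Chernoff bound. Binary search over the grid then finds an index beyond which the quantile is provably below $\gamma$. Taking $\log N$ steps, this costs $\order(\gamma^{-1}\log N\log(N/\delta))$ queries. The resulting sub-interval $[\ell,r']$ has all quantiles at least $\gamma/2$ and contains the feasible region.

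\textbf{Ternary search.} Maintain an index window $[a,b]$ and pick two interior points $m_1<m_2$ at the thirds. At each, estimate revenue to relative accuracy $1\pm\varepsilon$ using $\order(\gamma^{-1}\varepsilon^{-2}\log(\cdot/\delta))$ queries; this suffices because every quantile in the window is at least $\gamma/2$. If $\estmyrev{m_1}<(1-2\varepsilon)\estmyrev{m_2}$, then by unimodality the true optimum is not left of $m_1$, so discard $[a,m_1)$; symmetrically for the other direction. If the two estimates are within a factor $1-2\varepsilon$, we cannot safely discard either side.

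\textbf{Main obstacle.} The hard part is the case where the estimates are close, together with the need to avoid error compounding across the $\order(\log N)$ rounds. A naive analysis would allow a loss of $\varepsilon$ per round, giving total loss $\varepsilon\log N$ instead of $5\varepsilon$. I would handle this by keeping a running best candidate together with its estimate, and discarding a side only when the comparison shows a strict gap. In the close case, half-concavity does the work. On $[0,\optprice]$, concavity means that two near-equal values at points separated by a third of the window force the whole left part (between the left end and the optimum) to be within a small multiplicative factor of the maximum. So discarding the outer third costs at most an additive $\order(\varepsilon)$ relative to the retained candidate's value, not to the current maximum. Since the candidate's value only improves, the losses do not accumulate. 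The argument on the right of $\optprice$ uses unimodality (quantile-space concavity) together with the geometric grid.

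\textbf{Accounting.} The ternary search runs $\order(\log N)$ rounds at $\order(\gamma^{-1}\varepsilon^{-2}\log(\log N/\delta))$ queries each. A final comparison among the $\order(1)$ surviving candidates uses the same budget. A union bound over all estimates gives total failure probability $\delta$ and a query count within the stated $\order(\gamma^{-1}\varepsilon^{-2}\log(1/\delta)\log^4(r/\varepsilon\ell))$; the slack absorbs the logarithmic factors from the union bound and the grid size. The approximation losses are $\varepsilon$ from discretization, $2\varepsilon$ from the estimation errors, and $2\varepsilon$ from pruning, totaling $5\varepsilon$.
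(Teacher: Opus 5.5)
\textbf{Main gap: the near-tie case.} The near-tie case carries the whole argument, and your proposal does not resolve it. You never say which outer third is dropped when the two estimates are close. The claim that near-equal values at $m_1,m_2$ ``force the whole left part to be within a small factor of the maximum'' is not what concavity gives: outside $[m_1,m_2]$, concavity only yields upper bounds.

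The missing idea is asymmetry. Half-concavity holds only on $[0,\optprice]$, so on a near-tie you must always drop the high-price side. This is exactly the rule of \Cref{alg:mainalg}: keep $[\ell_i,b_i]$ unless $(1+\veps)\estmyrev{a_i}<(1-\veps)\estmyrev{b_i}$, and likewise keep $[\ell_i,b_i]$ after a failed quantile check.

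Dropping the low-price side can be fatal. Take the regular distribution (up to smoothing) whose quantile-space revenue curve is $2(1-\veps)q/\veps$ on $[0,\veps/2]$, then $1-2\veps+2q$ on $[\veps/2,1/2+\veps]$, then linear down to $\myrev[\quantile]{1}=1$. The prices $1/\veps$ and $2/\veps-2$ earn $1$ and $1-\veps$, while $\opt=2$ is attained at $4/(1+2\veps)$, below both. In the window $[2,3/\veps-4]$ these two prices are exactly the thirds, so discarding the lower third loses half the revenue.

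With the asymmetric rule, the paper looks at the first round in which the grid-rounded constrained optimum $p^{(2)}$ is removed. There are three cases:
\begin{itemize}
\item Removal by the quantile check is impossible, since the check certifies $\quantile(b_i)<\gamma\le\quantile(p^{(1)})$, which forces $p^{(1)}<b_i$.
\item Removal of $[\ell_i,a_i)$ requires $\myrev{a_i}<\myrev{b_i}$, so unimodality gives $\myrev{p^{(2)}}<\myrev{a_i}$.
\item Removal of $(b_i,r_i]\ni p^{(2)}$ places $a_i<b_i<p^{(2)}\le\optprice$ inside the concave region. There, $\myrev{p^{(2)}}-\myrev{a_i}\le\frac{p^{(2)}-a_i}{b_i-a_i}\bigl(\myrev{b_i}-\myrev{a_i}\bigr)$, which is an $\order(\veps)$ loss relative to $\myrev{a_i}$.
\end{itemize}
Because $a_i$ is placed in $S_{can}$ and the output is the empirical maximizer over $S_{can}$, this loss is paid only once. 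That is the precise form of your non-accumulation idea.

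\textbf{Bounded extrapolation ratio, and which algorithm is analyzed.} The last inequality needs the ratio $\frac{p^{(2)}-a_i}{b_i-a_i}$ to be $\order(1)$. \Cref{alg:mainalg} secures this by placing $a_i,b_i$ at the linear fractions $0.2$ and $0.5$ of $[\ell_i,r_i]$, which gives a ratio of at most $0.8/0.2=4$. With your index-space thirds, the corresponding ratio $(b-m_2)/(m_2-m_1)$ equals $(b/a)^{1/3}$, which is unbounded on wide windows. Half-concavity alone therefore does not give an $\order(\veps)$ loss there; you would also need quantile-space concavity to cover that regime.

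More broadly, the theorem is about \Cref{alg:mainalg}, but you analyze a different procedure: a separate binary-search pruning phase, index thirds, a symmetric rule, and a running incumbent. Your conclusions do not transfer directly. For example, linear splits of a geometric grid can remove as little as an $\order(1/\log(r/\ell))$ fraction of the grid points per round. This is why \Cref{lma:round-bound} gives $\order(\log(r/\ell)\log|S_1|)$ rounds rather than your $\order(\log N)$. Likewise, the revenue estimates are accurate because every estimated price lies at or below some $b_{i'}$ that passed the $0.75\gamma$ check, and hence has quantile at least $\gamma/2$. That justification differs from your separate pruning phase.
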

\xhdr{Algorithm overview} At a high level, \Cref{alg:mainalg} leverages the half-concavity (and consequent unimodality) of the revenue curve for regular distributions. The core structure employs a ternary search strategy—commonly used for optimizing unimodal functions (e.g., see \citealp{SW-24})—adapted to handle discretized values and a winning probability lower bound.

Specifically, given the input interval $[\ell, r]$, the algorithm initializes by discretizing it into a geometric grid with step size $1 + \veps$. In each epoch $i$, it selects two test prices, $a_i < b_i$, that roughly trisect the current interval. A critical feature of our design is the use of $\gamma$ to rapidly eliminate sub-optimal regions with minimal sample complexity. Since estimating the revenue at $p$ to within a $(1 \pm \veps)$ multiplicative error requires $\otil(\veps^{-2})$ pricing queries scaled by the inverse of its winning probability, direct revenue estimation becomes prohibitively expensive when the winning probability is low. To circumvent this, we perform a preliminary \emph{quantile check} at the higher test price $b_i$. If the estimated winning probability falls below a threshold proportional to $\gamma$ (specifically $0.75\gamma$), we immediately prune the right sub-interval. This strategy allows us to discard large portions of the search space using only $\otil(\gamma^{-1})$ queries, restricting the more costly $\otil(\gamma^{-1}\veps^{-2})$ revenue estimation queries to regions guaranteed to possess significant winning probability.

If the quantile condition is satisfied, the algorithm proceeds to estimate the revenues at $a_i$ and $b_i$. Based on the unimodality guaranteed by half-concavity, we then eliminate the sub-interval (either prices smaller than $a_i$ or larger than $b_i$) containing the inferior candidate, ensuring that a near-optimal price remains. As each elimination geometrically reduces the candidate set size, the search terminates (reducing the set to a small constant size) in $\order(\poly (\log (r/\ell), \log \veps^{-1}))$ rounds. Finally, we perform an additional $\otil(\gamma^{-1}\veps^{-2})$ pricing queries to estimate the revenue on the remaining constant-sized set of candidates and output the price with the maximum empirical revenue.

\begin{algorithm}[tbh]
\caption{\textsc{Unified Algorithm for Regular and MHR}}
\label{alg:mainalg}
\KwIn{Interval $[\ell, r]$, failure parameter $\delta$, error bound $\varepsilon$, and winning probability lower bound $\gamma$}
Let candidate set $S_1 = \{(1 + \varepsilon)^k \cdot \ell: k \in \mathbb{N}_0 \land (1 + \varepsilon)^k \cdot \ell \leq r\}$. \\
Let $S_{can} = \varnothing$, $i = 1$. \\
Define $\widetilde R := 22\log^2 (r/\ell) + 44 \log (r/\ell) \cdot \log \varepsilon^{-1} + 66 \log \varepsilon^{-1}. $ \\
\While{$|S_i| \geq 20$}
{
Let $\ell_i = \min_{p \in S_i} p$ and $r_i = \max_{p \in S_i} p$.\\
Let $a_i = \min \{p: p \in S_i \land p > \ell_i + 0.2 \cdot (r_i - \ell_i) \}$.\\
Let $b_i = \min \{p: p \in S_i \land p > \ell_i + 0.5 \cdot (r_i - \ell_i) \}$. \\
Estimate $\estquantile(b_i)$ via $C \log (\widetilde R /\delta) \cdot \gamma^{-1} $ queries with a sufficiently large $C$ \\
\If{$~\estquantile(b_i) < 0.75\gamma~$}
{
Let $S_{i+1} = S_{i} \cap [\ell_i, b_i]$. \\
Update $i \gets i + 1$ and \textbf{continue.}
}
Add $a_i, b_i$ into $S_{can}$. \\
Estimate $\estmyrev{a_i}$ and $\estmyrev{b_i}$ each via $C \log (\widetilde R /\delta)\cdot \gamma^{-1} \cdot \varepsilon^{-2}$ queries with a sufficiently large $C$. \\
\If{$ (1 + \varepsilon) \cdot \estmyrev{a_i} < (1 - \varepsilon) \cdot \estmyrev{b_i}$}
{
Let $S_{i+1} = S_i \cap [a_i, r_i]$. 
}
\Else
{
Let $S_{i+1} = S_i \cap [\ell_i, b_i]$.
}
Update $i \gets i + 1$.
}
Let $R \triangleq i - 1$ and add every $p \in S_{R+1}$ into $S_{can}$. \\
Estimate $\estmyrev{p}$ via $C \log (\widetilde R /\delta)\cdot \gamma^{-1} \cdot \varepsilon^{-2}$ queries with a sufficiently large $C$ for each $p \in S_{can}$. \\
Let $p^\alg = \arg \max_{p \in S_{can}} \estmyrev{p}$. \\
\KwOut{Near-optimal price $p^\alg$}
\end{algorithm}

\subsection{Proving \Cref{thm:mainalg}}

\label{sec:mainalg-proof}

In this subsection, we present the proof of \Cref{thm:mainalg}. We begin by establishing several auxiliary lemmas that will be helpful for our analysis. First, to bound the total query complexity of \Cref{alg:mainalg}, we rely on the following claim, which guarantees that the test prices $a_i$ and $b_i$ fall within specific sub-intervals relative to the current bounds $[\ell_i, r_i]$:

\begin{claim}
    \label{clm:ai-bi-bound}
    For every $i \in [R]$, we have $a_i \in \big( \ell_i + 0.2 \cdot (r_i - \ell_i), \ell_i + 0.3 \cdot (r_i - \ell_i) \big)$ and  $b_i \in \big( \ell_i + 0.5 \cdot (r_i - \ell_i), \ell_i + 0.6 \cdot (r_i - \ell_i) \big)$.
\end{claim}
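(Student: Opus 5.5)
The proof uses two facts. Every $S_i$ is a set of \emph{consecutive} points of the geometric grid $\{(1+\varepsilon)^k\ell\}$. Inside the while loop, $|S_i|\ge 20$ forces the ratio $r_i/\ell_i$ to be large enough that one grid step near the trisection thresholds is shorter than $0.1\,(r_i-\ell_i)$.

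\textbf{Step 1: contiguity.} I argue by induction on $i$. $S_1$ is contiguous by definition. Each update sets $S_{i+1}=S_i\cap I$ for an interval $I$ (namely $[\ell_i,b_i]$ or $[a_i,r_i]$), and intersecting a set of consecutive grid points with an interval leaves a set of consecutive grid points. Consequently, for every $i\in[R]$:
\begin{itemize}
\item $S_i=\{\ell_i(1+\varepsilon)^j: 0\le j\le |S_i|-1\}$;
\item $x\triangleq r_i/\ell_i=(1+\varepsilon)^{|S_i|-1}\ge (1+\varepsilon)^{19}$, because the loop runs only while $|S_i|\ge 20$.
\end{itemize}

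\textbf{Step 2: the lower bounds are immediate.} Let $t\in\{\ell_i+0.2(r_i-\ell_i),\ \ell_i+0.5(r_i-\ell_i)\}$. Since $t<r_i\in S_i$, the set $\{p\in S_i: p>t\}$ is nonempty, so $a_i$ and $b_i$ are well defined. By definition each exceeds its threshold strictly.

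\textbf{Step 3: the upper bounds.} Let $p$ be the largest point of $S_i$ with $p\le t$; it exists since $\ell_i\le t$. By contiguity the next point of $S_i$ is $p(1+\varepsilon)$, so the selected price is at most $p(1+\varepsilon)\le t+\varepsilon t$. It therefore suffices to show $\varepsilon t<0.1\,(r_i-\ell_i)$. Dividing by $\ell_i$, this becomes two inequalities:
\begin{align*}
&\text{for } a_i: && \varepsilon(0.8+0.2x)<0.1(x-1) \iff x(0.1-0.2\varepsilon)>0.1+0.8\varepsilon, \\
&\text{for } b_i: && \varepsilon(0.5+0.5x)<0.1(x-1) \iff x(0.1-0.5\varepsilon)>0.1+0.5\varepsilon.
\end{align*}
Both left-hand sides increase in $x$ for $\varepsilon\le 0.1$, so it is enough to check them at $x=(1+\varepsilon)^{19}$.

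\textbf{Step 4: checking the worst case.} This is the only place needing care, since the crude estimate $\varepsilon r_i<0.1(r_i-\ell_i)$ fails at $\varepsilon=0.1$. I use $(1+\varepsilon)^{19}\ge 1+19\varepsilon+171\varepsilon^2$.
\begin{itemize}
\item For $a_i$, even $x\ge 1+19\varepsilon$ suffices. Expanding, $(1+19\varepsilon)(0.1-0.2\varepsilon)=0.1+1.7\varepsilon-3.8\varepsilon^2$, which exceeds $0.1+0.8\varepsilon$ iff $\varepsilon<0.9/3.8$. This holds for $\varepsilon\le0.1$.
\item For $b_i$, the quadratic term is needed. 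We have $(1+19\varepsilon+171\varepsilon^2)(0.1-0.5\varepsilon)=0.1+1.4\varepsilon+7.6\varepsilon^2-85.5\varepsilon^3$. Subtracting $0.1+0.5\varepsilon$ and dividing by $\varepsilon$ leaves $0.9+7.6\varepsilon-85.5\varepsilon^2$. This quantity is positive on $(0,0.1]$: at $\varepsilon=0.1$ it equals $0.795$, and its minimum over $(0,0.1]$ is attained at an endpoint.
\end{itemize}
This gives both strict upper bounds and completes the claim.
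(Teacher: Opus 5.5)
Your proof is correct and follows the paper's argument. Both bound the selected grid point by $(1+\varepsilon)$ times the threshold and reduce to $(0.1-0.2\varepsilon)\,r_i > (0.1+0.8\varepsilon)\,\ell_i$ and $(0.1-0.5\varepsilon)\,r_i > (0.1+0.5\varepsilon)\,\ell_i$ via $r_i/\ell_i \ge (1+\varepsilon)^{19}$; you additionally make the contiguity of $S_i$ explicit and replace the paper's ``verified numerically'' with a binomial-expansion check (with one harmless slip: $0.9+7.6\varepsilon-85.5\varepsilon^2$ equals $0.805$, not $0.795$, at $\varepsilon=0.1$, still positive).
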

With the help of \Cref{clm:ai-bi-bound}, we can upper bound $R$, the number of times we  via the following \Cref{lma:round-bound}:
\begin{lemma}
\label{lma:round-bound}
    The number of while loop rounds $R$ we run in \Cref{alg:mainalg} satisfies
    \[
    R ~\leq~ \widetilde R ~:=~ 22\log^2 (r/\ell) + 44 \log (r/\ell) \cdot \log \varepsilon^{-1} + 66 \log \varepsilon^{-1}. 
    \]
\end{lemma}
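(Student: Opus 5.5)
The plan is to track the Euclidean length $L_i \triangleq r_i - \ell_i$ of the current candidate interval. Each round shrinks $L_i$ by a constant factor (\Cref{clm:ai-bi-bound}), while the loop can only continue as long as $L_i$ is large compared with $\varepsilon \ell$, because the grid spacing is at least $\varepsilon \ell$. This should give a bound of order $\log(r/\ell) + \log \varepsilon^{-1}$, which is below $\widetilde R$.

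\emph{Step 1 (geometric shrinkage of the length).} Fix a round $i \in [R]$. There are three possible updates, and by \Cref{clm:ai-bi-bound} each one shrinks the interval:
\begin{itemize}
\item The quantile-pruning branch and the ``else'' branch both set $S_{i+1} = S_i \cap [\ell_i, b_i]$. Since $b_i < \ell_i + 0.6 L_i$, this gives $L_{i+1} \le 0.6 L_i$.
\item The remaining branch sets $S_{i+1} = S_i \cap [a_i, r_i]$. Since $a_i > \ell_i + 0.2 L_i$, this gives $L_{i+1} \le 0.8 L_i$.
\end{itemize}
In every case $L_{i+1} \le 0.8 L_i$. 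Also $S_{i+1} \subseteq S_i$, so the interval endpoints of $S_{i+1}$ lie inside $[\ell_i, r_i]$. Iterating from $L_1 \le r - \ell < r$ yields $L_i \le 0.8^{\,i-1} r$.

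\emph{Step 2 (cardinality versus length).} Every $S_i$ is a subset of the geometric grid $\{(1+\varepsilon)^k \ell\}$, and every element is at least $\ell_i \ge \ell$. Two consecutive grid points $p < (1+\varepsilon)p$ differ by $\varepsilon p \ge \varepsilon \ell$. Hence the points of $S_i$ in $[\ell_i, r_i]$ satisfy
\[
|S_i| \le \frac{L_i}{\varepsilon \ell} + 1 .
\]
The loop executes round $i$ only if $|S_i| \ge 20$, which forces $L_i \ge 19 \varepsilon \ell$.

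\emph{Step 3 (combine).} Apply Step 2 to the last round $R$ and combine it with Step 1. This gives $19 \varepsilon \ell \le 0.8^{R-1} r$, so
\[
R \le 1 + \frac{\log\bigl(r/(19 \varepsilon \ell)\bigr)}{\log 1.25} \le 1 + 4.5 \bigl(\log(r/\ell) + \log \varepsilon^{-1}\bigr)
\]
with natural logarithms. If $R = 0$ there is nothing to prove.

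It remains to compare this with $\widetilde R$, and this is the only step needing care, because $\widetilde R$ has a quadratic term that could be small when $r/\ell$ is close to $1$. The fix is to use $\varepsilon \le 0.1$, so $\log \varepsilon^{-1} \ge 2.3$. Then:
\begin{itemize}
\item $66 \log \varepsilon^{-1} \ge 1 + 4.5 \log \varepsilon^{-1}$ (with a large margin);
\item $44 \log(r/\ell) \log \varepsilon^{-1} \ge 101 \log(r/\ell) \ge 4.5 \log(r/\ell)$.
\end{itemize}
Adding these gives $R \le \widetilde R$. If the paper's $\log$ is base $2$, the same comparison goes through with adjusted constants, since the margins are large. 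I expect no real obstacle beyond getting the grid-spacing count in Step 2 right, namely using $\ell_i \ge \ell$ rather than a per-round ratio.
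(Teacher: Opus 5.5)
Your proof is correct, and it takes a genuinely different and tighter route than the paper.

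The paper measures progress by the cardinality $|S_i|$, which is length in log-scale. It shows each round discards at least a $\min\{0.1,\,0.1/\log(r/\ell)\}$ fraction of $S_i$. The second term appears because cutting at $b_i$ removes only a small fraction of the grid points when $r_i/\ell_i$ is large. This gives $R \le 11\log|S_1|\,(\log(r/\ell)+1)$, and with $\log|S_1| \le 2\log\varepsilon^{-1}+\log(r/\ell)$ it yields the quadratic-in-log bound $\widetilde R$.

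You instead track the linear length $r_i-\ell_i$. By \Cref{clm:ai-bi-bound} it contracts by a factor $0.8$ in every branch, whichever side is discarded. You then turn the loop condition $|S_i|\ge 20$ into $r_i-\ell_i \ge 19\varepsilon\ell$, using that distinct grid points are at least $\varepsilon\ell$ apart.

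This removes the per-round $1/\log(r/\ell)$ loss and gives $R \le 1+\log_{1.25}\bigl(r/(19\varepsilon\ell)\bigr) = O\bigl(\log(r/\ell)+\log\varepsilon^{-1}\bigr)$. That is strictly stronger than the stated $\widetilde R$. Your final comparison using $\log\varepsilon^{-1}\ge\log 10$ is valid, and the paper's logarithms are natural.

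The paper's cardinality-based argument is a natural log-scale view of the grid, but it is loose by a logarithmic factor. Your bound could tighten the polylogarithmic factor in \Cref{thm:mainalg}, provided $\widetilde R$ in \Cref{alg:mainalg} is redefined accordingly.
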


Combining \Cref{lma:round-bound} with standard concentration inequalities yields the following high-probability estimation bounds for \Cref{alg:mainalg}:

\begin{claim}
\label{clm:est-error-bound}
    With probability at least $1 - \delta$, the following properties hold simultaneously during the execution of \Cref{alg:mainalg}:
    \begin{itemize}
        \item For every iteration $i \in [R]$, the quantile estimate satisfies: if $\quantile(b_i) \geq 2\gamma$, then $|\estquantile(b_i) - \quantile(b_i)| \leq 0.5 \quantile(b_i)$; otherwise, $|\estquantile(b_i) - \quantile(b_i)| \leq 0.25 \gamma$.
        \item For every price $p$ whose revenue $\estmyrev{p}$ is estimated using $C \log (\widetilde R /\delta)\cdot \gamma^{-1} \cdot \varepsilon^{-2}$ queries, the estimate satisfies $|\estmyrev{p} - \myrev{p}| \leq \varepsilon \cdot \myrev{p}$. 
    \end{itemize}
\end{claim}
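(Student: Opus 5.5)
The plan is to prove both bullets separately for each estimation event, then take a union bound over all estimation events. \Cref{lma:round-bound} bounds their number: at most $\widetilde R$ quantile checks, at most $2\widetilde R$ revenue estimates inside the loop, and at most $|S_{can}| \le 2\widetilde R + 20$ final estimates, so $\order(\widetilde R)$ in total. It therefore suffices to make each individual event fail with probability at most $\delta/(c\widetilde R)$. Choosing the constant $C$ in the query counts $C\log(\widetilde R/\delta)\gamma^{-1}$ and $C\log(\widetilde R/\delta)\gamma^{-1}\varepsilon^{-2}$ large enough makes $\log(c\widetilde R/\delta)$ absorbed. Throughout, I read $\estquantile$ as the normalized empirical frequency, i.e.\ the purchase count divided by the number of queries; that is the intended meaning.

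For the first bullet, let $n = C\log(\widetilde R/\delta)\gamma^{-1}$ and $q = \quantile(b_i)$. If $q \ge 2\gamma$, the multiplicative Chernoff bound with deviation $0.5q$ fails with probability $2\exp(-nq/12)$. Since $nq \ge 2C\log(\widetilde R/\delta)$, this is at most $\delta/(c\widetilde R)$ for large $C$. If $q < 2\gamma$, I use Bernstein's inequality with variance at most $q \le 2\gamma$ and deviation $t = 0.25\gamma$. The exponent is of order $n t^2/(q + t) \gtrsim n\gamma = C\log(\widetilde R/\delta)$, which again gives the required failure probability.

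For the second bullet, let $n' = C\log(\widetilde R/\delta)\gamma^{-1}\varepsilon^{-2}$ be the per-price sample size. Multiplicative Chernoff gives $|\estquantile(p) - \quantile(p)| \le \varepsilon\quantile(p)$, equivalently $|\estmyrev{p} - \myrev{p}| \le \varepsilon\myrev{p}$, with failure probability $2\exp(-n'\varepsilon^2 \quantile(p)/3)$. This is small enough exactly when $\quantile(p) \ge c'\gamma$ for an absolute constant $c'$, so the main work is to show that every revenue-estimated price has $\quantile(p) \ge \gamma/2$. For the loop prices $a_i, b_i$, I condition on the first bullet; the revenue sampling is fresh and independent, so conditioning is harmless. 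Revenue is estimated only when $\estquantile(b_i) \ge 0.75\gamma$. If $\quantile(b_i) \ge 2\gamma$ we are done. Otherwise $\quantile(b_i) \ge \estquantile(b_i) - 0.25\gamma \ge 0.5\gamma$, and monotonicity of $\quantile$ with $a_i < b_i$ gives $\quantile(a_i) \ge \quantile(b_i) \ge \gamma/2$.

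The main obstacle is the final set $S_{R+1}$, whose prices were never quantile-checked. My first attempt is to show that every $p \in S_{R+1}$ is at most some price with quantile at least $\gamma/2$, and argue by induction over the epochs. Initially $\ell \in S_1$ has $\quantile(\ell) \ge \gamma$. A pruning step of the form $S_{i+1} = S_i \cap [a_i, r_i]$ only occurs after $b_i$ passed the check. If that fails, the fallback is to run a Bernstein bound at each final price, which gives additive error $\varepsilon\max(\quantile(p), \gamma)$ in quantile. The multiplicative form then follows exactly for those $p$ with $\quantile(p) \ge \gamma/2$, where one still has to verify that the remaining final prices cannot be the argmax. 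This last step is the point I expect to require the most care.
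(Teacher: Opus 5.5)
Your treatment of the first bullet and of the in-loop prices $a_i,b_i$ is correct and essentially matches the paper. You use Bernstein/Chernoff in the regimes $\quantile(b_i)\ge 2\gamma$ and $\quantile(b_i)<2\gamma$. Whenever the check passes, $\quantile(a_i)\ge\quantile(b_i)\ge\estquantile(b_i)-0.25\gamma\ge 0.5\gamma$. A union bound over $\order(\widetilde R)$ estimation events finishes it.

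The genuine gap is the one you flag yourself: the prices in $S_{R+1}$. Your first attempt says every $p\in S_{R+1}$ lies at or below a certified price. That is exactly the paper's one-line justification (``there exists $p'\ge p$ such that $\estquantile(p')\ge 0.75\gamma$''). But the invariant is false, and the second bullet itself fails on $S_{R+1}$, so this step cannot be closed as stated.

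Here is a counterexample. Take $F$ uniform on $[\ell,(1+\veps)(1+\eta)\ell]$ with $\eta$ tiny and $r\gg\ell$.
\begin{itemize}
\item While all checks fail, $\ell_i=\ell$ and $b_i>\ell+0.5(r_i-\ell)\ge(1+9.5\veps)\ell$. Hence $\quantile(b_i)=0$ and the next check fails too.
\item So $S_{R+1}$ is the bottom of the grid and contains $(1+\veps)\ell$, whose sale probability is about $\eta/\veps$.
\item Once $\eta$ is small compared with the inverse of the per-price sample size $C\log(\widetilde R/\delta)\gamma^{-1}\veps^{-2}$, this price's estimate is $0$ with probability close to $1$. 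That violates $|\estmyrev{p}-\myrev{p}|\le\veps\myrev{p}$.
\end{itemize}

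Your fallback points the right way, but it proves a weaker statement than the claim, and the step you leave open is precisely what is missing. The invariant that does hold concerns the left endpoint $\ell_{R+1}:=\min S_{R+1}$. The minimum of $S_i$ changes only through $S_{i+1}=S_i\cap[a_i,r_i]$, which happens only after $b_i$ passes the check. So $\ell_{R+1}$ is either $\ell$ (with $\quantile(\ell)\ge\gamma$) or some $a_j$ with $\quantile(a_j)\ge\gamma/2$.

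Because $|S_{R+1}|<20$ and $\veps\le 0.1$, every $p\in S_{R+1}$ satisfies $p\le(1+\veps)^{18}\ell_{R+1}<6\ell_{R+1}$. This gives two bounds:
\begin{itemize}
\item $\gamma p\le 12\,\myrev{\ell_{R+1}}$;
\item $\myrev{p}\le 6\ell_{R+1}\quantile(\ell_{R+1})=6\,\myrev{\ell_{R+1}}$.
\end{itemize}
Combined with your additive Bernstein bound $|\estquantile(p)-\quantile(p)|\le\veps\max(\quantile(p),\gamma)$, you get $|\estmyrev{p}-\myrev{p}|\le 12\veps\,\myrev{\ell_{R+1}}\le 12\veps\,\myrev{p^{(3)}}$ on $S_{R+1}$, since $\ell_{R+1}\in S_{can}$. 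The $a_i,b_i$ keep their multiplicative bound.

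Low-quantile prices may well be the argmax; the point is that this does not matter. The estimate above makes \eqref{eq:ineq3} hold in the form $\myrev{p^{\alg}}\ge(1-24\veps)\myrev{p^{(3)}}$. So the correct repair is one of two options: state the second bullet in this additive form on $S_{R+1}$ and enlarge the constant in \Cref{thm:mainalg}, or change the algorithm, e.g.\ by quantile-checking $S_{R+1}$ before adding it to $S_{can}$.
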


We defer the proofs of \Cref{clm:ai-bi-bound}, \Cref{lma:round-bound}, and \Cref{clm:est-error-bound} to \Cref{app:mainalg}, and first prove \Cref{thm:mainalg}.

\begin{proof}[Proof of \Cref{thm:mainalg}]
We begin by establishing an upper bound on the total query complexity of \Cref{alg:mainalg}. Observe that the algorithm performs a total of at most $3R + |S_{can}| \leq 5R + 20 = \order(R)$ quantile and revenue estimations, where $R$ denotes the number of iterations. By \Cref{lma:round-bound}, $R$ is upper-bounded by $\widetilde R = \order(\log^2(r/\ell) + \log(r/\ell) \cdot \log \veps^{-1} + \log \veps^{-1})$. Furthermore, each revenue or quantile estimation requires at most $\order(\log (\widetilde R/\delta) \cdot \gamma^{-1} \cdot \varepsilon^{-2})$ queries. Consequently, the total number of pricing queries is bounded by
\begin{align*}
    &\order(\widetilde R \cdot \log (\widetilde R/\delta) \cdot \gamma^{-1} \cdot \varepsilon^{-2})\\
    &= \order(\gamma^{-1} \cdot \varepsilon^{-2} \left(\log^2 (r/\ell)+ \log^2 (1/\veps)\right) \left[ \log \left(\log^2 (r/\ell) + \log^2 (1/\veps)\right) + \log (1/\delta))\right] ) \\
    &= \order(\gamma^{-1} \cdot \varepsilon^{-2}\log(1/\delta)\log^4(r/\veps\ell))
\end{align*}

    Next, we bound the expected revenue achieved by $p^\alg$. It is sufficient to show \Cref{eq:mainalg} under the assumption that \Cref{clm:est-error-bound} holds. Our proof relies on the following three key prices: we define
    \begin{align*}
        p^{(1)} := \arg \max_{p\in [\ell, r]: \quantile(p) \geq \gamma} \myrev{p}, ~~ \text{and} ~~ p^{(2)} := \ell \cdot (1 + \varepsilon)^{\lfloor \frac{\log(p^{(1)}/\ell)}{\log (1 + \varepsilon)} \rfloor}, ~~ \text{and} ~~ p^{(3)} := \arg \max_{p \in S_{can}} \myrev{p}.
    \end{align*}
    Then, our target becomes showing $\myrev{p^\alg} \geq (1 - 5 \varepsilon) \cdot \myrev{p^{(1)}}$. We instead prove  the following inequalities:
    \begin{align}
        \myrev{p^{(2)}} ~&\geq~ (1 - \varepsilon) \cdot \myrev{p^{(1)}},  \label{eq:ineq1} \\
        \myrev{p^{(3)}} ~&\geq~ (1 - 2\varepsilon) \cdot \myrev{p^{(2)}},  \label{eq:ineq2} \\
        \myrev{p^\alg} ~&\geq~ (1 - 2\varepsilon) \cdot \myrev{p^{(3)}}.  \label{eq:ineq3}
    \end{align}
    Then, combining the above inequalities together proves $\myrev{p^\alg} \geq (1 - 5 \varepsilon) \cdot \myrev{p^{(1)}}$. Next, we prove the above inequalities \Cref{eq:ineq1}, \Cref{eq:ineq2}, and \Cref{eq:ineq3}.

\xhdr{Proving \Cref{eq:ineq1}}  By the definition of $p^{(2)}$,  we know that $ p^{(2)} \leq p^{(1)} \leq (1 + \varepsilon) \cdot p^{(2)}$. Then, we have
\begin{align*}
    \myrev{p^{(2)}}  ~=~ \quantile(p^{(2)}) \cdot p^{(2)} ~\geq~ \quantile(p^{(1)}) \cdot \frac{p^{(1)}}{1 + \varepsilon} ~\geq~ (1 - \varepsilon) \cdot \myrev{p^{(1)}}.
\end{align*}

\xhdr{Proving \Cref{eq:ineq2}} If $p^{(2)} \in S_{can}$, then $\myrev{p^{(3)}} \geq \myrev{p^{(2)}} \geq (1-2\veps)\myrev{p^{(2)}}$ holds.  Otherwise, let $i$ be the smallest index such that $p^{(2)} \notin S_{i+1}$. Then, there must be one of the following three cases:
\begin{itemize}
    \item Case 1: $\estquantile(b_i) < 0.75\gamma$ while $p^{(2)} \in (b_i, r_i]$.
    \item Case 2: $ (1 + \varepsilon) \cdot \estmyrev{a_i} < (1 - \varepsilon) \cdot \estmyrev{b_i}$ while $p^{(2)} \in [\ell_i, a_i)$
    \item Case 3:  $ (1 + \varepsilon) \cdot \estmyrev{a_i} \geq (1 - \varepsilon) \cdot \estmyrev{b_i}$ while $p^{(2)} \in (b_i, r_i]$.
\end{itemize}

We first show that Case 1 can not happen. When $\estquantile(b_i) < 0.75 \gamma$, note that \Cref{clm:est-error-bound} implies that there must be $\quantile(b_i) < 2\gamma$, as otherwise we have $\estquantile(b_i) \geq 0.5 \quantile(b_i) \geq \gamma$, which is in contrast to the assumption that $\estquantile(b_i) < 0.75 \gamma$. However, when $\quantile(b_i) < 2\gamma$ holds, \Cref{clm:est-error-bound} guarantees that there must be $\quantile(b_i) < 0.75\gamma + 0.25\gamma = \gamma$. Then, the condition $\quantile(p^{(1)}) \geq \gamma$ implies  $p^{(1)} \leq b_i$, contradicting the assumption that $p^{(1)} \geq p^{(2)} > b_i$.

For Case 2, we aim to show that the conditions in Case 2 implies that
$\myrev{p^{(2)}} \leq \myrev{a_i}$, which further implies $\myrev{p^{(2)}} \leq \myrev{p^{(3)}}$ as $a_i \in S_{can}$ due to line 12 of \Cref{alg:mainalg}. Specifically, as $ (1 + \varepsilon) \cdot \estmyrev{a_i} < (1 - \varepsilon) \cdot \estmyrev{b_i}$ happens, \Cref{clm:est-error-bound} guarantees that there must be $\myrev{a_i} < \myrev{b_i}$ as
\begin{align*}
    \myrev{a_i} \leq \frac{1}{1-\veps}\estmyrev{a_i} < \frac{1}{1+\veps}\estmyrev{b_i} \leq \myrev{b_i}.
\end{align*}
Since function $\myrev{p}$ is unimodal, function $\myrev{p}$ must be monotonically increasing on $[\ell_i, a_i)$, and therefore the condition $p^{(2)} < a_i$ implies $\myrev{p^{(2)}} < \myrev{a_i}\leq \myrev{p^{(3)}}$.

For the remaining Case 3, we bound the difference between $\myrev{a_i}$ and $\myrev{p^{(2)}}$ via half-concavity. Let
\[
p^* \triangleq \argmax_{p\in [\ell, r]} \myrev{p} 
\]
be the optimal monopoly price in $[\ell, r]$. Then, there must be $p^{(2)} \leq p^{(1)} \leq p^*$, as adding an extra winning probability lower-bound could only decrease the optimal monopoly price, and therefore \Cref{lma:half-concave} guarantees that function $\myrev{p}$ is concave on $[\ell_i, p^{(2)}]$. We have
\begin{align*}
    \myrev{p^{(2)}} - \myrev{a_i} ~&\leq~ \frac{p^{(2)} - a_i}{b_i - a_i} \cdot \left(\myrev{b_i} - \myrev{a_i} \right) \\
    ~&\leq~ \frac{r_i - \big(\ell_i + 0.2 \cdot (r_i - \ell_i)\big)}{\big(\ell_i + 0.5 \cdot (r_i - \ell_i) \big) - \big( \ell_i + 0.3 \cdot (r_i - \ell_i)\big)} \cdot \left(\frac{\estmyrev{b_i}}{1 - \varepsilon} - \frac{\estmyrev{a_i}}{1 + \varepsilon} \right) \\
    ~&\leq~ 4 \cdot \estmyrev{a_i} \cdot \left(\frac{1 + \varepsilon}{(1 - \varepsilon)^2}  - \frac{1}{1 + \varepsilon}\right) \\
    ~&\leq~ 4 \cdot \myrev{a_i} \cdot (1 + \varepsilon) \cdot \left(\frac{1 + \varepsilon}{(1 - \varepsilon)^2}  - \frac{1}{1 + \varepsilon}\right) ~\leq~ 2\varepsilon \cdot \myrev{a_i},
\end{align*}
where the first line follows the concavity of function $\myrev{p}$; the second line uses \Cref{clm:ai-bi-bound} to bound the range of $a_i, b_i$, and \Cref{clm:est-error-bound} to bound the value of $\myrev{a_i}, \myrev{b_i}$; the third line uses the assumption $ (1 + \varepsilon) \cdot \estmyrev{a_i} \geq (1 - \varepsilon) \cdot \estmyrev{b_i}$ from Case 3; the last line uses \Cref{clm:est-error-bound} to bound $\estmyrev{a_i}$ and the assumption that $\varepsilon \leq 0.1$ to bound the final constant. Rearranging the above inequality and applying the fact that $\myrev{a_i} \leq \myrev{p^{(3)}}$  proves \Cref{eq:ineq2}.

\xhdr{Proving \Cref{eq:ineq3}} By \Cref{clm:est-error-bound}, we have
\begin{align*}
    \myrev{p^\alg} ~\geq~ \frac{\estmyrev{p^\alg}}{1 + \varepsilon} ~\geq~  \frac{\estmyrev{p^{(3)}}}{1 + \varepsilon} ~\geq~ \frac{1 - \varepsilon}{1 + \varepsilon} \cdot \myrev{p^{(3)}} ~\geq~ (1 - 2\varepsilon) \cdot \myrev{p^{(3)}},
\end{align*}
where the second inequality uses the fact that $p^\alg$ maximizes the estimated revenue among prices in $S_{can}$.
\end{proof}

%% file: upper.tex
In this section, we apply \Cref{thm:mainalg} to establish pricing query complexity upper bounds for both MHR and general regular distributions under various hint models. Specifically, we instantiate \Cref{alg:mainalg} with the appropriate price search interval $[\ell, r]$ and the winning probability lower bound $\gamma$ for each settings. We first present our results assuming a value-range hint in \Cref{subsec:value-range-upper}, and subsequently extend our analysis to the one-sample hint model in \Cref{subsec:single-sample-upper}. In all cases, our objective is to determine the number of pricing queries sufficient to identify a monopoly price that achieves a $(1 - \order(\veps))$-approximation of the optimal revenue with probability at least $1 - \delta$.

\subsection{Value Range Hint}\label{subsec:value-range-upper}

We begin by studying the pricing query complexity for value distributions supported on $[1, H]$, starting with the general class of regular distributions. Specifically, we demonstrate that by initializing \Cref{alg:mainalg} with appropriate choices for the search bounds $\ell, r$ and the winning probability lower bound $\gamma$, we achieve a pricing query complexity of $\otil(H/\veps^2)$.

\begin{theorem}
\label{thm:regular-value-hint-upper}
    For any regular distribution $F$ supported on $[1, H]$, \Cref{alg:mainalg} executed with $\ell=1$, $r=H$, $\gamma=1/H$, $\veps$, and $\delta$ outputs a price $\price^{\alg}$ such that
    \[
    \Pr{\myrev{\price^{\alg}} ~\geq~ (1 - 5\veps) \cdot \myrev{\optprice}} ~\geq~ 1- \delta,
    \]
    using a total of $\order\left(\frac{H}{\veps^2} \cdot \log^4\frac{H}{\veps}\log \delta^{-1}\right)$ pricing queries.
\end{theorem}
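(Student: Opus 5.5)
The plan is to apply \Cref{thm:mainalg} directly with $\ell = 1$, $r = H$ and $\gamma = 1/H$. Two things need to be checked: that the hypothesis $\quantile(\ell)\ge\gamma$ holds, and that the constrained benchmark $\max_{p\in[1,H]:\quantile(p)\ge 1/H}\myrev{p}$ on the right-hand side of \eqref{eq:mainalg} coincides with the unconstrained optimum $\myrev{\optprice}$.

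\textbf{Step 1: the hypothesis of \Cref{thm:mainalg}.} Since $F$ is supported on $[1,H]$, we have $\quantile(1) = \prob{\val\ge 1} = 1 \ge 1/H = \gamma$.

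\textbf{Step 2: the benchmark.} I would show that $\optprice$ can be taken to lie in $[1,H]$ with $\quantile(\optprice)\ge 1/H$.
\begin{itemize}
\item Location of $\optprice$: any price $p<1$ earns $p<1=\myrev{1}$, so it is suboptimal. Any price $p>H$ earns $0$. Hence an optimal price lies in $[1,H]$.
\item Sale probability at $\optprice$: $\opt\ge\myrev{1}=1$, and $\opt=\optprice\,\quantile(\optprice)\le H\,\quantile(\optprice)$. Together these give $\quantile(\optprice)\ge 1/H$.
\end{itemize}
So $\optprice$ is feasible for the constrained maximization, and the constrained maximum is at least $\myrev{\optprice}$. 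It can never exceed $\opt$, so the two are equal. Consequently \eqref{eq:mainalg} reads exactly $\myrev{\price^\alg}\ge(1-5\veps)\myrev{\optprice}$ with probability at least $1-\delta$.

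One subtlety is the endpoint $p=H$ when $F$ has an atom at $H$. This is harmless: the only facts used are $\quantile(1)=1$ and $\opt\le H\,\quantile(\optprice)$, and both hold regardless of the CDF convention.

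\textbf{Step 3: query count.} Substituting into the bound of \Cref{thm:mainalg} gives $\gamma^{-1}=H$ and $\log^4(r/\veps\ell)=\log^4(H/\veps)$. The total is therefore $\order(H\veps^{-2}\log^4(H/\veps)\log\delta^{-1})$.

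There is no real obstacle here. The only step requiring thought is Step 2: the bound $\opt\ge 1$ from pricing at $1$, combined with $p\le H$, is exactly what makes the choice $\gamma=1/H$ lossless.
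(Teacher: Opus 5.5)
Your proposal is correct and follows essentially the same route as the paper. The key step is that $\opt \ge \myrev{1} = 1$ and $\optprice \le H$ force $\quantile(\optprice) \ge 1/H$, so the choice $\gamma = 1/H$ loses nothing; the paper phrases this as a contradiction. You then substitute $\ell = 1$, $r = H$, $\gamma^{-1} = H$ into \Cref{thm:mainalg}, as the paper does, and your explicit check of the hypothesis $\quantile(1) = 1 \ge \gamma$, which the paper leaves implicit, is a welcome addition.
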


\begin{proof}
    We first demonstrate the feasibility of these parameter settings. The choice of search bounds $\ell = 1$ and $r = H$ is valid given that the distribution $F$ is supported on $[1, H]$. Next, we justify the choice of $\gamma = 1/H$ by showing that the optimal price satisfies $\quantile(\optprice) \geq 1/H$, which implies that restricting the search to prices with winning probability at least $1/H$ preserves the optimal revenue (i.e., $\max_{p\in [1,H]: \quantile(p) \geq 1/H} \myrev{p} = \max_{p\in [1,H]} \myrev{p}$).

    To see this, suppose for the sake of contradiction that $\quantile(\optprice) < 1/H$. Then, the optimal revenue would be bounded by:
    \[
        \myrev{\optprice} ~=~ \optprice \cdot \quantile(\optprice) ~<~ H \cdot H^{-1} ~=~ 1.
    \]
    However, since the distribution is supported on $[1, H]$, the revenue at price $p=1$ is exactly $\myrev{1} = 1 \cdot \quantile(1) = 1$. Thus, $\myrev{\optprice} < \myrev{1}$, which contradicts the optimality of $\optprice$. Therefore, we must have $\quantile(\optprice) \geq 1/H$.
    
    Finally, we determine the pricing query complexity. Applying \Cref{alg:mainalg} with parameters $\ell = 1$, $r = H$, and $\gamma = H^{-1}$, and substituting these values into the general bound of \Cref{thm:mainalg} immediately yields the query complexity stated in \Cref{thm:regular-value-hint-upper}.
\end{proof}

Next, we demonstrate that for MHR distributions, the pricing query complexity can be improved by a factor of $H$ by leveraging \Cref{lma:mhr-large-winning-prob}. Specifically, we establish the following theorem:
\begin{theorem}
\label{thm:mhr-value-hint-upper}
    For any MHR distribution supported on $[1, H]$, \Cref{alg:mainalg}, executed with parameters $\ell = 1$, $r = H$, $\gamma = 1/e$, $\veps$, and $\delta$, outputs a price $\price^{\alg}$ such that
    \[
    \Pr{\myrev{\price^{\alg}} ~\geq~ (1 - 5\veps) \cdot \myrev{\optprice}} ~\geq~ 1- \delta,
    \]
    using a total of $\order\left(\frac{1}{\veps^2} \cdot \log^4 \frac{H}{\veps} \log \delta^{-1}\right)$ pricing queries.
\end{theorem}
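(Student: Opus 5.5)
The proof will mirror that of \Cref{thm:regular-value-hint-upper}: apply \Cref{thm:mainalg} with $\ell=1$, $r=H$, $\gamma=1/e$, after checking that these parameters are admissible and that the constrained benchmark on the right-hand side of \Cref{eq:mainalg} equals the unconstrained optimum $\myrev{\optprice}$.

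\textbf{Feasibility.} Since $F$ is supported on $[1,H]$, we have $\quantile(1)=1\ge 1/e$. This is exactly the precondition $\quantile(\ell)\ge\gamma$ of \Cref{thm:mainalg}. Every MHR distribution is regular, because a nondecreasing hazard rate makes $(1-F)/f$ nonincreasing and hence the virtual value increasing. So the half-concavity and unimodality used inside \Cref{thm:mainalg} remain available.

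\textbf{Benchmark identification.} We need $\max_{p\in[1,H]:\,\quantile(p)\ge 1/e}\myrev{p}=\myrev{\optprice}$. First, $\optprice$ may be taken in $[1,H]$. Any price below $1$ earns revenue at most $\myrev{1}$, since the sale probability is $1$ there. Any price above $H$ earns zero. Second, by \Cref{lma:mhr-large-winning-prob}, $\quantile(\optprice)\ge 1/e$, so $\optprice$ lies in the constrained feasible set. The constrained maximum is therefore at least $\myrev{\optprice}$, and trivially at most it.

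\textbf{Main subtlety.} \Cref{lma:mhr-large-winning-prob} is usually stated for MHR distributions on $\mathbb{R}_+$. We must check it still applies when the support is $[1,H]$. The standard argument only uses log-concavity of $1-F$ on the support together with first-order optimality, so it carries over. If there is an atom issue at the boundary, we fall back on the paper's convention $F(p)=\Pr[v<p]$, which gives $\quantile(1)=1$.

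\textbf{Query count.} Substituting $\gamma^{-1}=e=\order(1)$ and $r/\ell=H$ into the bound of \Cref{thm:mainalg} gives $\order(\veps^{-2}\log(1/\delta)\log^4(H/\veps))$ queries. The success guarantee then follows directly from \Cref{eq:mainalg}.
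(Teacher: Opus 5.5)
Your proposal is correct and follows the paper's proof. Like the paper, you instantiate \Cref{thm:mainalg} with $\ell=1$, $r=H$, $\gamma=1/e$, use \Cref{lma:mhr-large-winning-prob} to show the constrained benchmark equals $\myrev{\optprice}$, and substitute $\gamma^{-1}=e=\order(1)$; your extra checks ($\quantile(1)=1\ge\gamma$, MHR implies regular, and $\optprice\in[1,H]$) simply spell out details the paper leaves implicit.
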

\begin{proof}
    We invoke \Cref{alg:mainalg} with parameters $\ell = 1$, $r = H$, and $\gamma = 1/e$. The choice of search bounds $\ell = 1$ and $r = H$ is justified by the assumption that the distribution $F$ is supported on $[1, H]$. The selection of the winning probability lower bound $\gamma = 1/e$ is validated by \Cref{lma:mhr-large-winning-prob}, which guarantees that $\quantile(\optprice) \geq 1/e$ for MHR distributions. Consequently, substituting $\gamma^{-1} = e = \order(1)$ into the general complexity bound of \Cref{thm:mainalg} yields the result stated in \Cref{thm:mhr-value-hint-upper}.
\end{proof}
\subsection{One-Sample Hint}\label{subsec:single-sample-upper}

In this subsection, we investigate the pricing query complexity for regular and MHR distributions given a {one-sample hint}. Our objective remains to determine the appropriate input parameters for \Cref{alg:mainalg}. However, unlike the value-range setting, we do not have access to explicit support bounds $[1, H]$ to set $\ell$ and $r$. To address this, we show that a single sample $s$ drawn from the distribution allows us to construct a confidence interval that contains a $(1-\veps)$-optimal price with high probability. Formally, our analysis relies on the following lemma:

\begin{lemma}
    \label{lma:single-sample-optprice-range}
    Let $F$ be a regular distribution with optimal monopoly price $\optprice$. For a fixed parameter $\veps \in (0, 0.1]$, define the target price $\widetilde{p}$ as:
    \[
        \widetilde{p} =
        \begin{cases}
            \optprice & \text{if } \quantile(\optprice) \geq \veps, \\
            p \text{ such that } \quantile(p) = \veps & \text{if } \quantile(\optprice) < \veps.
        \end{cases}
    \]
    Then, the revenue at $\widetilde{p}$ satisfies $\myrev{\widetilde{p}} \geq (1 - \veps) \cdot \myrev{\optprice}$. Furthermore, if $s \sim F$ is a single sample drawn from the distribution, then with probability at least $1 - \delta/2$ (over the randomness of $s$), we have:
     \[
     \Pr{\frac{\delta s}{8} ~\leq~ \widetilde p ~\leq~ \frac{4s}{\delta \cdot \veps}} ~\geq~ 1 - \frac{\delta}{2}.
     \]
\end{lemma}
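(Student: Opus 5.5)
The proof has two parts: a revenue guarantee for $\widetilde p$, and a localization of $\widetilde p$ relative to the sample $s$. Both work in quantile space, using concavity of $\myrev[\quantile]{\cdot}$ for regular distributions (\Cref{lma:concavity}).

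\emph{Revenue guarantee.} If $\quantile(\optprice)\ge\veps$ there is nothing to prove. Otherwise write $q^*=\quantile(\optprice)<\veps$. Concavity of $\myrev[\quantile]{\cdot}$ on $[q^*,1]$, together with $\myrev[\quantile]{1}\ge 0$, gives for any $q\in[q^*,1]$
\[
\myrev[\quantile]{q}\ \ge\ \frac{1-q}{1-q^*}\,\myrev[\quantile]{q^*}\ \ge\ (1-q)\,\opt .
\]
Taking $q=\veps$ yields $\myrev{\widetilde p}\ge(1-\veps)\opt$. The price with quantile exactly $\veps$ exists because $F$ is continuous (it has a density).

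\emph{Localization.} Let $\widetilde q=\quantile(\widetilde p)=\max(q^*,\veps)\ge\veps$. Since $\quantile(s)$ is uniform on $[0,1]$, with probability at least $1-\delta/2$ we have $\quantile(s)\in[\delta/4,\,1-\delta/4]$. The two inequalities follow on this event.

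For the lower bound $\widetilde p\ge\delta s/8$, it suffices to have $\widetilde p\ge \delta s/4$. There are two cases.
\begin{itemize}
\item If $s\le\widetilde p$, the bound is trivial.
\item If $s>\widetilde p\ge\optprice$, then $\quantile(s)\le\widetilde q$, and optimality of $\optprice$ gives $s\,\quantile(s)\le \opt$.
\end{itemize}
In the second case, if $\widetilde p=\optprice$ then $s\le \opt/\quantile(s)\le \optprice\,q^*/(\delta/4)\le 4\optprice/\delta$. If $\widetilde p\neq\optprice$ (so $\widetilde q=\veps$), then $s\le \opt/(\delta/4)$, and the revenue guarantee gives $\opt\le\widetilde p\,\veps/(1-\veps)\le \widetilde p$. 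Either way $s\le 4\widetilde p/\delta$.

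For the upper bound $\widetilde p\le 4s/(\delta\veps)$, again split into two cases.
\begin{itemize}
\item If $s\ge\widetilde p$, it is immediate.
\item If $s<\widetilde p$, then $\quantile(s)\ge\widetilde q$. Apply the concavity inequality on $[\widetilde q,1]$, which lies to the right of $q^*$ so $\myrev[\quantile]{\cdot}$ is decreasing there: with $q=\quantile(s)\le 1-\delta/4$ we get $s\,\quantile(s)\ge \frac{1-q}{1-\widetilde q}\,\widetilde p\,\widetilde q\ge (\delta/4)\,\widetilde p\,\veps$.
\end{itemize}
Since $\quantile(s)\le 1$, this gives $s\ge \delta\veps\widetilde p/4$, i.e.\ $\widetilde p\le 4s/(\delta\veps)$.

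The main obstacle is the upper-side case: a sample far below $\widetilde p$ must still carry enough revenue. This is exactly where the quantile floor $\veps$ and the $(1-q)$ decay from concavity combine to produce the $1/(\delta\veps)$ factor. A minor technical point is ties or atoms at $s$; I would handle them by noting that $F$ is continuous, so $\quantile(s)$ is exactly uniform.
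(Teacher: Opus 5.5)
Your proof is correct and follows essentially the paper's route: condition on the probability-$(1-\delta/2)$ event $\quantile(s)\in[\delta/4,1-\delta/4]$ and use concavity of $\myrev[\quantile]{\cdot}$ to compare $\myrev{s}$, $\myrev{\widetilde p}$ and $\opt$. Anchoring the chord at $\widetilde q$ rather than $\quantile(\optprice)$ is a cosmetic variant, and your lower-bound argument even gives $\delta s/4$.

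One slip to fix: in the lower-bound case you write $s>\widetilde p\ge\optprice$, but when $\quantile(\optprice)<\veps$ one has $\widetilde p<\optprice$. That inequality is never used; you only need $\quantile(s)\ge\delta/4$, $s\,\quantile(s)\le\opt$ and $\opt\le\widetilde p$, so the case split there can simply be dropped.
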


This lemma implies that upon receiving a single sample $s$, we can set the search bounds to $\ell = \frac{\delta s}{8}$ and $r=\frac{4s}{\delta \veps}$ to guarantee that, with probability at least $1 - \delta/2$, the interval $[\ell, r]$ contains a price $\widetilde{p}$ that achieves a $(1-\veps)$-approximation of the optimal revenue. To prove \Cref{lma:single-sample-optprice-range}, we first introduce \Cref{clm:sample-rev-lower-bound}, which provides lower bounds on the revenue of prices based on their winning probability.

\begin{claim}
    \label{clm:sample-rev-lower-bound}
    Let $F$ be a regular distribution with optimal monopoly price $\optprice$. For any price $p \leq \optprice$, we have
    \[
    \myrev{p} ~\geq~ \myrev{\optprice} \cdot \big( 1 - \quantile(p) \big).
    \]
    Symmetrically, for any price $p \geq \optprice$, we have
    \[
    \myrev{p} ~\geq~ \myrev{\optprice} \cdot \quantile(p).
    \]
\end{claim}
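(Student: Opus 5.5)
We will work in quantile space and rely on \Cref{lma:concavity}: for a regular distribution the revenue curve $\myrev[\quantile]{\cdot}$ is concave on $[0,1]$. Write $\quantile^* = \quantile(\optprice)$, so that $\myrev[\quantile]{\quantile^*} = \myrev{\optprice}$. Since the quantile map is nonincreasing in price, $p \le \optprice$ corresponds to $\quantile(p) \ge \quantile^*$, and $p \ge \optprice$ corresponds to $\quantile(p) \le \quantile^*$. Both inequalities then reduce to chord bounds for a concave function between $\quantile^*$ and an endpoint of $[0,1]$ where the curve is nonnegative.

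\textbf{Case $p \ge \optprice$.} Let $q = \quantile(p) \in [0, \quantile^*]$. We use the endpoint $0$, where $\myrev[\quantile]{0} \ge 0$ (revenue is nonnegative; when $\quantile^*=0$ the claim is trivial). By concavity on $[0,\quantile^*]$,
\[
\myrev[\quantile]{q} \ge \frac{q}{\quantile^*}\,\myrev[\quantile]{\quantile^*} + \Bigl(1-\frac{q}{\quantile^*}\Bigr)\myrev[\quantile]{0} \ge \frac{q}{\quantile^*}\,\myrev{\optprice} \ge q\cdot\myrev{\optprice},
\]
where the last step uses $\quantile^* \le 1$.

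\textbf{Case $p \le \optprice$.} Let $q = \quantile(p) \in [\quantile^*, 1]$. We use the endpoint $1$, where $\myrev[\quantile]{1} \ge 0$. Concavity gives
\[
\myrev[\quantile]{q} \ge \frac{1-q}{1-\quantile^*}\,\myrev{\optprice} \ge (1-q)\,\myrev{\optprice},
\]
since $1-\quantile^* \le 1$. If $\quantile^* = 1$, then $q = 1$ and the bound is trivial.

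The main technical care is the identification of $\myrev{p}$ with $\myrev[\quantile]{\quantile(p)}$, which is justified because $F$ is continuous with a density. The second case needs no extra care: if $\quantile(p)=1$ the right-hand side is $0$, so the bound holds. Likewise, flat pieces of $F$, where several prices share one quantile, cause no issue: such a price is dominated by the largest price with that quantile, and the bound is stated in terms of the quantile alone.
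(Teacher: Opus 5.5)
Your proof is correct and is essentially the paper's argument: both cases are chord bounds for the concave quantile-space revenue curve between $\quantile(\optprice)$ and the endpoint $1$ (resp.\ $0$). In each case you drop the nonnegative endpoint revenue and then use $1-\quantile(\optprice)\le 1$ (resp.\ $\quantile(\optprice)\le 1$). One minor quibble: your closing justification for flat pieces of $F$ points the wrong way, since being dominated by another price gives an upper, not a lower, bound on $\myrev{p}$. It is moot, however, because concavity of the quantile-space curve rules out flat pieces at interior quantiles, and at quantiles $0$ or $1$ the inequality holds trivially.
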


\begin{proof}
    We first prove \Cref{clm:sample-rev-lower-bound} for $p \leq \optprice$, which also implies $\quantile(p) \geq \quantile(\optprice)$. By the regularity of $F$. we know that $\myrev[\quantile]{q}$ is concave in $q$. Therefore, we know that
    \begin{align*}
        \myrev[\quantile]{\quantile(\optprice)} - \myrev[\quantile]{1} ~\leq~ \frac{1 - \quantile(\optprice)}{1 - \quantile(p)} \cdot \big(\myrev[\quantile]{\quantile(p)} - \myrev[\quantile]{1} \big).
    \end{align*}
    Rearranging the above inequality gives
    \begin{align*}
        \myrev{p} ~=~ \myrev[\quantile]{\quantile(p)} ~\geq&~  \frac{1 - \quantile(p)}{1 - \quantile(\optprice)} \cdot  \myrev[\quantile]{\quantile(\optprice)} + \left(1 -   \frac{1 - \quantile(p)}{1 - \quantile(\optprice)}\right) \cdot \myrev[\quantile]{1} \\
        ~\geq&~ \big( 1 - \quantile(p) \big)  \cdot  \myrev[\quantile]{\quantile(\optprice)} + 0 ~=~ \big( 1 - \quantile(p) \big)  \cdot \myrev{\optprice},
    \end{align*}
    where the first line uses the fact that $1 - \quantile(p) \leq 1 - \quantile(\optprice)$.

    Symmetrically, when $p \geq \optprice$, implying $\quantile(p) \leq \quantile(\optprice)$, the regularity of $F$ again yields
    \begin{align*}
        \myrev[\quantile]{\quantile(\optprice)} - \myrev[\quantile]{0} ~\leq~ \frac{\quantile(\optprice)}{\quantile(p)} \cdot \big(\myrev[\quantile]{\quantile(p)} - \myrev[\quantile]{0} \big).
    \end{align*}
    Applying the fact that $\myrev[\quantile]{0} = 0$ and rearranging the above inequality gives
    \[
    \myrev{p} ~=~ \myrev[\quantile]{\quantile(p)} ~\geq~ \frac{q(p)}{q(\optprice)} \cdot \myrev[\quantile]{\quantile(\optprice)} \geq \quantile(p) \cdot \myrev{\optprice}. \qedhere
    \]
\end{proof}

Now, we are ready to prove \Cref{lma:single-sample-optprice-range} via \Cref{clm:sample-rev-lower-bound}.

\begin{proof}[Proof of \Cref{lma:single-sample-optprice-range}]
    We first prove the revenue lower bound for $\widetilde p$. The inequality
     \begin{align}
         \myrev{\widetilde p} ~\geq~ (1 - \veps) \cdot \myrev{\optprice} \label{eq:wtp-rev-lower}
     \end{align}
    trivially holds when $\widetilde p = \optprice$. Otherwise, we have $\quantile(\optprice) < \veps = \quantile(\widetilde p)$. Then, we have $\optprice\geq \widetilde p$ and according to the first argument in \Cref{clm:sample-rev-lower-bound}, we know that $\myrev{\widetilde p} ~\geq~ (1 - \veps) \cdot \myrev{\optprice}$.

    Now, we show the range bound for $\widetilde p$. Note that when $s \sim F$, $q(s)$ is a random variable sampled from $[0,1]$ and we have
    \begin{align*}
        \Pr{\frac{\delta}{4} ~\leq~ \quantile(s) ~\leq~ 1 - \frac{\delta}{4}} ~\geq~ 1 - \frac{\delta}{2}.
    \end{align*}
    Then, it suffices to show that when $\quantile(s) \in [\nicefrac{\delta}{4}, 1 - \nicefrac{\delta}{4}]$ holds, there must be $\widetilde p \in [\nicefrac{\delta s}{8}, \nicefrac{4s}{\delta \cdot \veps}]$.

    We first prove the lower bound via contradiction. When $\widetilde p < \nicefrac{\delta s}{8}$, we have
    \begin{align*}
        \myrev{\widetilde p} ~<~ \frac{\delta s}{8}\cdot 1 ~=~ \frac{s}{2} \cdot \frac{\delta}{4} ~\leq~ (1 - \veps) \cdot \myrev{s} ~\leq~ (1 - \veps) \cdot \myrev{\optprice},
    \end{align*}
    where the second inequality uses the assumption that $\veps \leq 0.1$ and $\quantile(s) \geq \nicefrac{\delta}{4}$. Since the above inequality is in contrast to \eqref{eq:wtp-rev-lower}, there must be $\widetilde p \geq \nicefrac{\delta s}{8}$. 

    Next, we prove the upper bound also via contradiction. When $\widetilde p > \nicefrac{4s}{\delta \cdot \veps}$, we have
    \begin{align*}
        \myrev{\optprice} ~\geq~ \myrev{\widetilde p} ~>~ \frac{4s}{\delta \cdot \veps} \cdot \veps ~\geq~ \frac{4}{\delta} \cdot \myrev{s},
    \end{align*}
    where the second inequality uses the fact that $\quantile(\widetilde p) \geq \veps$. However, on the other hand, the assumption $\quantile(s) \in [\nicefrac{\delta}{4}, 1 - \nicefrac{\delta}{4}]$ together with \Cref{clm:sample-rev-lower-bound} implies
    \[
    \myrev{s} ~\geq~ \frac{\delta}{4} \cdot \myrev{\optprice}.
    \]
    Since the above two inequalities can't hold simultaneously, there must be $\widetilde p \leq \nicefrac{4s}{\delta \cdot \veps}$.
\end{proof}

Equipped with \Cref{lma:single-sample-optprice-range}, we proceed to provide the pricing query complexity upper bound for general regular distributions given a one-sample hint. Specifically, we prove the following:

\begin{theorem}
\label{thm:regular-single-sample-upper}
    Given any $\veps\in(0,0.1]$, for any regular distribution $F$, the algorithm that observes a single sample $s \sim F$ and invokes \Cref{alg:mainalg} with parameters $\ell = \frac{\delta s}{8}$, $r = \frac{4s}{\delta \veps}$, $\gamma = \veps$, failure parameter $\delta/2$, and error parameter $\veps$ outputs a price $\optprice$ such that
    \[
    \Pr{\myrev{p^{\alg}} ~\geq~ (1 - 6\veps) \cdot \myrev{\optprice}} ~\geq~ 1- \delta,
    \]
    using a total of $\order\left(\veps^{-3} \cdot \left( \log^4 \veps^{-1} \log \delta^{-1} + \log^5 \delta^{-1} \right) \right)$ pricing queries.
\end{theorem}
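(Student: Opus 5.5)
The plan is to combine \Cref{lma:single-sample-optprice-range} with \Cref{thm:mainalg} via a union bound over two failure events, each of probability at most $\delta/2$. Fix $\veps\in(0,0.1]$ and draw $s\sim F$. Let $\event_1$ be the event that $\widetilde p\in[\delta s/8,\,4s/(\delta\veps)] = [\ell,r]$. By \Cref{lma:single-sample-optprice-range}, $\event_1$ holds with probability at least $1-\delta/2$ over $s$, and deterministically $\myrev{\widetilde p}\ge(1-\veps)\myrev{\optprice}$.

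Condition on $s$ with $\event_1$. First I would check the hypotheses of \Cref{thm:mainalg} with $\gamma=\veps$. The theorem needs $\quantile(\ell)\ge\gamma$. Since $\ell\le\widetilde p$ and $\quantile$ is nonincreasing, we get $\quantile(\ell)\ge\quantile(\widetilde p)\ge\veps$. This uses that by construction $\quantile(\widetilde p)\ge\veps$ in both cases of the definition of $\widetilde p$. The same fact shows $\widetilde p$ is feasible for the constrained maximum in \eqref{eq:mainalg}, since $\widetilde p\in[\ell,r]$ and $\quantile(\widetilde p)\ge\gamma$. So \Cref{thm:mainalg}, run with failure parameter $\delta/2$, gives with conditional probability at least $1-\delta/2$ (over the queries) that
\[
\myrev{p^\alg}\ge(1-5\veps)\myrev{\widetilde p}\ge(1-5\veps)(1-\veps)\myrev{\optprice}\ge(1-6\veps)\myrev{\optprice}.
\]
A union bound then gives total failure probability at most $\delta$. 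Note that \Cref{thm:mainalg} applies for any fixed input interval, so conditioning on $s$ is legitimate: the queries are fresh and independent of $s$.

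For the query count, substitute $\gamma^{-1}=\veps^{-1}$ and $r/\ell = 32/(\delta^2\veps)$ into \Cref{thm:mainalg} with failure parameter $\delta/2$. This gives
\[
\order\!\left(\veps^{-3}\log(1/\delta)\log^4\!\big(32/(\delta^2\veps^2)\big)\right).
\]
Since $\log(32/(\delta^2\veps^2))=\order(\log\veps^{-1}+\log\delta^{-1})$, we have $\log^4(\cdot)=\order(\log^4\veps^{-1}+\log^4\delta^{-1})$. Multiplying by $\log\delta^{-1}$ gives $\order(\log^4\veps^{-1}\log\delta^{-1}+\log^5\delta^{-1})$, which is the stated bound.

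The only delicate step is verifying $\quantile(\ell)\ge\gamma$, because this precondition is what lets \Cref{thm:mainalg} charge only $\gamma^{-1}$ per estimate. It rests on $\ell\le\widetilde p$ and monotonicity of the quantile. A minor point: if $F$ has atoms, a price with $\quantile(p)=\veps$ exactly may not exist. I would assume continuity, as the paper does for regular $F$ with a density. The rest is routine bookkeeping.
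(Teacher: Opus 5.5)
Your proposal is correct and follows essentially the same route as the paper. It takes a union bound over the event $\widetilde p\in[\ell,r]$ from \Cref{lma:single-sample-optprice-range} and the success event of \Cref{thm:mainalg} run with failure parameter $\delta/2$, which gives $(1-5\veps)(1-\veps)\ge 1-6\veps$, and then substitutes $\gamma=\veps$ and $r/\ell=32/(\delta^2\veps)$ into the query bound. Your explicit check of the precondition $\quantile(\ell)\ge\quantile(\widetilde p)\ge\veps$ on that event, and your remark that conditioning on $s$ is legitimate because the queries are fresh, are details the paper leaves implicit.
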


\begin{proof}
    Let $\widetilde p$ be the price defined in \Cref{lma:single-sample-optprice-range}. By the union bound, with probability at least $1 - \nicefrac{\delta}{2} - \nicefrac{\delta}{2} \geq 1- \delta$, we have simultaneously that $\widetilde p \in [\ell, r]$, while the algorithm in \Cref{thm:mainalg} successfully outputs a price $\price^{\alg}$, such that
    \begin{align*}
        \myrev{\price^{\alg}} ~\geq~ (1 - 5\veps) \cdot \max_{p \in [\ell, r]:q(p) \geq \veps} \myrev{p} ~\geq~ (1 - 5\veps) \cdot  \myrev{\widetilde p} ~\geq~ (1 -6\veps) \cdot \myrev{\optprice},
    \end{align*}
    where the second inequality uses the fact that $\quantile(\widetilde p) \geq \veps$, and the last inequality uses the fact that $\myrev{\widetilde p} \geq (1 - \veps) \cdot \myrev{\optprice}$, which is guaranteed by \Cref{lma:single-sample-optprice-range}.

    It remains to give the pricing query complexity guarantee. By our parameter setting and \Cref{thm:mainalg}, the total number of samples we use is
    \begin{align*}
    \order(\gamma^{-1} \cdot \veps^{-2} \cdot \log^4(r/\ell\veps) \log \delta^{-1}) 
    & = \order(\veps^{-3} \cdot \log^4(\veps^{-1} \delta^{-2}) \log \delta^{-1}) \\
    & = \order\left(\veps^{-3} \cdot (\log^4 \veps^{-1} \log \delta^{-1} + \log^5 \delta^{-1})\right).
\end{align*}
    We thus complete the proof.
\end{proof}

Finally, we apply \Cref{lma:single-sample-optprice-range} to prove the pricing query complexity upper bound for MHR distributions with a one-sample hint.

\begin{theorem}
\label{thm:mhr-single-sample-upper}
    Given any $\veps\in(0,0.1]$, for any MHR distribution $F$, the algorithm that observes a single sample $s \sim F$ and invokes \Cref{alg:mainalg} with parameters $\ell = \frac{\delta s}{8}$, $r = \frac{4s}{\delta \veps}$, and $\gamma = 1/e$ outputs a price $\price^{\alg}$ such that
    \[
    \Pr{\myrev{\price^{\alg}} ~\geq~ (1 - 5\veps) \cdot \myrev{\optprice}} ~\geq~ 1- \delta,
    \]
    using a total number of queries of
    $
    \order\left(\veps^{-2} \cdot \left( \log^4 \veps^{-1} \log \delta^{-1} + \log^5 \delta^{-1} \right) \right).
    $
\end{theorem}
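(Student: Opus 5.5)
The proof follows the template of \Cref{thm:regular-single-sample-upper}, with the target price $\widetilde p$ replaced by $\optprice$ itself. Since every MHR distribution is regular, \Cref{lma:single-sample-optprice-range} applies. Moreover, \Cref{lma:mhr-large-winning-prob} gives $\quantile(\optprice)\ge 1/e \ge \veps$ for $\veps\le 0.1$. Hence the first case in the definition of $\widetilde p$ is active, so $\widetilde p=\optprice$. The lemma then says that, with probability at least $1-\delta/2$ over the sample $s$, we have $\optprice\in[\ell,r]$ with $\ell=\delta s/8$ and $r=4s/(\delta\veps)$.

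Next we check the hypothesis $\quantile(\ell)\ge\gamma$ of \Cref{thm:mainalg}; this is the only step needing care. On the good event, $\ell\le\optprice$, so by monotonicity of the quantile function $\quantile(\ell)\ge\quantile(\optprice)\ge 1/e=\gamma$. (In \Cref{thm:mainalg} this hypothesis is used only to justify the query budget, and all pruning decisions are driven by estimated quantiles, so it suffices that it holds on the good event.)

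We run \Cref{alg:mainalg} with failure parameter $\delta/2$, error parameter $\veps$, and the stated $\ell,r,\gamma$. By \Cref{thm:mainalg}, with probability $1-\delta/2$ it outputs $\price^{\alg}$ satisfying $\myrev{\price^{\alg}}\ge(1-5\veps)\max_{p\in[\ell,r]:\quantile(p)\ge 1/e}\myrev{p}$. On the good event $\optprice$ is feasible for this maximum, so the maximum equals $\myrev{\optprice}$. A union bound over the two failure events gives overall success probability at least $1-\delta$. Unlike the regular case, no extra $(1-\veps)$ loss is incurred, because the target is exactly $\optprice$; this is why the guarantee is $1-5\veps$ rather than $1-6\veps$.

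For the query count, substitute $\gamma^{-1}=e$ and $r/(\ell\veps)=32/(\delta^2\veps^2)$ into the bound of \Cref{thm:mainalg}. This gives $\order(\veps^{-2}\log(1/\delta)\log^4(\veps^{-1}\delta^{-1}))$. Expanding $\log^4(\veps^{-1}\delta^{-1})\lesssim\log^4\veps^{-1}+\log^4\delta^{-1}$ yields the stated $\order(\veps^{-2}(\log^4\veps^{-1}\log\delta^{-1}+\log^5\delta^{-1}))$. The computation is identical to the one in the proof of \Cref{thm:regular-single-sample-upper}, with the factor $\gamma^{-1}=\veps^{-1}$ there replaced by the constant $e$.
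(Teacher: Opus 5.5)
Your proposal is correct and follows essentially the same route as the paper. You identify $\widetilde p=\optprice$ via \Cref{lma:mhr-large-winning-prob}, localize it in $[\ell,r]$ with \Cref{lma:single-sample-optprice-range}, apply \Cref{thm:mainalg} with $\gamma=1/e$ and failure parameter $\delta/2$, take a union bound, and substitute $r/\ell=32/(\delta^2\veps)$ into the query bound.

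Checking $\quantile(\ell)\ge\gamma$ on the good event is a detail the paper skips, but your parenthetical reason is slightly off. The query count of \Cref{alg:mainalg} is deterministic and never uses that hypothesis. The correct reason it suffices on the good event is that, conditional on $s$, the interval $[\ell,r]$ is fixed and the later queries are independent of $s$, so \Cref{thm:mainalg} applies conditionally.
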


\begin{proof}
    Let $\widetilde p$ be the price defined in \Cref{lma:single-sample-optprice-range}. Note that \Cref{lma:mhr-large-winning-prob} guarantees $\quantile(\optprice) \geq 1/e$ when $F$ is MHR. Since we $\veps\leq 0.1<1/e$, we have $\widetilde p = \optprice$ according to the definition of $\widetilde p$. Then, by the union bound, with probability at least $1 - \delta/2 - \delta/2 = 1 - \delta$, we have simultaneously that $\widetilde p \in [\ell, r]$ (by \Cref{lma:single-sample-optprice-range}), and the algorithm in \Cref{thm:mainalg} successfully outputs a price $p^{\alg}$ such that
    \begin{align*}
        \myrev{p^{\alg}} ~\geq~ (1 - 5\veps) \cdot \max_{p \in [\ell, r]:q(p) \geq 1/e} \myrev{p}= (1 - 5\veps) \cdot \myrev{\optprice}.
    \end{align*}

    It remains to derive the sample complexity bound. By our parameter setting and \Cref{thm:mainalg}, the total number of queries is
    \[
    \order\left(\gamma^{-1} \cdot \veps^{-2} \cdot \log^4(r/\ell\veps) \cdot\log \delta^{-1}\right).
    \]
    Substituting $\gamma^{-1} = e = \order(1)$ and $\log(r/\ell) = \order(\log \veps^{-1} + \log \delta^{-1})$, we obtain:
    \begin{align*}
        \order\left(\veps^{-2} \cdot (\log \veps^{-1} + \log \delta^{-1})^4 \cdot \log \delta^{-1}\right) &= \order\left(\veps^{-2} \cdot (\log^4 \veps^{-1} + \log^4 \delta^{-1}) \cdot \log \delta^{-1}\right) \\
        &= \order\left(\veps^{-2} \cdot \left( \log^4 \veps^{-1} \log \delta^{-1} + \log^5 \delta^{-1} \right) \right).
    \end{align*}
    We thus complete the proof.
\end{proof}

%% file: regular-lb.tex
In this section, we establish the pricing query lower bounds for the regular distributions, both for the one-sample hint and value range hint.

\xhdr{One-sample hint}
With an initial one-sample hint and only subsequent pricing queries, our model is strictly harder than the sample-access setting considered in \citet{HMR-15} in which the seller directly observes every sampled valuation.
Since \citet{HMR-15} already establishes an $\Omega(\varepsilon^{-3})$ lower bound even in that stronger feedback model, the same $\Omega(\varepsilon^{-3})$ lower bound immediately carries over to the pricing-query complexity in our one-sample-hint setting.
\begin{corollary}[Adopted from \citealp{HMR-15}]
\label{cor:regular one-sample lb}
For regular distributions, with one-sample access hint,
any algorithm that guarantees a $(1- \veps)$-approximation of the optimal revenue with probability at least $\sfrac{2}{3}$ must perform $\Omega(\sfrac{1}{\varepsilon^3})$ queries.
\end{corollary}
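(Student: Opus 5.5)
The plan is a simulation (reduction) argument. Any pricing-query algorithm that uses the one-sample hint can be run by a learner with full sample access. I then invoke the $\Omega(\varepsilon^{-3})$ sample-complexity lower bound of \citet{HMR-15} for regular distributions.

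Suppose $\alg$ is a pricing algorithm with one-sample hint. It uses $\querynum$ pricing queries and, for every regular $\valdist$, outputs a $(1-\veps)$-optimal price with probability at least $2/3$. I construct a sample-based learner $\mathcal{B}$ that draws $\querynum+1$ i.i.d.\ samples $\val_0,\val_1,\ldots,\val_\querynum\sim\valdist$. It hands $\val_0$ to $\alg$ as the hint. When $\alg$ posts its $t$-th price $\price_t$, which may depend adaptively on $\val_0$, past feedback and internal randomness, $\mathcal{B}$ answers $\feedback_t=\indicator{\val_t\ge\price_t}$. Finally $\mathcal{B}$ outputs $\price^{\alg(\val_0)}$.

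The key step is checking that the transcript produced by $\mathcal{B}$ has exactly the same joint law as a genuine run of $\alg$ in the pricing-query model. This holds because $\val_t$ is independent of $(\val_0,\ldots,\val_{t-1})$ and of the internal randomness, and hence of $\price_t$. So conditionally on the history, $\feedback_t$ is Bernoulli with mean $\quantile(\price_t)$, as in the real model. By induction on $t$, the whole interaction, and in particular the output price, has the same distribution. Therefore $\mathcal{B}$ achieves $\myrev{\price^{\mathcal{B}}}\ge(1-\veps)\myrev{\optprice}$ with probability at least $2/3$ for every regular $\valdist$, using $\querynum+1$ samples.

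The lower bound of \citet{HMR-15} states that any sample-based learner achieving a $(1-\veps)$-approximation with constant probability $2/3$ on all regular distributions needs $\Omega(\varepsilon^{-3})$ samples. Hence $\querynum+1=\Omega(\varepsilon^{-3})$, so $\querynum=\Omega(\varepsilon^{-3})$.

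The main point to verify is that the cited result is stated in a compatible form. It should be a worst-case, constant-success-probability, multiplicative $(1-\veps)$ guarantee over regular distributions, and it should allow randomized (possibly improper) learners that output a single price. If their statement is phrased in expectation or with a different constant, I would convert it by standard amplification or Markov-type arguments, adjusting $\veps$ by a constant factor.
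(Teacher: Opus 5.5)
Your proposal is correct and follows the same route as the paper. The paper simply observes that the one-sample-hint pricing model is weaker than the full sample-access model of \citet{HMR-15}, so it inherits that paper's $\Omega(\varepsilon^{-3})$ bound; your explicit simulation, with the check that the transcript has the same law, is a rigorous version of that one-line observation.

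One caveat on your last paragraph: if the cited bound were stated only in expected-revenue form, a Markov-type argument could not turn a probability-$2/3$ guarantee into an expected $(1-O(\varepsilon))$ guarantee. No conversion is needed, though, because the lower bound of \citet{HMR-15} comes from a two-point indistinguishability argument that directly forces a constant failure probability.
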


\xhdr{Value-range hint}
We below describe the pricing query lower bound when the seller knows that the buyers' values are in the range $[1, \valuerange]$.
To establish the lower bound, we build two carefully designed distributions on $[1,\valuerange]$ whose sets of $(1-\Theta(\varepsilon))$-optimal prices are separated, while each pricing query reveals only a Bernoulli outcome with very small KL divergence, implying an $\Omega(\valuerange\varepsilon^{-2})$ lower bound via Pinsker and the KL chain rule. 
\begin{theorem}
\label{thm:regular value-range lb}
Assume $\valuerange\ge 10$ and $\varepsilon\in(0,1/10]$.
For regular distributions,
any algorithm that guarantees a $(1-0.01 \cdot \veps)$-approximation of the optimal revenue with probability at least $\sfrac{2}{3}$ must perform $\Omega(\sfrac{\valuerange}{\varepsilon^2})$ queries.
\end{theorem}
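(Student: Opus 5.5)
We will build two regular distributions $F_+$ and $F_-$ on $[1,\valuerange]$ whose sets of $(1-0.01\veps)$-optimal prices are disjoint. We then argue that no algorithm with $o(\valuerange/\veps^2)$ queries can tell them apart with probability $2/3$.

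\textbf{Construction.} The base instance is an equal-revenue-type distribution truncated to $[1,\valuerange]$. Its quantile is $\quantile(\price)=1/\price$ for $\price\in[1,\valuerange)$, with an atom at $\valuerange$ (smoothed if a density is required). Its revenue curve is flat at level $1$ on $[1,\valuerange]$. We perturb it into $F_\pm$ so that:
\begin{itemize}
\item $F_+$ tilts revenue upward toward high prices, with $\myrev[F_+]{\price}\approx 1+c\veps\cdot\frac{\log \price}{\log \valuerange}$, or more simply a two-piece concave perturbation in quantile space;
\item $F_-$ tilts revenue toward low prices.
\end{itemize}
A cleaner choice is to perturb only near the top. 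Let $F_+$ have $\quantile(\price)=1/\price$ for $\price<\valuerange/2$ and $\quantile(\price)=(1+\veps)/\price$ on $[\valuerange/2,\valuerange]$, adjusted so that the revenue curve stays concave in quantile space. Its optimum is near $\valuerange$ with revenue $1+\Theta(\veps)$, while low prices earn about $1$. Let $F_-$ be a version whose optimal region is at prices $\le \valuerange/4$, with revenue $1+\Theta(\veps)$ there.

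\textbf{Regularity and separation.} For each distribution we check:
\begin{enumerate}
\item regularity, via concavity of $\myrev[\quantile]{\cdot}$ (\Cref{lma:concavity}), using piecewise-linear revenue curves in quantile space with decreasing slopes;
\item that any price within a $(1-0.01\veps)$ factor of optimal under $F_+$ is at least $0.01\veps$ suboptimal under $F_-$, and vice versa.
\end{enumerate}
Hence any successful algorithm yields a test distinguishing $F_+$ from $F_-$ with success probability $2/3$.

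\textbf{Information bound.} A query at price $\price$ yields $\Bern(\quantile_\pm(\price))$. The two quantiles differ only on a region where $\quantile=\Theta(1/\valuerange)$, and there $|\quantile_+-\quantile_-|=O(\veps/\valuerange)$. Using $\KL(\Bern(a)\|\Bern(b))\le (a-b)^2/(b(1-b))$, each query contributes KL at most $O(\veps^2/\valuerange)$, uniformly over $\price$. By the KL chain rule for adaptive algorithms, the total KL after $m$ queries is $O(m\veps^2/\valuerange)$. Pinsker's inequality then shows that the total variation stays below $1/3$ unless $m=\Omega(\valuerange/\veps^2)$.

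\textbf{Main obstacle.} The hard part is the construction. The two distributions must differ only where $\quantile$ is $O(1/\valuerange)$, yet shift the optimum by $\Theta(\veps)$ in revenue while both remain regular. If $F_-$ differs from the base at low prices (large quantile), the per-query KL there is about $\veps^2$, not $\veps^2/\valuerange$. So I would keep the low-price part identical in both distributions and place both competing optima among high prices: for example, one near $\valuerange/2$ and one near $\valuerange$, both with quantile $\Theta(1/\valuerange)$.

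Concretely, I would define both revenue curves as concave piecewise-linear functions in quantile space. They agree for $\quantile\ge 4/\valuerange$ and differ on $\quantile\in[1/\valuerange,4/\valuerange]$, with peaks at $\quantile=2/\valuerange$ and $\quantile=1/\valuerange$ respectively, each of height $1+\veps$. I would then verify the $\Theta(\veps)$ separation of the near-optimal price sets. The condition $\valuerange\ge10$ leaves room for the intervals, and $\veps\le 1/10$ keeps all slopes consistent.
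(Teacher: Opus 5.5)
\textbf{Verdict: genuine gap in the construction — and the paper's own construction has the same defect, because the stated range is too large.}

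Your final plan is the paper's architecture. Both distributions share a common part at prices below $H/2$ (quantiles $q\ge 4/H$) and differ only at $q=\Theta(1/H)$. The Bernoulli $\chi^2$ bound then gives KL $O(\veps^2/H)$ per query, followed by the chain rule and Pinsker. That information-theoretic half is fine. The gap is the construction, which you defer but which is the whole difficulty.

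As sketched, the construction is infeasible. Since $R(q)=q\,v(q)\le Hq$, no curve can reach height $1+\veps$ at $q=1/H$. If the common part is the flat equal-revenue curve, concavity (\Cref{lma:concavity}) forces $R$ to be nondecreasing on $[0,4/H]$, so no bump above $1$ can exist there. The paper avoids both problems by tilting the common part: revenue rises linearly in quantile space from $1$ at $q=1$ to $2$ at $q=4/H$ (price $H/2$). It then places the $F_-$ peak (revenue $2$) at $q=4/H$ and the $F_+$ peak (revenue $2+\veps$) at $q=3/H$.

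More importantly, regularity is an obstruction here, not a routine check. Suppose $R_+=R_-$ on $[q_2,1]$ and the $(1-c\veps)$-optimal sets are disjoint. Let $q_+^*$ be the optimum lying at the higher price. Then $q_+^*<q_2$ and $\opt_+-R_+(q_2)>c\veps\,\opt_+$. Decreasing secant slopes give $1\le R_+(1)\le \opt_+\bigl(1-c\veps(1-q_2)/q_2\bigr)$, which forces $q_2>c\veps/(1+c\veps)$. With $q_2=4/H$ this means $H=O(1/(c\veps))$.

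The paper's $F_+$ hits exactly this wall. Its virtual value is $-H/(H-4)$ on $[1,H/2)$ and $-H\veps$ on $[H/2,H(2+\veps)/3]$. So $F_+$ is regular only when $\veps(H-4)\le 1$; for example, $H=100$, $\veps=0.1$ fails.

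No other construction can rescue the stated range either. Run \Cref{alg:mainalg} with $\ell=1$, $r=H$, $\gamma=\veps$. By the first part of \Cref{lma:single-sample-optprice-range}, a $(1-\veps)$-optimal price with $q\ge\veps$ lies in $[1,H]$. This gives a $(1-6\veps)$-approximation with $O(\veps^{-3}\log(1/\delta)\log^4(H/\veps))$ queries. That is $o(H/\veps^2)$ once $H\veps\gg\log^4(H/\veps)$.

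So your route, like the paper's, can establish $\Omega(H/\veps^2)$ only under an extra assumption such as $\veps(H-4)\le 1$. Under that restriction, the paper's pair is a valid instantiation of your plan. Its quantile-space slopes are $H$, $H(1\pm\veps)/2$, $\mp H\veps$, $-H/(H-4)$, which are nonincreasing, so both curves are concave.
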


\begin{proof}
Fix $\varepsilon\in(0,1/10]$ and define $\alpha\triangleq 0.01$.
We consider the two distributions $\valdist_+$ and $\valdist_-$ on $[1,\valuerange]$ defined as follows:
\begin{align*}
    \valdist_+(\val)=
\begin{cases}
    1-\dfrac{2\valuerange-4}{(\valuerange -4)\val +\valuerange}, & \val \in\left[1,\dfrac{\valuerange}{2}\right],\\[10pt]
    1-\dfrac{2+4\varepsilon}{\val+\valuerange \varepsilon}, & \val \in\left[\dfrac{\valuerange}{2},\dfrac{\valuerange(2+\varepsilon)}{3}\right],\\[10pt]
    1-\dfrac{1-\varepsilon}{2\val-\valuerange(1+\varepsilon)}, & \val \in\left[\dfrac{\valuerange(2+\varepsilon)}{3},\valuerange\right),\\[10pt]
    1, & \val=\valuerange.
\end{cases}
\quad
\valdist_-(\val)=
\begin{cases}
    1-\dfrac{2\valuerange-4}{(\valuerange -4)\val +\valuerange}, & \val \in\left[1,\dfrac{\valuerange}{2}\right],\\[10pt]
    1-\dfrac{2-4\varepsilon}{\val-\valuerange \varepsilon}, & \val \in\left[\dfrac{\valuerange}{2},\dfrac{\valuerange(2-\varepsilon)}{3}\right),\\[10pt]
    1-\dfrac{1+\varepsilon}{2\val-\valuerange(1-\varepsilon)}, & \val \in\left[\dfrac{\valuerange(2-\varepsilon)}{3},\valuerange\right),\\[10pt]
    1, & \val=\valuerange.
\end{cases}
\end{align*}
Both CDFs are continuous on $[1,\valuerange)$ and have a mass on value $\valuerange$.
One can verify that both distributions are regular on $[1, \valuerange]$.
For any $\price\in[1,\valuerange]$, we define $\quantile_\pm(\price)=1-\valdist_\pm(\price^-)$ and $\myrev[\valdist_{\pm}]{\price} = p\quantile_\pm(\price)$.
From these two distributions, we know that the monopoly revenue under distribution $\valdist_-$ is
$\myrev[\valdist_-]{\optpriceminus} = 2$, which is attained uniquely at price $\optpriceminus = \frac{\valuerange}{2}$, while the monopoly revenue under distribution $\valdist_+$ is
$\myrev[\valdist_+]{\optpriceplus} = 2+\veps$, which is attained uniquely at price $\optpriceplus = \frac{\valuerange(2+\varepsilon)}{3}$.

Define target revenue levels
\begin{align*}
    T_- \triangleq 
    (1-\alpha\varepsilon)\myrev[\valdist_-]{\optpriceminus} 
    = 2(1-\alpha\varepsilon)~,
    \quad
    T_+ \triangleq 
    (1-\alpha\varepsilon)\myrev[\valdist_+]{\optpriceplus} 
    = (1-\alpha\varepsilon)(2+\varepsilon)~.
\end{align*}
Then by our construction, we know that for every $\price > \minuscutoffprice\triangleq \frac{(1-\alpha\varepsilon)\valuerange}{2-\alpha}$, we have $\myrev[\valdist_-]{\price}<T_-$; while for every price $\price < \pluscutoffprice\triangleq \frac{(1-\alpha\varepsilon)(2+\varepsilon)\valuerange}{3+\alpha(2+\varepsilon)}$, we have $\myrev[\valdist_+]{\price}<T_+$.
Moreover given $\veps \in (0, \sfrac{1}{10}]$, and $\alpha = 0.01$, we have 
\begin{align*}
    \minuscutoffprice 
    = \frac{(1-\alpha\varepsilon)\valuerange}{2-\alpha} 
    < 
    \pluscutoffprice = \frac{(1-\alpha\varepsilon)(2+\varepsilon)\valuerange}{3+\alpha(2+\varepsilon)}~.
\end{align*}
Fix any threshold price $t$ such that $\minuscutoffprice<t<\pluscutoffprice$.
Let $\mathbb P_+$ and $\mathbb P_-$ denote the distributions over full transcripts (i.e., the posted prices and observed bits) under the distributions $\valdist_+$ and $\valdist_-$,
respectively. 
We define the transcript-measurable event
\begin{align*}
    \event=\{\estprice\ge t\}~.
\end{align*}
If $\myrev[\valdist_+]{\estprice}\ge T_+$, then it must imply that $\estprice\ge \pluscutoffprice > t$, so $\prob[\valdist_+]{\event}\ge \sfrac{2}{3}$ under the success hypothesis.
If $\myrev[\valdist_-]{\estprice}\ge T_-$, then it must imply that  $\estprice\le \minuscutoffprice<t$, so $\prob[\valdist_-]{\event}\le \sfrac{1}{3}$ under the success hypothesis.
Thus, 
\begin{align*}
    |\mathbb P_+(\event)-\mathbb P_-(\event)|\ge \frac{1}{3}
    \quad\Longrightarrow\quad
    \TV(\mathbb P_+,\mathbb P_-)\ge \frac{1}{3}~.
\end{align*}
By Pinsker's inequality, this gives us
\begin{align}
    \label{ineq:pinsker}
    \TV(\mathbb P_+,\mathbb P_-)\le \sqrt{\frac{1}{2}\KL(\mathbb P_+\|\mathbb P_-)}
    \quad\Longrightarrow\quad
    \KL(\mathbb P_+\|\mathbb P_-)\ge \frac{2}{9}~.
\end{align}
Let $\Bern(\quantile)$ denote the Bernoulli distribution with mean $\quantile \in [0, 1]$.
Let $\price_t$ be the (possibly randomly) posted price at query $t$, measurable with respect to the past transcript.
Then by the chain rule of KL divergence, we have
\begin{align*}
    \KL(\mathbb P_+\|\mathbb P_-)
    =
    \expect[\valdist_+]{\sum\nolimits_{t\in[\querynum]} \kl\!\left(\Bern(\quantile_+(\price_t))\,\middle\|\,\Bern(\quantile_-(\price_t))\right)}~.
\end{align*}
For Bernoulli distributions, it is known that $\kl(\Bern(a)\|\Bern(b))\le \frac{(a-b)^2}{b(1-b)}$.
When the price $\price\le \valuerange/2$, we know that $\valdist_+(\price)=\valdist_-(\price)$ from out construction.
Thus, we have $\quantile_+(\price)=\quantile_-(\price)$ and the KL divergence is $0$.
When the price $\price>\valuerange/2$. By our construction, we know that 
\begin{align*}
    \frac{1}{\valuerange}\le \quantile_-(\price)\le \frac{5}{\valuerange}~,
    \qquad
    |\quantile_+(\price)-\quantile_-(\price)|\le \frac{14\varepsilon}{\valuerange}~.
\end{align*}
Thus, we can upper bound the KL divergence of the two Bernoulli distributions as follows
\begin{align*}
    \kl(\Bern(\quantile_+(\price))\|\Bern(\quantile_-(\price)))
    \le \frac{(14\varepsilon/\valuerange)^2}{(1/\valuerange)(1-5/\valuerange)}
    \le \frac{196\varepsilon^2/\valuerange^2}{(1/\valuerange)(1/2)}
    =392\cdot \frac{\varepsilon^2}{\valuerange}~,
\end{align*}
Now we are ready to upper bound the $\KL(\mathbb P_+\|\mathbb P_-)$ as follows:
\begin{align*}
    \KL(\mathbb P_+\|\mathbb P_-)\le \querynum\cdot 392\frac{\varepsilon^2}{\valuerange}~.
\end{align*}
Together with \Cref{ineq:pinsker}, this must imply that the number of queries $\querynum \ge \frac{2}{9\cdot 392}\cdot \frac{\valuerange}{\varepsilon^2}
=\Omega\!\left(\frac{\valuerange}{\varepsilon^2}\right)$.
\end{proof}

%% file: mhr-lb.tex
For MHR distributions, similar to the lower-bound construction in \citet{LSTW-23}, we construct a pair of hard instances supported on the bounded domain $[1,2]$. We show that any algorithm that achieves a $(1-\veps)$-approximation with constant success probability must distinguish between these two distributions, which overlap with constant probability, and hence requires $\Omega(\veps^{-2})$ pricing queries. As a consequence, the $\Omega(\veps^{-2})$ lower bound applies to both the one-sample hint and the value-range hint settings.

\begin{theorem}\label{thm:mhr_lower_bound}
    Given $\veps\in(0,\frac{1}{64})$, there exists a family of MHR distributions supported on $[1, 2]$ such that, with either one-sample or value range hint, any algorithm that guarantees a $(1-\veps)$-approximation of the optimal revenue with probability at least $\nicefrac{2}{15}$ must perform $\Omega(\veps^{-2})$ pricing queries.
\end{theorem}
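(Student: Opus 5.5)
The plan is a two-point (Le Cam) argument, mirroring the proof of \Cref{thm:regular value-range lb}. I will build two MHR distributions $\valdist_+$ and $\valdist_-$ on $[1,2]$. Their sets of $(1-\veps)$-optimal prices will be disjoint, separated by a threshold $t$. At the same time, at every price $\price\in[1,2]$ they will satisfy $|\quantile_+(\price)-\quantile_-(\price)|=\order(\veps)$ while $\quantile_\pm(\price)$ stays in a range like $[c,1-c]$ (or an appropriate boundary treatment).

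Since the support is $[1,2]$, the value-range hint with $\valuerange=2$ gives the algorithm nothing beyond what it is already allowed to know, so a bound there carries over directly. For the one-sample hint, the single sample $\val_0$ can be coupled so that it has the same distribution under both hypotheses with probability $1-\order(\veps)$, e.g.\ via $\TV(\valdist_+,\valdist_-)=\order(\veps)$. Alternatively, I can simply charge $\val_0$ as one extra fully revealing sample: its KL contribution is $\order(\veps^2)$, provided the densities are comparable and differ by a $(1\pm\order(\veps))$ factor. That is why the success probability in the statement is only $2/15$ rather than $2/3$, leaving slack for this extra term.

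For the construction, I would start from a base MHR distribution on $[1,2]$ whose revenue curve is strictly concave around its peak. A natural choice is the uniform distribution on $[1,2]$ (hazard rate $1/(2-\val)$ is increasing), with revenue $\price(2-\price)$ peaked at $\price=1$. A peak at the boundary is awkward, so I would instead use a family such as $\quantile(\price)=1-\theta(\price-1)$ with small modifications, or a two-piece linear-density family. The aim is an interior optimum whose location shifts by $\Theta(\sqrt{\veps})$ under an $\order(\veps)$ perturbation of the quantiles. The key is quadratic flatness near the optimum: near-optimal prices lie within $\Theta(\sqrt{\veps})$ of $\optprice_\pm$, so to separate the near-optimal sets the peaks must differ by $\gg\sqrt{\veps}$ in price. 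This in turn needs revenue-curve tilts of order $\sqrt{\veps}$, which would give only $\Omega(\veps^{-1})$.

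This flatness is the crux, and I expect it to be the main obstacle. Following \citet{LSTW-23}, who attain $\Omega(\veps^{-2})$ for MHR on bounded support, I would instead engineer revenue curves with a kink: piecewise-linear in the relevant region, with two near-flat plateaus of revenue level $\opt$ and $\opt(1+c\veps)$ located in separated price regions. Flipping which plateau is higher costs only an $\order(\veps)$ quantile perturbation. Revenue $\price\quantile(\price)$ with $\quantile$ defined piecewise must keep the hazard rate $f/\quantile$ nondecreasing, so the main work is to verify MHR at the junction points, choosing slopes so the hazard rate only jumps upward. Concretely, I would choose $\quantile_\pm$ so that $\price\quantile_\pm(\price)$ is constant at two levels on two intervals, and check that the implied hazard rate $1/\price$ on a constant-revenue piece is decreasing. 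That is a problem, so the plateaus must instead be slightly increasing-then-decreasing linear quantile pieces. Tuning these is the delicate step.

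Once the construction is verified, the remaining steps are routine. Fix $t$ between the near-optimal sets. The success hypothesis yields $|\mathbb P_+(\event)-\mathbb P_-(\event)|\ge$ a constant, and Pinsker then gives $\KL(\mathbb P_+\|\mathbb P_-)=\Omega(1)$. The chain rule with $\kl(\Bern(a)\|\Bern(b))\le (a-b)^2/(b(1-b))=\order(\veps^2)$ per query (since $b$ is bounded away from $0$ and $1$), plus $\order(\veps^2)$ for the hint sample, gives $\querynum=\Omega(\veps^{-2})$.
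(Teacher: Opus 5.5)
Your framework is the paper's: a two-point Le Cam argument, Pinsker plus the KL chain rule, and the per-query bound $\KL(\Bern(a)\|\Bern(b))\le (a-b)^2/(b(1-b))$. Your diagnosis is also correct: a smooth peak only gives $\Omega(\veps^{-1})$, so the revenue curve needs a kink. The gap is that for this existence statement the construction \emph{is} the proof, and you do not give one. Moreover, the shape you aim for cannot be realized by an MHR distribution. If $h$ is nondecreasing, then $\frac{d}{dp}\log\myrev{p}=\frac1p-h(p)$ is strictly decreasing, so $\log\myrev{\cdot}$ is strictly concave, and on $[1,2]$ even $\tfrac14$-strongly concave. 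Hence the revenue curve has a single peak, no flat piece, and no second local maximum. Two near-flat plateaus in separated price regions, or two ``increasing-then-decreasing'' bumps, are therefore impossible. In addition, any interval on which revenue stays within a $1\pm\order(\veps)$ factor of a constant has length $\order(\sqrt{\veps})$. Your remark that a constant-revenue piece has hazard rate $1/p$ is one instance of this obstruction, and the two-plateau design cannot be rescued by tuning.

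What is needed is a single peak at a concave corner. The corner comes from an upward jump of the density, hence of the hazard rate, which MHR permits; you mention this ingredient only in passing. The revenue slopes on the two sides must be of constant order and opposite sign. The $(1-\veps)$-optimal set is then an interval of width $\Theta(\veps)$. Shifting the corner by $\Theta(\veps)$ separates the two near-optimal sets while changing every quantile by only a factor $1\pm\order(\veps)$. The paper uses density $0.4$ on $[1,1.5)$ and $1.6$ on $[1.5,2]$, giving slopes $+0.2$ and $-1.6$ at the corner. The alternative has the same shape with the jump moved to $1.5+16\veps$. The paper checks that $\mathcal P_0\subseteq[1.5-6\veps,1.5+0.75\veps]$ and that $\myrev[F_1]{p}<(1-\veps)\opt_{F_1}$ on that set. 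It also checks $q_1/q_0\in[1,1+64\veps]$ and $1-q_1\ge 0.2$ wherever $q_0\neq q_1$, which gives per-query KL $\order(\veps^2)$ even though $q\to 0$ at $p=2$.

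Two smaller corrections. First, in such a construction the densities differ by a constant factor on an interval of length $\Theta(\veps)$, so the hint sample costs KL $\Theta(\veps)$, not $\order(\veps^2)$. This is harmless, and your $\TV$-coupling route handles it. Second, your reading of $2/15$ as slack is backwards: a smaller required success probability is a weaker assumption on the algorithm. On a two-point family with disjoint near-optimal sets, outputting one of the two optimal prices uniformly at random succeeds with probability $1/2$ using no queries. So your step $|\mathbb P_+(\event)-\mathbb P_-(\event)|=\Omega(1)$ needs success probability strictly above $1/2$. The paper's argument in fact closes with the $2/3$ requirement.
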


\begin{proof}
    We construct two MHR distributions $F_0$ and $F_1$ supported on $[1, 2]$ as follows and show that an algorithm that achieves $(1-\veps)$-approximation for both distribution has to distinguish them, which requires $\Omega(\veps^{-2})$ pricing queries. 
    
    Specifically, fix $\varepsilon \in (0,1/64]$ and set $\alpha := 16\varepsilon$. Define $F_0$ by the density
\[
f_0(v)=
\begin{cases}
0.4, & 1\le v<1.5,\\
1.6, & 1.5\le v\le 2.
\end{cases}
\]
Define $F_1$ by
\[
f_1(v)=
\begin{cases}
0.4, & 1\le v<1.5+\alpha,\\
d_2 \triangleq \dfrac{0.8-0.4\alpha}{0.5-\alpha}, & 1.5+\alpha\le v\le 2.
\end{cases}
\]
Both are supported on $[1,2]$ and integrate to $1$. Moreover, as for all $v \in [1, 1.5)$, we have $f_0(v) = f_1(v) = 0.4$, we know that
\[
F_0(v) = F_1(v) = \int_1^v 0.4 \, dt = 0.4(v-1).
\]
Therefore,
\[
\Pr[v \sim F_0]{v \in [1,1.5)} 
= \Pr[v \sim F_1]{v \in [1,1.5)} 
= \int_1^{1.5} 0.4 \, dv 
= 0.2.
\]
Consequently, with probability at least $0.2$, a single sample lies in a region where the two distributions coincide exactly. On this event, the likelihood under $F_0$ and $F_1$ is identical, and thus a single observation provides no information for distinguishing the two distributions.

Let $q_i(p)=1-F_i(p)$ for $i\in\{0,1\}$. A direct calculation gives
\[
q_0(p)=
\begin{cases}
1.4-0.4p, & 1\le p<1.5,\\
1.6(2-p), & 1.5\le p\le 2,
\end{cases}
\qquad
q_1(p)=
\begin{cases}
1.4-0.4p, & 1\le p\le 1.5+\alpha,\\
d_2(2-p), & 1.5+\alpha\le p\le 2.
\end{cases}
\]
Direct calculation shows that both $F_0$ and $F_1$ are MHR.

To show that $F_0$ and $F_1$ have interleaving $(1-\varepsilon)$-optimal prices, note that for $F_0$ and $p\in[1.5,2]$,
\[
\myrev[F_0]{p}=1.6p(2-p),
\]
which is strictly decreasing for $p\ge 1$. Hence, $\price^\opt_{F_0}=1.5,\opt_{F_0}=1.2$.
If $p=1.5+t$ with $t\ge 0$, then $\myrev[F_0]{1.5+t}=1.2-1.6t-1.6t^2$, and $\myrev[F_0]{1.5+t}\ge (1-\varepsilon)\opt_{F_0}$ implies $1.6t\le 1.2\varepsilon$, i.e., $t\le 0.75\varepsilon$.
For $p<1.5$, we have $\myrev[F_0]{p}=p(1.4-0.4p)$ and $\myrev[F_0]{1.5^-}=1.2$, so $\myrev[F_0]{p}\ge (1-\varepsilon)\opt_{F_0}$ implies $p\ge 1.5-6\varepsilon$.
Therefore, every $(1-\varepsilon)$-optimal price for $F_0$ lies in
\[
\mathcal P_0 \subseteq [1.5-6\varepsilon,\ 1.5+0.75\varepsilon].
\]

For $F_1$, on $[1,1.5+\alpha]$ we have $q_1(p)=1.4-0.4p$, hence
\[
\myrev[F_1]{p}=p(1.4-0.4p),
\]
which is increasing for $p\le 1.75$. Since $1.5+\alpha\le 1.75$, the maximum over $[1.5,1.5+\alpha]$ is attained at $p=1.5+\alpha$.
For $p\ge 1.5+\alpha$, $\myrev[F_1]{p}=d_2p(2-p)$ is decreasing for $p\ge 1$, and therefore
\[
\price^\opt_{F_1}=1.5+\alpha,\qquad 
\opt_{F_1}=\myrev[F_1]{1.5+\alpha}=(1.5+\alpha)(0.8-0.4\alpha)
=1.2+0.2\alpha-0.4\alpha^2.
\]
Now take any $p\in\mathcal P_0$. Since $\mathcal P_0\subseteq(-\infty,1.5+0.75\varepsilon]$ and $\myrev[F_1]{p}$ is increasing on $[1,1.75]$, we have
\[
\myrev[F_1]{p}\le \myrev[F_1]{1.5+0.75\varepsilon}
=(1.5+0.75\varepsilon)(0.8-0.3\varepsilon)
=1.2+0.15\varepsilon-0.225\varepsilon^2.
\]
On the other hand, setting $\alpha=16\varepsilon$ yields
\[
(1-\varepsilon)\opt_{F_1}
=(1-\varepsilon)\bigl(1.2+3.2\varepsilon-102.4\varepsilon^2\bigr)
\ge 1.2+2.0\varepsilon-105.6\varepsilon^2
> 1.2+0.15\varepsilon,
\]
for all $\varepsilon\le 1/64$. Therefore,
\[
\myrev[F_1]{p} < (1-\varepsilon)\opt_{F_1}, \qquad \forall p\in\mathcal P_0,
\]
and no single price can be $(1-\varepsilon)$-optimal for both $F_0$ and $F_1$.

Consequently, any algorithm that outputs a $(1-\varepsilon)$-optimal price with probability at least $2/3$ for both distributions must distinguish whether the underlying distribution is $F_0$ or $F_1$ with probability at least $2/3$. To show that $\Omega(1/\varepsilon^2)$ pricing queries is necessary to distinguish $F_0$ and $F_1$, let a pricing query at price $p$ return $Y=\mathbf 1\{v\ge p\}$, so under $F_i$,
\[
Y\sim \mathrm{Bern}(q_i(p)).
\]
Let $\mathbb P_i$ denote the joint law of the transcript under $F_i$. For $p<1.5$, $q_0(p)=q_1(p)$ and the query contributes zero divergence. For $p\in[1.5,1.5+\alpha]$,
\[
q_0(p)=0.8-1.6(p-1.5),\qquad q_1(p)=0.8-0.4(p-1.5),
\]
so $q_1(p)/q_0(p)\le (1-2\alpha)^{-1}$. For $p\in[1.5+\alpha,2]$,
\[
q_1(p)/q_0(p)=\frac{d_2}{1.6}=\frac{1-0.5\alpha}{1-2\alpha}\le (1-2\alpha)^{-1}.
\]
Since $\alpha\le 1/4$, $(1-2\alpha)^{-1}\le 1+4\alpha$, hence for all $p\in[1,2]$,
\[
|q_1(p)-q_0(p)|\le 4\alpha q_0(p),
\qquad
1-q_1(p)\ge 0.2.
\]
Using the standard Bernoulli KL bound,
\[
\kl\bigl(\Bern(a)\,\|\,\Bern(b)\bigr)
\le \frac{(a-b)^2}{b(1-b)},
\]
we obtain, for every price $p$,
\[
\KL\bigl(\Bern(q_0(p))\,\|\,\Bern(q_1(p))\bigr)
\le C\alpha^2.
\]
Therefore, to distinguish $F_0$ from $F_1$ with probability at least $2/3$, we need $\TV(\mathbb P_0,\mathbb P_1)\ge 1/3$, hence $D_{\mathrm{KL}}(\mathbb P_0\|\mathbb P_1)\ge 2/9$, which yields a $\Omega(1/\veps^2)$ query complexity.
\end{proof}

%% file: general.tex
\section{Results for General Distribution}\label{sec:general}
In this section, we present our results for general value distributions. As discussed in \Cref{sec:intro}, for general distributions, even a one-sample hint is insufficient to resolve the scale of the optimal price; consequently, any algorithm fails with constant probability in this setting. Motivated by this limitation, we focus on the value-range hint setting, where the value range $[1,H]$ is given to the learner. We first establish a query complexity upper bound in \Cref{sec:general_upper} and a matching lower bound in \Cref{sec:general_lower_bound}.

\subsection{Upper Bound with Value-Range Hint}\label{sec:general_upper}
We first present an algorithm that achieves a query complexity of $\order(\varepsilon^{-3}H\log H)$ for distributions supported on $[1, H]$. The procedure is detailed in \Cref{alg:general_range_upper}.

The algorithm proceeds by discretizing the feasible price interval $[1, H]$ into a geometric grid with a multiplicative step size of $(1+\veps)$. This results in a candidate price set $S$ of size $|S| = \Theta(\frac{1}{\veps}\log H)$. For each candidate price $\price \in S$, the algorithm performs $N = \Theta\left(\frac{H}{\veps^2}\log\frac{H}{\delta\veps}\right)$ queries to compute an empirical revenue estimate. Finally, it outputs the price maximizing the empirical revenue. The following theorem establishes that this approach yields a $(1-\order(\veps))$-approximation with probability at least $1-\delta$. The proof is deferred to \Cref{app:general_upper_bound}.

\begin{algorithm}[t]
\caption{Query Complexity Upper Bound for $[1,H]$ value distribution}
\label{alg:general_range_upper}
\KwIn{Value distribution interval $[1, H]$, failure parameter $\delta$, error parameter $\veps$}

Create discretized price set $S=[1,1+\veps, (1+\veps)^2, \dots, (1+\veps)^{K}, H]$, where $K=\lfloor\log_{(1+\veps)}H\rfloor$.

\For{each $\price\in S$}{
    Query $\price$ for $N=\frac{16H}{\veps^2}\log\frac{4H}{\veps\delta}$ times and compute the empirical average revenue $\estmyrev{\price}$.
}
\KwOut{$ \price^\alg= \argmax_{\price\in S}\estmyrev{\price}$.}
\end{algorithm}   
\begin{theorem}\label{thm:general_range_upper}
    For any value distribution supported on $[1, H]$, Algorithm \ref{alg:general_range_upper} outputs a price $p^\alg$ such that
    $$ \prob{\myrev{p^\alg} \ge (1-3\veps) \myrev{\optprice}} \ge 1 - \delta, $$
    with a total query complexity of $\order\left( \frac{H}{\veps^3}\log(\frac{H}{\delta\veps})\log H \right)$.
\end{theorem}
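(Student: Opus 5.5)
The plan has three parts: a discretization step, a uniform concentration bound over the grid, and a comparison through the empirical maximizer, followed by counting queries.

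\textbf{Discretization.} Let $p^{(2)}$ be the largest grid point in $S$ with $p^{(2)} \le \optprice$. Since consecutive grid points differ by a factor at most $1+\veps$, and $H\in S$ covers the top end, we have $\optprice \le (1+\veps)p^{(2)}$. Monotonicity of $\quantile$ then gives
\[
\myrev{p^{(2)}} \ge \quantile(\optprice)\,\frac{\optprice}{1+\veps} \ge (1-\veps)\,\myrev{\optprice}.
\]

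\textbf{Concentration.} Let $\opt=\myrev{\optprice}$. Because the support is $[1,H]$, we have $\opt\ge\myrev{1}=1$. For each $\price\in S$, the empirical revenue $\estmyrev{\price}$ is $\price$ times an average of $N$ Bernoulli$(\quantile(\price))$ variables. I will apply a Bernstein (or multiplicative Chernoff) bound with additive error $\veps\,\opt$. The variance of a single term $\price\cdot\indicator{\val\ge\price}$ is at most $\price^2\quantile(\price)=\price\cdot\myrev{\price}\le H\,\opt$, and each term is bounded by $H$. Bernstein therefore gives
\[
\prob{|\estmyrev{\price}-\myrev{\price}|>\veps\opt}\le 2\exp\!\left(-\frac{N\veps^2\opt^2}{2H\opt+\tfrac{2}{3}H\veps\opt}\right)\le 2\exp\!\left(-\frac{N\veps^2}{4H}\right),
\]
where the last step uses $\opt\ge 1$. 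With $N=\frac{16H}{\veps^2}\log\frac{4H}{\veps\delta}$, this is at most $\delta/|S|$, because $|S|\le 2\log H/\veps+2\le 4H/\veps$. A union bound over $S$ then gives a good event of probability at least $1-\delta$.

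The main obstacle is getting an error that is multiplicative relative to $\opt$, not relative to $\myrev{\price}$, while still paying only $H/\veps^2$ queries per price. The key is measuring the error in units of $\opt\ge 1$. A plain Hoeffding bound would cost $H^2/\veps^2$ queries, so the variance bound $\price\,\myrev{\price}\le H\opt$ is what makes Bernstein give the right rate.

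\textbf{Comparison and counting.} On the good event,
\[
\myrev{p^\alg}\ge\estmyrev{p^\alg}-\veps\opt\ge\estmyrev{p^{(2)}}-\veps\opt\ge\myrev{p^{(2)}}-2\veps\opt\ge(1-3\veps)\opt.
\]
The total number of queries is $|S|\cdot N=\order\!\left(\frac{\log H}{\veps}\cdot\frac{H}{\veps^2}\log\frac{H}{\veps\delta}\right)$, which is the bound stated in \Cref{thm:general_range_upper}.
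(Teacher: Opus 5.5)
Your proposal is correct and follows essentially the same route as the paper: round $\optprice$ down to the grid to get a $(1-\veps)$-optimal point of $S$, apply a variance-sensitive concentration bound at each $\price\in S$ combined with $\opt\ge 1$ and a union bound over $S$, and finish with the standard empirical-maximizer comparison giving $(1-3\veps)\opt$. The only difference is the concentration tool. You use Bernstein with variance at most $\price\cdot\myrev{\price}\le H\cdot\opt$, while the paper uses Freedman's inequality with the same variance bound; your explicit Bernstein calculation is slightly cleaner, since the paper's Freedman display leaves out the $\sqrt{\log(1/\delta)}$ factor on the variance term.
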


\subsection{Lower Bound with Value-Range Hint}\label{sec:general_lower_bound}
Finally, in this section, we establish an $\Omega(H/\veps^3)$ lower bound on the query complexity for general value distributions supported on $[1, H]$, showing that the $\otil(H/\veps^3)$ upper bound achieved in \Cref{sec:general_upper} is tight up to logarithmic factors. Our construction utilizes a discrete distribution with constant revenue on the grid support, reducing revenue maximization to distinguishing a local probability mass shift among disjoint intervals. The full proof is deferred to \Cref{app:general_lower_bound}.
\begin{theorem}\label{thm:general_lower_bound}
    Fix $H \ge 20$ and $\veps \in (0, 1/10)$. Any algorithm that guarantees a $(1-\veps/4)$-approximation of the optimal revenue with probability at least $2/3$ must perform $\Omega(H/\veps^3)$ queries.
\end{theorem}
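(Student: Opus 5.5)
We plan to prove \Cref{thm:general_lower_bound} by a multi-hypothesis (Assouad/Fano-style) construction built on an equal-revenue base. The lower bound splits as $H/\veps^2$ queries per informative price times $1/\veps$ disjoint regions that must all be searched.

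\textbf{Base instance.} Take the discrete equal-revenue distribution supported on the geometric grid $v_j = (1+c\veps)^j$ for $j=0,\dots,J$, with $J = \Theta(\veps^{-1}\log H)$, restricted to the top portion of $[1,H]$. Concretely, we choose the grid inside $[H/2, H]$, so $J = \Theta(1/\veps)$ points. The masses are set so that $\quantile(v_j)\cdot v_j$ is a constant $R_0$ for all grid points, with the leftover mass placed at $1$. Since $R_0\ge 1$ is needed to beat price $1$, set $R_0 = 2$; then $\quantile(v_j)=2/v_j = \Theta(1/H)$ and each point mass is $\Theta(\veps/H)$.

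\textbf{Perturbations.} Group the grid into $K=\Theta(1/\veps)$ disjoint blocks (pairs of adjacent points suffice). For block $k$, define $F_k$ by moving $\Theta(\veps)\cdot(\veps/H)$ mass from one atom down to the atom just below it, or more simply by raising the mass of one atom by an $\Theta(\veps)$ relative factor, compensated at $1$. Then:
\begin{itemize}
\item under $F_k$, one price in block $k$ has revenue $\ge R_0(1+\veps)$;
\item every price outside block $k$ has revenue $\le R_0$, so it fails a $(1-\veps/4)$-approximation.
\end{itemize}
Also include a null instance $F_0$ (the base). A successful algorithm therefore identifies $k$ with probability $2/3$.

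\textbf{Information bound.} A query at price $p$ changes the Bernoulli parameter only if $p$ lies in block $k$'s affected range, so at most $O(1)$ blocks are affected per price. The change is $|\quantile_k(p)-\quantile_0(p)| = \Theta(\veps^2/H)$ against a base $\quantile_0(p)=\Theta(1/H)$. Using $\kl(\Bern(a)\|\Bern(b))\le (a-b)^2/(b(1-b))$, each such query contributes $O(\veps^4/H)$ KL.

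Let $m_k$ be the expected number of queries under $F_0$ landing in block $k$. Then $\KL(\mathbb P_0\|\mathbb P_k)\le O(m_k\veps^4/H)$, and $\sum_k m_k\le m$. By Pinsker and an averaging argument (for most $k$, $m_k\le 2m/K$), if $m = o(H/\veps^3)$ then for a constant fraction of $k$ the divergence is tiny. For those $k$, $\mathbb P_k$ and $\mathbb P_0$ induce nearly the same output distribution. But the output can land in block $k$ with probability $\ge 2/3$ only for $O(1)$ indices $k$ under $\mathbb P_0$, since the block events are disjoint; this gives a contradiction.

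\textbf{Main obstacle.} The hardest step is choosing the perturbation scale consistently. The revenue gap must be $\Theta(\veps)$ relative so that approximation factor $1-\veps/4$ separates instances. Simultaneously, the quantile shift must be $\Theta(\veps)$ relative only at a single grid width of mass $\Theta(\veps/H)$, which gives an absolute shift of $\Theta(\veps^2/H)$ and hence KL $\veps^4/H$ per query. We must also ensure that perturbing one atom's mass raises the revenue only at that block's prices, not elsewhere. Because quantiles are cumulative, adding mass at $v_j$ raises $\quantile(p)$ for all $p\le v_j$. To localize the effect, we shift mass from $v_{j+1}$ to $v_j$ rather than adding new mass, so $\quantile$ changes only on $(v_j, v_{j+1}]$. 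We would check this carefully, along with the constraints $H\ge 20$ and $\veps<1/10$, which keep all masses nonnegative.
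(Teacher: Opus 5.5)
Your architecture matches the paper's: an equal-revenue grid of $\Theta(1/\veps)$ atoms in $[H/2,H]$ spaced by $\Theta(\veps H)$, one localized perturbation per adjacent pair, the KL chain rule, and a count over $\Theta(1/\veps)$ disjoint informative price ranges. Your closing step is also sound. You let half the blocks get $m_k\le 2m/K$ expected queries under $F_0$, apply Pinsker, and use $\sum_k \mathbb P_0(p^{\alg}\in\mathcal P_k)\le 1$ by disjointness of the good sets; this spells out what the paper delegates to ``standard information-theoretic lower bounds.'' However, the perturbation itself, which you rightly call the crux, fails as specified, in two ways.

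\emph{Direction:} moving mass from $v_{j+1}$ down to $v_j$ decreases $\quantile(p)$ on $(v_j,v_{j+1}]$ and changes nothing else. So under $F_k$ no price earns more than $R_0$, the optimum is still $R_0$ at every other grid point, and there is nothing to identify. You must move mass up, from $v_j$ to $v_{j+1}$, which raises $\quantile$ exactly on $(v_j,v_{j+1}]$.

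\emph{Scale:} the revenue gain at $v_{j+1}$ is $v_{j+1}=\Theta(H)$ times the mass moved, so moving $\Theta(\veps^2/H)$ buys only an additive $\Theta(\veps^2)$ gain. For small $\veps$ this is swamped by the $\veps/4$ slack: the untouched grid points (revenue $R_0$) stay $(1-\veps/4)$-optimal under every $F_k$, and the sets $\mathcal P_k$ are no longer disjoint. A gain of $R_0\cdot\Theta(\veps)$ requires moving a constant fraction of an atom, i.e.\ $\Theta(\veps/H)$ mass. The paper moves the whole atom, giving $\quantile_{F_k}(p_{k+1})=5/p_k$ and $\myrev[F_k]{p_{k+1}}=5p_{k+1}/p_k\in[5(1+\veps/2),5(1+\veps)]$.

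Consequently the per-query divergence is $\Theta\bigl((\veps/H)^2/(1/H)\bigr)=\Theta(\veps^2/H)$ (the paper gets $\le 40\veps^2/H$), not $O(\veps^4/H)$. Your numbers are internally inconsistent here. Feeding $O(\veps^4/H)$ into your own averaging gives $\KL(\mathbb P_0\|\mathbb P_k)\le O(m_k\veps^4/H)=O(m\veps^5/H)$, i.e.\ an $\Omega(H/\veps^5)$ lower bound. That would contradict the $\otil(H/\veps^3)$ upper bound of \Cref{thm:general_range_upper}. Your stated conclusion that $m=o(H/\veps^3)$ makes most divergences tiny is correct only with the $\Theta(\veps^2/H)$ figure, as is your opening heuristic of $H/\veps^2$ queries per region. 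Once the whole atom is shifted upward, the rest of your plan goes through and essentially coincides with the paper's proof.
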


%% file: aux-lemmas.tex
We here collect the concentration inequalities that we use throughout the paper. 
The first  concentration inequality is the Bernstein's Inequality for the Bernoulli Distribution. 

\begin{lemma}[Bernstein's Inequality for Bernoulli Distribution]
\label{thm:bernstein-bernoulli}
Let $X_1, X_2, \dots, X_n$ be i.i.d. Bernoulli random variables with $E[X_i] = q$. Let $\bar{X} = \frac{1}{n} \sum_{i=1}^n X_i$ be the empirical mean. For any $0 < \epsilon < 1$, the probability that the empirical mean deviates from its expectation $q$ by an absolute value of at least $\epsilon$ is bounded by:
\begin{equation*}
    \prob{|\bar{X} - q| \ge \epsilon} ~\le~ 2\exp \left(-\frac{n\epsilon^2}{2q + \frac{2}{3}\epsilon} \right)
\end{equation*}
\end{lemma}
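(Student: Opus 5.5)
We plan to use the standard Chernoff (moment generating function) method, treating the upper and lower tails separately and combining them with a union bound, which is where the factor $2$ comes from. Write $S_n = \sum_{i=1}^n (X_i - q)$, so that $\bar X - q = S_n/n$. For the upper tail, for any $\lambda > 0$ Markov's inequality gives
\begin{align*}
\prob{\bar X - q \ge \epsilon} ~\le~ e^{-\lambda n \epsilon}\, \left(\expect{e^{\lambda(X_1 - q)}}\right)^n .
\end{align*}
The key step is to bound the one-variable MGF explicitly. Since $X_1$ is Bernoulli, $\expect{e^{\lambda X_1}} = 1 + q(e^\lambda - 1) \le \exp\left(q(e^\lambda - 1)\right)$ by $1+x\le e^x$. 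Hence
\begin{align*}
\expect{e^{\lambda(X_1-q)}} ~\le~ \exp\left(q\,(e^\lambda - 1 - \lambda)\right).
\end{align*}

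Next we control $e^\lambda - 1 - \lambda$ by a Bernstein-type quadratic. Expanding in a series, $e^\lambda - 1 - \lambda = \sum_{k\ge 2} \lambda^k/k!$. Using $k! \ge 2\cdot 3^{k-2}$ for $k \ge 2$, for $0<\lambda<3$ we get
\begin{align*}
e^\lambda - 1 - \lambda ~\le~ \frac{\lambda^2}{2}\sum_{k\ge 2}\left(\frac{\lambda}{3}\right)^{k-2} ~=~ \frac{\lambda^2}{2(1-\lambda/3)} .
\end{align*}
Substituting this into the Markov bound gives
\begin{align*}
\prob{\bar X - q \ge \epsilon} ~\le~ \exp\left(-n\left(\lambda\epsilon - \frac{q\lambda^2}{2(1-\lambda/3)}\right)\right).
\end{align*}
We then choose $\lambda = \epsilon/(q + \epsilon/3)$. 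This choice is admissible, since $\lambda < 3$ holds automatically. A short computation gives $1-\lambda/3 = q/(q+\epsilon/3)$, so the exponent becomes exactly
\begin{align*}
\lambda\epsilon - \frac{\lambda\epsilon}{2} ~=~ \frac{\epsilon^2}{2(q+\epsilon/3)} ~=~ \frac{\epsilon^2}{2q + \frac{2}{3}\epsilon}.
\end{align*}
This yields $\prob{\bar X - q \ge \epsilon} \le \exp\left(-n\epsilon^2/(2q+\frac23\epsilon)\right)$.

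For the lower tail we repeat the argument with $-(X_i - q)$. We have $\expect{e^{-\lambda X_1}} = 1 - q(1-e^{-\lambda}) \le \exp\left(q(e^{-\lambda}-1)\right)$, so $\expect{e^{-\lambda(X_1-q)}} \le \exp\left(q(e^{-\lambda} - 1 + \lambda)\right) \le \exp(q\lambda^2/2)$, because $e^{-\lambda} \le 1 - \lambda + \lambda^2/2$ for $\lambda \ge 0$. This bound is even stronger than the upper-tail one, since $\lambda^2/2 \le \lambda^2/(2(1-\lambda/3))$. The same choice of $\lambda$ therefore gives the identical exponent. Adding the two tails proves the stated inequality.

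The only mildly delicate step is the series bound on $e^\lambda-1-\lambda$ together with the exact algebra behind the choice of $\lambda$. Both are routine, and I do not anticipate a genuine obstacle. Note that the hypothesis $0<\epsilon<1$ is not even needed beyond ensuring $\lambda$ is well defined.
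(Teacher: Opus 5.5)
Your proof is correct. The paper gives no proof of its own; it simply quotes the standard Bernstein inequality, with variance $q(1-q)\le q$ and range $1$. Your Chernoff--MGF argument is precisely the textbook derivation behind it, with $1+x\le e^x$ supplying the variance proxy $q$ and $e^\lambda-1-\lambda\le \lambda^2/(2(1-\lambda/3))$ supplying the $\tfrac{2}{3}\epsilon$ term.

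The only nit concerns $q=0$. There your choice gives $\lambda=3$, so the claim $\lambda<3$ and the division by $1-\lambda/3=q/(q+\epsilon/3)$ both fail. Dispatch that case separately; it is trivial, because then $\bar X=0$ almost surely and the upper-tail event has probability zero.
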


The second one is the Freedman's inequality.

\begin{lemma}[Theorem 1 in~\citealp{BLLRS-11}]\label{lem:Freedman}
Let $X_1,\dots,X_T\in[-B,B]$ for some $B>0$ be a martingale difference sequence and with $\sum_{t=1}^T\E_t[X_t^2]\leq V$ for some fixed quantity $V>0$. We have for all $\delta\in(0,1)$, with probability at least $1-\delta$, 
\begin{align*}
    \sum_{t=1}^TX_t\leq \min_{\lambda\in[0,1/B]}\left(\lambda V+\frac{\log(1/\delta)}{\lambda}\right) \leq 2\sqrt{V\log(1/\delta)}+B\log(1/\delta).
\end{align*}
\end{lemma}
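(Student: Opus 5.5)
The plan is the standard exponential-supermartingale (Chernoff-type) argument, run with a deterministic choice of the tilting parameter. Since $V$, $B$ and $\delta$ are fixed in advance, I will first prove the bound for a single fixed $\lambda\in(0,1/B]$. I will then observe that the minimizer $\lambda^\star$ of $\lambda V+\log(1/\delta)/\lambda$ over $[0,1/B]$ is itself a deterministic quantity. Consequently no union bound over $\lambda$ is needed to obtain the first inequality. (At $\lambda=0$ the expression is $+\infty$, so we may restrict to $\lambda>0$.)

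\textbf{Step 1 (one-step mgf bound).} Fix $\lambda\in(0,1/B]$. Because $|X_t|\le B$, we have $|\lambda X_t|\le 1$. I will use the elementary inequality $e^x\le 1+x+x^2$, valid for all $x\le 1$. Taking conditional expectations and using $\E_t[X_t]=0$ gives
\[
\E_t\!\left[e^{\lambda X_t}\right]\le 1+\lambda^2\E_t[X_t^2]\le \exp\!\left(\lambda^2\E_t[X_t^2]\right).
\]

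\textbf{Step 2 (supermartingale and Markov).} Define
\[
M_t=\exp\!\Big(\lambda\sum_{s\le t}X_s-\lambda^2\sum_{s\le t}\E_s[X_s^2]\Big).
\]
By Step 1, $M_t$ is a nonnegative supermartingale with $M_0=1$, so $\E[M_T]\le 1$. Markov's inequality then gives $\prob{M_T\ge 1/\delta}\le\delta$. On the complementary event, using $\sum_t\E_t[X_t^2]\le V$,
\[
\lambda\sum_t X_t\le \lambda^2 V+\log(1/\delta),
\]
that is, $\sum_t X_t\le \lambda V+\log(1/\delta)/\lambda$. Applying this with $\lambda=\lambda^\star$ yields the first inequality. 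The only subtle point in the whole argument is this step: it uses that $V$ is a fixed, deterministic upper bound rather than a random one, which is exactly what the hypothesis provides.

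\textbf{Step 3 (explicit bound).} I will bound the minimum by plugging in $\lambda=\min\{1/B,\sqrt{\log(1/\delta)/V}\}$ and splitting into two cases.

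In the first case, $\sqrt{\log(1/\delta)/V}\le 1/B$. Then $\lambda=\sqrt{\log(1/\delta)/V}$ and the value is exactly $2\sqrt{V\log(1/\delta)}$.

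In the second case, $\sqrt{\log(1/\delta)/V}> 1/B$. Then $\lambda=1/B$ and the value is $V/B+B\log(1/\delta)$. The case condition $V<B^2\log(1/\delta)$ gives $\sqrt V<B\sqrt{\log(1/\delta)}$. Hence
\[
\frac{V}{B}=\frac{\sqrt V\cdot\sqrt V}{B}<\sqrt V\cdot\sqrt{\log(1/\delta)}=\sqrt{V\log(1/\delta)},
\]
and the value is at most $\sqrt{V\log(1/\delta)}+B\log(1/\delta)$.

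In both cases the bound is at most $2\sqrt{V\log(1/\delta)}+B\log(1/\delta)$. None of these steps is a real obstacle.
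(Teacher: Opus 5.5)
Your proof is correct. The paper gives no proof of its own; it imports this as Theorem~1 of \citet{BLLRS-11}, and your argument is the standard exponential-supermartingale proof behind that result: a deterministic tilting parameter $\lambda\in(0,1/B]$, Markov's inequality, then the two-case evaluation at $\lambda=\min\{1/B,\sqrt{\log(1/\delta)/V}\}$. The only difference is cosmetic: the original uses the sharper $e^x\le 1+x+(e-2)x^2$ for $x\le 1$, whereas your cruder $e^x\le 1+x+x^2$ already suffices for the constants in the stated form.
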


%% file: claim-lemmas.tex
In this section, we provide omitted proofs for the helpful claims and lemmas in \Cref{sec:mainalg-proof}.

\subsection{Proof of \texorpdfstring{\Cref{clm:ai-bi-bound}}{}}

\begin{proof}
     Based on the way that \Cref{alg:mainalg} discretize values in $[\ell, r]$ and the definition of $a_i$, we have 
    \[
    \ell_i + 0.2 \cdot (r_i - \ell_i) ~<~ a_i ~\leq~ (1 + \varepsilon) \cdot \big( \ell_i + 0.2 \cdot (r_i - \ell_i) \big).
    \]
    To prove the upper bound for $a_i$, it suffices to show $(1 + \varepsilon) \cdot \big( \ell_i + 0.2 \cdot (r_i - \ell_i) \big) < \ell_i + 0.3 \cdot (r_i - \ell_i) $, which is equivalent to show
    \begin{align}
        (0.1 + 0.8\varepsilon) \cdot \ell_i ~\leq~ (0.1 - 0.2\varepsilon) \cdot r_i. \label{eq:ai-bi-core}
    \end{align}
    To prove \Cref{eq:ai-bi-core}, note that when $|S_i| \geq 20$, which follows from Line 3 in \Cref{alg:mainalg}, there must be 
    \[
    r_i ~\geq~ \ell_i \cdot (1 + \varepsilon)^{19}.
    \]
    Then, we have
    \begin{align*}
        \frac{(0.1 - 0.2 \varepsilon) \cdot r_i}{(0.1 + 0.8\varepsilon) \cdot \ell_i} ~\geq~ \frac{0.1 - 0.2\varepsilon}{0.1 + 0.8\varepsilon} \cdot (1 + \varepsilon)^{19} ~>~ 1,
    \end{align*}
    where the last inequality uses the fact that $\varepsilon \in (0, 0.1]$, and can be verified numerically.

    Similarly, the lower bound for $b_i$ is guaranteed by the definition of $b_i$, and to show that $b_i < \ell_i + 0.6 \cdot (r_i - \ell_i)$, it suffices to show
    \[
    (1 + \varepsilon) \cdot \big( \ell_i + 0.5 \cdot (r_i - \ell_i) \big) < \ell_i + 0.6 \cdot (r_i - \ell_i),
    \]
    or equivalently
    \[
    (0.1 + 0.5\varepsilon) \cdot \ell_i ~<~ (0.1 - 0.5\varepsilon) \cdot r_i,
    \]
    which is true since
    \[
    \frac{(0.1 - 0.5 \varepsilon) \cdot r_i}{(0.1 + 0.5\varepsilon) \cdot \ell_i} ~\geq~ \frac{0.1 - 0.5\varepsilon}{0.1 + 0.5\varepsilon} \cdot (1 + \varepsilon)^{19} ~>~ 1,
    \]
    where the last inequality uses the fact that $\varepsilon \in (0, 0.1]$ and can be verified numerically.
\end{proof}

\subsection{Proof of \texorpdfstring{\Cref{lma:round-bound}}{}}

\begin{proof}
    We prove \Cref{lma:round-bound} via upper-bounding $|S_{i+1}|/|S_i|$, i.e., we show the size of $S_i$ shrinks significantly after each round. Note that in round $i$ of \Cref{alg:mainalg}, either the values smaller than $a_i$ or the values larger than $b_i$ are dropped. Therefore, it suffices to lower-bound $|S_i \cap [\ell_i, a_i)|/|S_i|$ and $|S_i \cap (b_i, r_i]|/|S_i|$ respectively.

\xhdr{Bounding $|S_i \cap [\ell_i, a_i)|/|S_i|$} By \Cref{clm:ai-bi-bound}, we have
\begin{align*}
    |S_i \cap [\ell_i, a_i)| ~\geq~ \big|S_i \cap [\ell_i, \ell_i + 0.2 \cdot (r_i - \ell_i)] \big| ~\geq~ \frac{\log (0.8 + 0.2r_i/\ell_i)}{\log (1 + \varepsilon)},
\end{align*}
where the final value on the RHS is the solution of $\ell_i \cdot (1+\varepsilon)^x = \ell_i + 0.2(r_i - \ell_i)$.
On the other hand, since both $r_i$ and $\ell_i$ can be written as $\ell \cdot (1 + \varepsilon)^k$ for some non-negative integer $k$, the size of $|S_i|$ can be written as
\begin{align}
    |S_i| = \frac{\log(r_i/\ell_i)}{\log (1 + \varepsilon)} + 1, \label{eq:si-size}
\end{align}
which implies
\begin{align}
    \frac{|S_i \cap [\ell_i, a_i)|}{|S_i|} ~\geq~ \frac{\log (0.8 + 0.2 r_i /\ell_i) }{\log (r_i / \ell_i) + \log (1 + \varepsilon)}  ~\geq~ \frac{\log (0.8 + 0.2 r_i /\ell_i) }{2\log (r_i / \ell_i)} ~\geq~ 0.1, \label{eq:si-befor-ai}
\end{align}
where the second inequality uses the fact that $r_i/\ell_i \geq (1 + \varepsilon)^{19} \geq 1 + \varepsilon$ (since $|S_i| \geq 20$), and the last inequality holds when $r_i/\ell_i > 1$. 

\xhdr{Bounding $|S_i \cap (b_i, r_i]|/|S_i|$} By \Cref{clm:ai-bi-bound}, we have
\begin{align*}
    |S_i \cap (b_i, r_i]|/|S_i| ~\geq~ \big|S_i \cap [\ell_i + 0.6 \cdot (r_i - \ell_i), r_i] \big| ~\geq~ \frac{-\log (0.6 + 0.4\ell_i/r_i)}{\log (1 + \varepsilon)},
\end{align*}
    where the final value on the RHS is the solution of $(\ell_i + 0.6 \cdot (r_i - \ell_i)) \cdot (1+\varepsilon)^x = r_i$. Combining the above inequality with \Cref{eq:si-size} gives
    \begin{align*}
        \frac{|S_i \cap (b_i, r_i]|}{|S_i|} ~\geq~ \frac{-\log (0.6 + 0.4\ell_i/r_i)}{\log(r_i / \ell_i) +\log (1 + \varepsilon)} ~\geq~  \frac{-\log (0.6 + 0.4\ell_i/r_i)}{2\log(r_i / \ell_i)},
    \end{align*}
    where the last inequality uses the fact that $r_i/\ell_i \geq 1 + \varepsilon$. Define function 
    \[
    h(x) ~=~ \frac{-\log (0.6 + 0.4/x)}{2\log x}.
    \]
    Note that function $h(x)$ is decreasing on $[1, +\infty]$. Therefore, when $r_i/\ell_i ~\leq~ 2$, we have
    \[
    \frac{-\log (0.6 + 0.4\ell_i/r_i)}{2\log(r_i / \ell_i)} ~=~ h(r_i/\ell_i) ~\geq~ h(2) ~\geq~ 0.1,
    \]
    where the last step can be verified numerically. Otherwise (when $r_i/\ell_i > 2$), we have
    \begin{align*}
        \frac{-\log (0.6 + 0.4\ell_i/r_i)}{2\log(r_i / \ell_i)} ~\geq~ \frac{-\log (0.6 + 0.4/2)}{2\log(r / \ell)} ~\geq~ \frac{0.1}{\log (r/\ell)},
    \end{align*}
    where the first inequality uses the fact that $\ell \leq \ell_i \leq r_i \leq r$. Combining two cases together gives
    \begin{align}
        |S_i \cap (b_i, r_i]|/|S_i| ~\geq~ \min \left\{0.1,\frac{0.1}{\log (r/\ell)} \right\}. \label{eq:si-after-bi}
    \end{align}

\xhdr{Finishing the proof} Combining \Cref{eq:si-befor-ai} and \Cref{eq:si-after-bi} gives
\begin{align*}
    |S_{i+1}| ~\leq~ |S_i| \cdot \left( 1- \min \left\{ 0.1, \frac{0.1}{\log (r/\ell)} \right\} \right).
\end{align*}
Then, the maximum number of rounds $R$ should satisfy
\[
20 ~\leq~ |S_{R}| ~\leq~ |S_1| \cdot \left( 1- \min \left\{ 0.1, \frac{0.1}{\log (r/\ell)} \right\} \right)^{R-1}.
\]
When $\frac{0.1}{\log (r/\ell)}$ is larger than $0.1$, i.e., $r/\ell \leq e$, we have $20 \leq |S_1| \cdot 0.9^{R-1}$, which implies
\[
R ~\leq~ \frac{\log |S_1| - \log 18}{-\log 0.9} ~\leq~ 10\log |S_1|. 
\]
On the other hand, when $\frac{0.1}{\log (r/\ell)}$ is smaller than $0.1$, i.e., $r/\ell > e$, we have 
\[
20 ~\leq~ |S_1| \cdot \left(1 - \frac{0.1}{\log (r/\ell)} \right)^{R - 1},
\]
which implies
\[
R ~\leq~ \frac{\log |S_1| - \log 20}{-\log \left(1 - \frac{0.1}{\log (r/\ell)}\right)} + 1 ~\leq~ \log |S_1| \cdot 10 \log(r/\ell) + 1 ~\leq~ 11 \log |S_1| \cdot \log (r/\ell),
\]
where the second inequality uses the assumption that $0.1/\log(r/\ell) \leq 0.1$ and the fact that there must be  $-1/\log (1 - x) \leq 1/x$ for $x \in (0, 0.1]$. Combining two cases together gives
\begin{align}
    R ~\leq~ 11 \log |S_1| \cdot \big(\log (r/\ell) + 1\big) \label{eq:r-bound-tmp}
\end{align}

It remains to bound $|S_1|$. By \Cref{alg:mainalg}, there must be 
\[
\ell \cdot (1 + \varepsilon)^{|S_1| - 1} ~\leq~ r ~\leq~ \ell \cdot (1 + \varepsilon)^{|S_1|},
\]
which implies 
\[
|S_1| ~\leq~ \frac{\log (r/\ell)}{\log (1 + \varepsilon)} + 1 ~\leq~ \log(r/\ell) \cdot (\varepsilon^{-1} + 1) + 1 ~\leq~ 2\varepsilon^{-1} \cdot (1 + \log (r/\ell)),
\]
where the second inequality uses the fact that $1/\log(1+x) \leq 1/x + 1$ for $x \in (0, 0.1]$. Taking the logarithm of both sides of the above inequality gives
\[
\log |S_1|~ \leq~ \log (\varepsilon^{-1}) + \log 2 + \log(r/\ell) ~\leq~ 2\log(\varepsilon^{-1}) + \log(r/\ell),
\]
where we use the fact that $\log(1+x) \leq x$ for the term $\log (1 + \log(r/\ell))$. Applying the above inequality to \Cref{eq:r-bound-tmp} and using the assumption that $\varepsilon \leq 0.1$ to simplify constants gives
\[
R ~\leq~ 22\log^2 (r/\ell) + 44 \log (r/\ell) \cdot \log \varepsilon^{-1} + 66 \log \varepsilon^{-1}. \qedhere
\]
\end{proof}

\subsection{Proof of \texorpdfstring{\Cref{clm:est-error-bound}}{}}

\begin{proof}
    \Cref{alg:mainalg} estimates the quantile of $R \leq \widetilde R$ prices, and the revenue of at most $2R + |S_{can}| \leq 4R + 20 \leq 5R$ prices. Therefore, it suffices to show that $|\estquantile(p) - \quantile(p)| \leq 0.25\gamma$ and $|\estmyrev{p} - \myrev{p}| \leq \varepsilon \cdot \myrev{p}$ hold for every estimated price $p$ with probability at least $1 - \frac{\delta}{6\widetilde R}$, as applying the union bound over all inequalities proves \Cref{clm:est-error-bound}.

    \xhdr{Bounding the error of a quantile estimate.} Let $p$ be the price such that $\estquantile(p)$ is estimated in \Cref{alg:mainalg}. If $\quantile(p) \geq 2\gamma$, by Bernstein's Inequality (\Cref{thm:bernstein-bernoulli}), after estimating each quantile $\quantile(p)$ via $C \cdot \log (\widetilde R \cdot \delta^{-1}) \cdot \gamma^{-1} $ queries, we have
    \begin{align*}
        \Pr{|\estquantile(p) - \quantile(p)| ~\geq~ 0.5\quantile(p)} ~&\leq~ 2\exp \left(-\frac{C \cdot \log (\widetilde R \cdot \delta^{-1}) \cdot \gamma^{-1} \cdot \quantile(p)^2/4}{2\quantile(p) + \frac{2}{3} \cdot 0.5\quantile(p)}\right) \\
        ~&\leq~ 2\exp \left(-\frac{C \cdot \log (\widetilde R \cdot \delta^{-1}) \cdot  \quantile(p)}{10 \gamma}\right) \\
        ~&\leq~  2\exp \left(-\frac{C \cdot \log (\widetilde R \cdot \delta^{-1})}{5}\right) ~\leq~ \frac{\delta}{6\widetilde R},
    \end{align*}
    where the third inequality uses the assumption that $\quantile(p) \geq 2\gamma$, and the last inequality holds when $C$ is sufficiently large.

    On the other hand, if $\quantile(p) < 2\gamma$, by Bernstein's Inequality, we have
    \begin{align*}
        \Pr{|\estquantile(p) - \quantile(p)| ~\geq~ 0.25\gamma} ~&\leq~ 2\exp \left(-\frac{C \cdot \log (\widetilde R \cdot \delta^{-1}) \cdot \gamma^{-1} \cdot \gamma^2/16}{2\quantile(p) + \frac{2}{3} \cdot 0.25\gamma}\right) \\
        ~&\leq~ 2\exp \left(-\frac{C \cdot \log (\widetilde R \cdot \delta^{-1}) }{67}\right)  ~\leq~ \frac{\delta}{6\widetilde R},
    \end{align*}
    where the second inequality uses the assumption that $\quantile(p) < 2\gamma$, and the last inequality holds when $C$ is sufficiently large.

\xhdr{Bounding the error of a revenue estimate} We assume the first statement in \Cref{clm:est-error-bound}, i.e., the error bound of all quantile estimates hold. Let $p$ be the price such that $\estmyrev{p}$ is estimated in \Cref{alg:mainalg}. We first note that there must be $\quantile(p) \geq 0.5\gamma$, as \Cref{alg:mainalg} estimates the revenue for a monopoly price only when $p$ equals to or is lower than a price that misses the condition in Line 9 of \Cref{alg:mainalg}, i.e., there exists $p' \geq p$ such that $\estquantile(p') \geq 0.75 \gamma$ is satisfied. By the first statement in \Cref{clm:est-error-bound}, we have either $\quantile(p) \geq \quantile(p') \geq 2\gamma$, which means $\quantile(p) \geq 0.5\gamma$ is already satisfied, or $\quantile(p') < 2\gamma$, and there must be 
\[
\quantile(p) ~\geq~ \quantile(p') ~\geq~ \estquantile(p') - 0.25 \gamma ~\geq~ 0.5\gamma.
\]

Now, we upper-bound the probability that $|\estmyrev{p} - \myrev{p}| \geq \varepsilon \myrev{p}$. Based on the way that we estimate $\estmyrev{p}$ in \Cref{alg:mainalg}, it is equivalent to bound the probability that $|\estquantile(p) - \quantile(p)| \geq \varepsilon \cdot \quantile(p)$, where $\estquantile(p)$ is estimated via $C \log(\widetilde R/\delta) \cdot \gamma^{-1} \cdot \varepsilon^{-2}$ queries. Then, by Bernstein's Inequality (\Cref{thm:bernstein-bernoulli}), we have
\begin{align*}
    \Pr{|\estquantile(p) - \quantile(p)| \geq \varepsilon \cdot \quantile(p)} ~&\leq~ 2\exp \left(-\frac{C \log(\widetilde R/\delta) \cdot \gamma^{-1} \cdot \varepsilon^{-2} \cdot \varepsilon^2 \cdot \quantile^2(p)}{2\quantile(p) + \frac{2}{3} \cdot \varepsilon \cdot \quantile(p)} \right) \\
    ~&\leq~ 2\exp \left(-\frac{C \log(\widetilde R/\delta) \cdot \quantile(p)}{3\gamma} \right) \\
    ~&\leq~ 2\exp \left(-\frac{C \log(\widetilde R/\delta) }{6} \right) ~\leq~ \frac{\delta}{6\widetilde R},
\end{align*}
where the third inequality uses the fact that $\quantile(p) \geq 0.5\gamma$, and the last inequality holds when $C$ is sufficiently large.
\end{proof}

%% file: app-general.tex
\subsection{Proof of \Cref{thm:general_range_upper}}\label{app:general_upper_bound}

\begin{proof}
    Note that since values are in $[1, H]$, we have $\opt \ge 1$. By the construction of $S$, there exists a grid point $\price^\dagger \in S$ such that $\frac{\optprice}{1+\veps} \le \price^\dagger \le \optprice$. Direct calculation shows that $\myrev{\price^\dagger} = \price^\dagger (1-F(\price^\dagger)) \ge \price^\dagger (1-F(\optprice)) \geq \frac{\opt}{1+\veps}\geq (1-\veps)\opt$.

    Fix a price $\price \in S$. The empirical revenue is $\estmyrev{\price} = \frac{1}{N} \sum_{j=1}^N Z_j$, where $Z_j = p \cdot \mathbb{I}\{v_j \ge p\}$. According to Freedman's inequality (included as \Cref{lem:Freedman} for completeness), we know that for a fixed $\price\in S$, with probability at least $1-\delta$,
    \begin{align*}
        \left|\myrev{\price} - \estmyrev{\price}\right| \leq \frac{2}{N}\sqrt{\sum_{j=1}^N\E[Z_j^2]} + \frac{H}{N}\log\frac{1}{\delta} \leq 2\sqrt{\frac{H\cdot \myrev{\price}}{N}} + \frac{H\log\frac{1}{\delta}}{N}. 
    \end{align*}

    Taking a union bound over all $\price\in S$, we know that with probability at least $1-\delta$, for all $\price\in S$,
    \begin{align*}
        \left|\myrev{\price} - \estmyrev{\price}\right| \leq 2\sqrt{\frac{H\cdot \myrev{\price}}{N}} + \frac{H\log\frac{4\log H}{\veps\delta}}{N},
    \end{align*}
    where we use the fact that $|S|\leq \lceil\frac{\log H}{\log(1+\veps)}\rceil \leq \frac{4\log H}{\veps}$ since $\veps\in(0,1)$. 
Plugging in the definition of $N=\frac{16 \cdot H}{\veps^2} \log\left(\frac{4H}{\veps \delta}\right)$ and using the fact that $\opt \ge 1$ and $\myrev{\price} \le \opt$, we know that:
\begin{align*}
    2\sqrt{\frac{H\cdot \myrev{\price}}{N}} + \frac{H\log\frac{4\log H}{\veps\delta}}{N} \leq 2\sqrt{\frac{H \cdot \opt}{\frac{16 H}{\veps^2}}} +\frac{\veps^2}{16} \leq \frac{2\veps\sqrt{\opt}}{4} + \frac{1}{2}\veps\opt \leq \veps\opt.
\end{align*}
Thus, with probability at least $1-\delta$, for all $\price \in S$:
\begin{align}\label{eqn:concentration_general}
    \left|\myrev{\price} - \estmyrev{\price}\right| \leq \veps\opt.
\end{align}
Recall that $\price^\alg$ be the price returned by the algorithm. We can now lower bound its true revenue:
\begin{align*}
    \myrev{\price^\alg} &\geq \estmyrev{\price^\alg} - \veps\opt \tag{since \Cref{eqn:concentration_general}} \\
    &\geq \estmyrev{\price^\dagger} - \veps\opt \tag{optimality of $\price^\alg$}\\
    &\geq (\myrev{\price^\dagger} - \veps\opt) - \veps\opt \tag{since \Cref{eqn:concentration_general}} \\
    &= \myrev{\price^\dagger} - 2\veps\opt \\
    &\geq (1-\veps)\opt - 2\veps\opt \\
    &= (1-3\veps)\opt.
\end{align*}
Finally, the total query complexity is the product of the grid size and the samples per grid point:
\begin{align*}
    \text{Total Queries} = |S| \cdot N \leq \left(\frac{4\log H}{\veps}\right) \cdot \left(\frac{C H}{\veps^2} \log\frac{H}{\veps\delta}\right) = \otil\left(\frac{H}{\veps^3}\right).
\end{align*}
This completes the proof.
\end{proof}

\subsection{Proof of \Cref{thm:general_lower_bound}}\label{app:general_lower_bound}
\begin{proof}
    With a slight abuse of notation, denote $\opt_F \triangleq \max_p \myrev[F]{p}$. Given $H\geq 20$ and $\veps\in(0,1/10)$, we construct the environment instances as follows. First, we define a set of support points $S = \{p_1, p_2, \dots, p_K\}$ in the interval $[H/2, H]$. Let $\Delta \triangleq \frac{H\veps}{2}$. We define:
    $$ p_k = \frac{H}{2} + (k-1)\Delta, \quad k=1, \dots, K, $$
    where $K = \frac{1}{\veps} + 1$ (assuming $\frac{1}{\veps}$ is an integer without loss of generality). Let $F_0$ be a distribution supported on $\{1\} \cup S$. We set the probability mass at $1$ to be $\prob[F_0]{v=1} = 1 - \frac{10}{H}$.
    For the points in $S$, we assign probabilities such that the revenue is constantly $5$. More specifically, for $k < K$, we let the mass $w_k^{(0)}$ at point $p_k$ be
    $$ w_k^{(0)} = \frac{5}{p_k} - \frac{5}{p_{k+1}} = \frac{5(p_{k+1} - p_k)}{p_k p_{k+1}} = \frac{5\Delta}{p_k p_{k+1}} = \frac{10\veps}{H(1+k\veps)(1+(k-1)\veps)}, $$
    and $w_K^{(0)}=\frac{5}{H}$. Direct calculation shows that under value distribution $F_0$, $\myrev[F_0]{p}=5$ for all $p\in S$ and $\sum_{i=1}^K w_i^{(0)} = \frac{5}{p_1} = \frac{10}{H}$.

    To construct the alternative environments, we define a distribution $F_k$ for each $k \in [K-1]$ by shifting all probability mass from $p_k$ to $p_{k+1}$. Specifically, $F_k$ has probability masses $\{w^{(k)}_j\}_{j\in [K]}$ defined as:
    \begin{align*}
        w^{(k)}_j = \begin{cases}
            w_j^{(0)} & \text{if } j\neq k \text{ and } j\neq k+1, \\
            0 & \text{if } j=k, \\
            w_{k}^{(0)}+w_{k+1}^{(0)} & \text{if } j=k+1.
        \end{cases}
    \end{align*}

    We first analyze the optimal revenue for the base and perturbed environments to establish the approximation gap.
    \begin{itemize}[leftmargin=*]
        \item {Base Distribution ($F_0$):} By construction, for any $k \in \{1, \dots, K\}$, the revenue is $\myrev[F_0]{p_k} = p_k \cdot \prob[F_0]{v \ge p_k} = 5$. Thus, the optimal revenue is $\opt_{F_0} = 5$.
        
        \item {Perturbed Distribution ($F_k$, $k\in[K-1]$):} Consider the perturbed distribution $F_k$ where mass $w_k^{(0)}$ is shifted from $p_k$ to $p_{k+1}$. The demand at price $p_{k+1}$ becomes:
        \begin{align*}
            \prob[F_k]{v \ge p_{k+1}} &= \sum_{j=k+1}^M w^{(k)}_j = (w_k^{(0)} + w_{k+1}^{(0)}) + \sum_{j=k+2}^M w_j^{(0)} \\
            &= w_k^{(0)} + \left( w_{k+1}^{(0)} + \sum_{j=k+2}^M w_j^{(0)} \right) = \prob[F_0]{v\geq p_k}.
        \end{align*}
        Therefore, the revenue at $p_{k+1}$ under $F_k$ is:
        \begin{align*}
            \myrev[F_k]{p_{k+1}} = p_{k+1} \cdot \frac{5}{p_k} = \frac{5(p_k + \Delta)}{p_k} = 5 \left( 1 + \frac{\Delta}{p_k} \right).
        \end{align*}
        Substituting $\Delta = \frac{H\veps}{2}$ and using the fact that $\frac{H}{2}\leq p_k < H$:
        \begin{align*}
            \myrev[F_k]{p_{k+1}} &= 5 \left( 1 + \frac{H\veps/2}{p_k} \right) \in \left[5\left(1+\frac{\veps}{2}\right), 5\left(1+\veps\right)\right].
        \end{align*}
    \end{itemize}

    We now demonstrate that achieving a $(1-\veps/4)$-approximation is equivalent to identifying the true distribution index $k$. First, observe that for the perturbed distribution $F_k$, the optimal revenue is $\opt_{F_k} \ge 5(1+\veps/2)$. Consequently, the set of valid approximate prices $\mathcal{P}_k$ for $F_k$ contains only prices yielding revenue strictly greater than 5. Specifically, any $p \in \mathcal{P}_k$ must satisfy:
\begin{align*}
    \myrev[F_k]{p} &\ge (1-\veps/4) \opt_{F_k} \ge 5\left(1+\frac{\veps}{2}\right)\left(1-\frac{\veps}{4}\right) = 5\left(1 + \frac{\veps}{4} - \frac{\veps^2}{8}\right) > 5.
\end{align*}
The last inequality holds because $\veps/4 > \veps^2/8$ for $\veps < 1/10$.

In contrast, consider any other distribution $F_j$ where $j \neq 0, j\neq k$. The perturbation in $F_j$ is disjoint from the interval $(p_k, p_{k+1}]$ where prices in $\mathcal{P}_k$ must be located to capture the mass shift. Thus, for any price $p \in \mathcal{P}_k$, the revenue under $F_j$ is merely the base revenue: $\myrev[F_j]{p} = 5$. Since the approximation target for $F_j$ is strictly greater than $5$ (as $\opt_{F_j} > 5$), the price $p$ is a valid approximate solution for $F_k$ but an invalid one for $F_j$. 

This proves that the solution sets $\{\mathcal{P}_k\}$ are pairwise disjoint. Consequently, any algorithm $\mathcal{A}$ that outputs a valid approximate price $p$ with probability $2/3$ implicitly identifies the true index $k$ (since $p$ can belong to only one set $\mathcal{P}_k$). To satisfy this guarantee against an adversary capable of selecting any $k \in \{1, \dots, K-1\}$, $\mathcal{A}$ must solve the $K$-ary hypothesis testing problem.

To distinguish $F_k$ from $F_0$ (and thus from other $F_j$), the algorithm must query a price $p$ in the informative interval $(p_k, p_{k+1}]$. Under $F_0$, the purchase probability is $\prob[F_0]{v \ge p} = \frac{5}{p_{k+1}}$. Under $F_k$, the mass $w_k^{(0)}$ is shifted to $p_{k+1}$, so the purchase probability becomes $\prob[F_k]{v \ge p} = \frac{5}{p_{k+1}} + w_k^{(0)} = \frac{5}{p_k}$.

Let $q^{(0)}_i \triangleq \frac{5}{p_i}$ denote the quantile at $p_i$ under value distribution $F_0$. The binary purchase outcome follows $P_0 = \text{Bern}(q^{(0)}_{k+1})$ under $F_0$ and $P_k = \text{Bern}(q^{(0)}_k)$ under $F_k$. We bound the KL divergence using the Chi-squared upper bound:
\begin{align*}
    D_{\text{KL}}(P_k \| P_0) \le \frac{(q^{(0)}_k - q^{(0)}_{k+1})^2}{q^{(0)}_{k+1}(1-q^{(0)}_{k+1})} = \frac{(w_k^{(0)})^2}{q^{(0)}_{k+1}(1-q^{(0)}_{k+1})}.
\end{align*}
Using the bounds derived in the construction ($w_k^{(0)} \le \frac{10\veps}{H}$ and $q^{(0)}_{k+1} \in [\frac{5}{H}, \frac{10}{H}]$), and noting that $H \ge 20 \implies 1-q^{(0)}_{k+1} \ge 1-\frac{10}{H} \ge 0.5$:
\begin{align*}
    D_{\text{KL}}(P_k \| P_0) \le \frac{(10\veps/H)^2}{(5/H)(0.5)} = \frac{100\veps^2/H^2}{2.5/H} = \frac{40\veps^2}{H}.
\end{align*}
This problem is equivalent to identifying a single biased interval among $K-1$ possibilities, where queries in one interval provide zero information about others. By standard information-theoretic lower bounds, to identify the correct $k$ with probability at least $2/3$, the total number of samples must exceed a constant fraction of the sum of the inverse KL divergences required for each pairwise test. Thus, the total sample complexity is lower bounded by:
\begin{align*}
    N \ge \sum_{k=1}^{K-1} \Omega\left(\frac{1}{D_{\text{KL}}(P_k \| P_0)}\right) \ge \sum_{k=1}^{1/\veps} \Omega\left(\frac{H}{40\veps^2}\right) = \Omega\left(\frac{H}{\veps^3}\right).
\end{align*}
\end{proof}